\documentclass[a4paper,twocolumn,11pt,accepted=2020-11-05]{quantumarticle}
\pdfoutput=1
\usepackage[utf8]{inputenc}
\usepackage[english]{babel}
\usepackage[T1]{fontenc}
\usepackage{amsmath}
\usepackage{hyperref}

\usepackage{tikz}
\usepackage{lipsum}

\usepackage[numbers]{natbib} 
\bibliographystyle{unsrturl}
\usepackage{epsfig,amssymb,amsthm,mathrsfs,dsfont,ctable,url,commath}
\usepackage{color} \usepackage{bm} \usepackage{braket}
\usepackage{units}

\usepackage{amscd}
\usepackage{commath,mathtools,thmtools}
\usepackage{tensor}
%
%

\usepackage[matrix,frame,arrow]{xy}
\usepackage{amsmath}
\newcommand{\qw}[1][-1]{\ar @{-} [0,#1]}



\newcommand{\gate}[1]{*{\xy *+<.6em>{#1};p\save+LU;+RU **\dir{-}\restore\save+RU;+RD **\dir{-}\restore\save+RD;+LD **\dir{-}\restore\POS+LD;+LU **\dir{-}\endxy} \qw}




\newcommand{\measureD}[1]{*{\xy*+=+<.5em>{\vphantom{\rule{0em}{.1em}#1}}*\cir{r_l};p\save*!R{#1} \restore\save+UC;+UC-<.5em,0em>*!R{\hphantom{#1}}+L **\dir{-} \restore\save+DC;+DC-<.5em,0em>*!R{\hphantom{#1}}+L **\dir{-} \restore\POS+UC-<.5em,0em>*!R{\hphantom{#1}}+L;+DC-<.5em,0em>*!R{\hphantom{#1}}+L **\dir{-} \endxy} \qw}

\newcommand{\multimeasureD}[2]{*+<1em,.9em>{\hphantom{#2}}\save[0,0].[#1,0];p\save !C *{#2},p+LU+<0em,0em>;+RU+<-.8em,0em> **\dir{-}\restore\save +LD;+LU **\dir{-}\restore\save +LD;+RD-<.8em,0em> **\dir{-} \restore\save +RD+<0em,.8em>;+RU-<0em,.8em> **\dir{-} \restore \POS !UR*!UR{\cir<.9em>{r_d}};!DR*!DR{\cir<.9em>{d_l}}\restore \qw}








\newcommand{\multigate}[2]{*+<1em,.9em>{\hphantom{#2}} \qw \POS[0,0].[#1,0];p !C *{#2},p \save+LU;+RU **\dir{-}\restore\save+RU;+RD **\dir{-}\restore\save+RD;+LD **\dir{-}\restore\save+LD;+LU **\dir{-}\restore}

\newcommand{\ghost}[1]{*+<1em,.9em>{\hphantom{#1}} \qw}
\newcommand{\Qcircuit}[1][0em]{\xymatrix @*=<#1>}


\newcommand{\pureghost}[1]{*+<1em,.9em>{\hphantom{#1}}}
\newcommand{\multiprepareC}[2]{*+<1em,.9em>{\hphantom{#2}}\save[0,0].[#1,0];p\save !C
  *{#2},p+RU+<0em,0em>;+LU+<+.8em,0em> **\dir{-}\restore\save +RD;+RU **\dir{-}\restore\save
  +RD;+LD+<.8em,0em> **\dir{-} \restore\save +LD+<0em,.8em>;+LU-<0em,.8em> **\dir{-} \restore \POS
  !UL*!UL{\cir<.9em>{u_r}};!DL*!DL{\cir<.9em>{l_u}}\restore}
\newcommand{\prepareC}[1]{*{\xy*+=+<.5em>{\vphantom{#1\rule{0em}{.1em}}}*\cir{l^r};p\save*!L{#1} \restore\save+UC;+UC+<.5em,0em>*!L{\hphantom{#1}}+R **\dir{-} \restore\save+DC;+DC+<.5em,0em>*!L{\hphantom{#1}}+R **\dir{-} \restore\POS+UC+<.5em,0em>*!L{\hphantom{#1}}+R;+DC+<.5em,0em>*!L{\hphantom{#1}}+R **\dir{-} \endxy}}
\newcommand{\poloFantasmaCn}[1]{{{}^{#1}_{\phantom{#1}}}}

\vfuzz2pt 
\newtheorem{lemma}{Lemma} \newtheorem{proposition}{Proposition}

\newtheorem{corollary}{Corollary} \newtheorem{theorem}{Theorem}

 \newtheorem{definition}{Definition}
 \newtheorem*{definition*}{Definition}
\theoremstyle{remark}
 \newtheorem{remark}{Remark}

\def\>{\rangle}
\def\<{\langle} 
\def\trnsfrm#1{\mathcal #1}\newcommand\test[1]{\bm{#1}}
   \def\rA{{\rm A}}\def\rB{{\rm
    B}}\def\rC{{\rm C}}\def\rD{{\rm D}} \def\rE{{\rm E}} \def\rF{{\rm
    F}} \def\rG{{\rm G}} \def\rI{{\rm I}}   \def\rX{{\rm X}} \def\rY{{\rm Y}}
\def\rZ{{\rm Z}}\def\rU{{\rm U}}\def\rV{{\rm V}}\def\rW{{\rm W}}
\def\rN{{\rm N}}\def\rM{{\rm M}}
\def\tA{\trnsfrm A}\def\tB{\trnsfrm B}\def\tC{\trnsfrm
  C}\def\tD{\trnsfrm D}  \def\tI{\trnsfrm
  I}\def\tT{\trnsfrm T}\def\tU{\trnsfrm U} \def\tV{\trnsfrm V}
\def\tW{\trnsfrm W}
\def\tF{\trnsfrm
  F}\def\tE{\trnsfrm E} \def\tR{\trnsfrm R}
\def\tT{\trnsfrm T}
\def\Ref{{\mathsf{Ref}}} 
\def\Cntset{{\mathsf{Eff}}} 
\def\Stset{{\mathsf{St}}}

\def\Trnset{{\mathsf{Transf}}}
 
\def\K#1{\left|#1\right)}  \def\B#1{\left(#1\right|}
\def\SC#1#2{(#1| #2)}  

\def\Reals{{\mathbb R}}

\usepackage{enumitem}

\usepackage{array}
\newcolumntype{?}{!{\vrule width 1.2pt}}

\usepackage{soul}

\definecolor{blue1}{rgb}{0.03, 0.27, 0.49}

\usepackage{pifont}
\usepackage{xcolor}

\begin{document}

\title{Information and disturbance in operational probabilistic theories}

\author{Giacomo Mauro D'Ariano}
\email{dariano@unipv.it}
\affiliation{QUIT group,   Physics    Dept.,   Pavia
  University, and  INFN Sezione di Pavia,  via Bassi 6,
  27100 Pavia, Italy}
\orcid{0000-0003-0602-5519}

\author{Paolo Perinotti}
\email{paolo.perinotti@unipv.it}
\affiliation{QUIT    group,   Physics    Dept.,   Pavia
  University, and  INFN Sezione di Pavia,  via Bassi 6,
  27100 Pavia, Italy}
\orcid{0000-0003-4825-4264}

\author{Alessandro Tosini}
\email{alessandro.tosini@unipv.it}
\affiliation{QUIT group, Physics Dept., Pavia University, and INFN
  Sezione di Pavia, via Bassi 6, 27100 Pavia, Italy}
\orcid{0000-0001-8599-4427}

\maketitle

\begin{abstract}
  Any measurement is intended to provide {\em information} on a
  system, namely knowledge about its state. However, we learn from
  quantum theory that it is generally impossible to extract
  information without disturbing the state of the system or its
  correlations with other systems.  In this paper we address the issue
  of the interplay between information and disturbance for a general
  operational probabilistic theory.  The traditional notion of
  disturbance considers the fate of the system state after the
  measurement. However, the fact that the system state is left
  untouched ensures that also correlations are preserved only in the
  presence of local discriminability.  Here we provide the definition
  of disturbance that is appropriate for a general theory.  Moreover,
  since in a theory without causality information can be gathered also
  on the effect, we generalise the notion of no-information test. We
  then prove an equivalent condition for no-information without
  disturbance---{\em atomicity of the identity}---namely the
  impossibility of achieving the trivial evolution---the {\em
    identity}---as the coarse-graining of a set of non trivial ones.
  We prove a general theorem showing that information that can be
  retrieved without disturbance corresponds to perfectly repeatable
  and discriminating tests. Based on this, we prove a structure
  theorem for operational probabilistic theories, showing that the set
  of states of any system decomposes as a direct sum of perfectly
  discriminable sets, and such decomposition is preserved under system
  composition.  As a consequence, a theory is such that any
  information can be extracted without disturbance only if all its
  systems are classical. Finally, we show via concrete examples that
  no-information without disturbance is independent of both local
  discriminability and purification.
\end{abstract}

\section{introduction}

The possibility that gathering information on a physical system may
affect the state of the system itself was introduced by Heisenberg in
his famous gedanken experiment~\cite{Heisenberg1927}, which became the
first paradigm of quantum mechanics. The issue raised by Heisenberg
spawned a vaste literature up to present days
(see~\cite{BUSCH2007155,RevModPhys.86.1261} as recent reviews), with a
variety of quantifications of ``information'' and ``disturbance'' and
corresponding tradeoff
relations~\cite{PhysRevA.53.2038,d2003heisenberg,OZAWA2004350,PhysRevA.73.042307}. All
these results are quantitative accounts of a core issue in quantum
theory, the {\em no-information without disturbance}
theorem~\cite{Busch2009,dariano_chiribella_perinotti_2017}.  The
proofs of the theorem rely on the mathematical structure of quantum
theory, and thus do not emphasise the logical relation between
no-information without disturbance and other quantum features, such as
\emph{local discriminability} (the possibility of discriminating
multipartite states via only local measurements) or
\emph{purification} (every mixed state can be obtained as the marginal
state of a pure state).

The framework here used for exploring the relation between information
and disturbance is that of \emph{operational probabilistic theories}
(OPTs)~\cite{PhysRevA.81.062348,Chiribella2016,dariano_chiribella_perinotti_2017}. In
this setting a rigorous formulation of the notions of system, process,
and their compositions is given, which constitutes the grammar for the
probabilistic description of an experiment. Quantum theory and
classical theory are two instances of OPTs.

For some probabilistic theories which can be reframed as OPTs, the
definitions of information and disturbance have been investigated in
the presence of local discriminability, purification, and
causality~\cite{barrett2007information,KIMURA2010175,BARNUM20113,Heinosaari2019nofreeinformation}.
For OPTs satisfying those three axioms the no-information without
disturbance theorem has been proved in
Refs.~\cite{PhysRevA.81.062348,dariano_chiribella_perinotti_2017}.  In
the present paper we point out a weakness in the existing notion of
disturbance, which is ubiquitous in all past approaches. Indeed, the
conventional definition of disturbance asserts that an experiment does
not disturb the system if and only if its overall effect is to leave
unchanged the states of the system, disregarding the effects of the
experiment on the environment. Whilst this captures the meaning of
disturbance within quantum theory, we cannot consistently apply the
same notion in theories that violate local discriminability. A
significative case is that of the Fermionic
theory~\cite{Bravyi2002210,d2014feynman,D_Ariano_2014} where, due to
the parity superselection rule, an operation that does not disturb a
bunch of Fermionic systems still could affect their correlations with
other systems. This issue can be cured asking a non-disturbing
experiment to preserve not only the system state, but also its
purifications~\cite{PhysRevA.81.062348,dariano_chiribella_perinotti_2017}. This
extension of the notion of disturbance is general enough to capture
the operational meaning of disturbance for Fermionic systems, however,
it is still unsatisfactory, since it cannot be used to describe
disturbance in models that do not enjoy purification, e.~g.~ classical
information theory.

Here we will define non-disturbing operations only by referring to the
OPT framework, thus providing a notion that holds also for theories
that do not satisfy local discriminability, purification, or
causality, and even for theories whose sets of states are not convex.
Given a system, and an operation on it, the fate of any possible
dilation of the states of the system after the operation is taken into
account, where by {\em dilation} we mean any state of a larger system
whose marginal is the dilated state~\footnote{We remind that for
  non-causal theories the marginal is not unique, hence more
  generally, we require that {\em one of the marginals} is the given
  state. }. Moreover, due to the lack of causality, effects and states
must be treated on the same footing, and we extend the notion of
information also encompassing the information about the output. We
prove then a necessary and sufficient condition for a theory to
satisfy no-information without disturbance. The condition is the
impossibility of realizing the identity transformation as a nontrivial
coarse-graining of a set of operations. Technically speaking the above
condition amounts to {\em atomicity of the identity}. Finally, since a
theory might satisfy no-information without disturbance only when
restricted to some collections of preparations and measurements, we
will provide a weaker necessary and sufficient condition for this
case.

Similarly to the Heisenberg uncertainty relations, no-information
without disturbance has been considered as a characteristic quantum
trait. Instead, as we will see here, this feature can be exhibited in
the absence of most of the principles of quantum
theory~\cite{dariano_chiribella_perinotti_2017}, and it is ubiquitous
among OPTs. Moreover, the most general case is that of an OPT where
{\em some} information can be extracted without disturbance, in which
case this information has all the features of a classical one.  On the
other hand, the only kind of systems that allow for extracting any
information without disturbance are classical systems.  This
observation provides an alternative way of characterising classical
systems with respect to Ref.~\cite{Pfister:2013aa}.

In Section~\ref{sec:II} we review the framework of operational
probabilistic theories and some relevant features that characterize
quantum theory within this scenario. In Section~\ref{sec:III}, after
introducing the definition of information and disturbance, we present
the main results of this paper: i) the atomicity of the identity
evolution as a necessary and sufficient condition for no-information
without disturbance; ii) other equivalent necessary and sufficient
conditions in terms of properties of reversible evolutions of the
theory; iii) we prove a structure theorem for theories where some
information can be extracted without disturbance; iv) we prove that
the information that can be extracted without disturbance is
``classical'', in the sense that its measurement is a repeatable
reading of shareable information; v) finally we prove that a theory in
which any information can be extracted without disturbance is a theory
where all systems are classical.  In Section~\ref{sec:IV} we
generalize the notion of {\em equality upon input} to general OPTs,
including the cases in which local discriminability does not
hold. Moreover, dealing also with non-causal theories, where states
and effects must be considered on the same footing, we introduce the
notion of {\em equality upon input and upon output}. This notion can
be used when only a subset of the preparations and of the measurements
are accessible, e.g.~in resource
theories~\cite{PhysRevX.9.031053,COECKE201659}. As a first application
we generalize the notion of information and disturbance to the
upon input and upon output scenario, providing a characterization of
the no-information without disturbance also in this case. In
Section~\ref{sec:VI} we deepen the relation between no-information
without disturbance and other characteristic properties of quantum
theory. We show that no-information without disturbance can be
satisfied independently of purification and local discriminability,
providing counterexamples based on some of the conditions mentioned
above and other conditions proved in this section.  We end with the
conclusions in Section~\ref{sec:VII}.

\section{The Framework}\label{sec:II}

In this section we review the framework of {\em operational
  probabilistic theories} (OPT) (we refer
to~\cite{PhysRevA.81.062348,dariano_chiribella_perinotti_2017,Chiribella2016}
for further details).

The primitives of an operational theory are the notions of
\emph{test}, \emph{event}, and \emph{system}. A test
$\{\tA_i\}_{i\in\rX}$ is the collection of events $\tA_i$, where $i$
labels the element of the outcome space $\rX$.  In the quantum case
$\tA_i$ is the $i$th quantum operation of the quantum instrument
$\{\tA_i\}_{i\in\rX}$. The notion of test bridges the experiment with
the theory, with $i\in\rX$ denoting the objective outcome, and $\tA_i$
the mathematical description of the corresponding event. The notion of
\emph{system}, here denoted by capital Roman letters
$\rA,\rB,\ldots$, rules connections of tests. 
An input and an output label are associated to any test (event). We represent a test
$\test{\tA}_\rX:=\{\tA_i\}_{i\in\rX}$ and its building events $\tA_i$ by the diagrams
\begin{equation*}
\Qcircuit @C=1em @R=.7em @! R { & \poloFantasmaCn \rA \qw &
  \gate{\test{\tA}_{\rX}} & \qw \poloFantasmaCn \rB &\qw}\;\ ,\;\ 
  \Qcircuit @C=1em @R=.7em @! R { & \poloFantasmaCn \rA \qw &
  \gate{\tA_i} & \qw \poloFantasmaCn \rB &\qw}\,,
\end{equation*}
respectively, with the rule that an output wire can be connected only
to an input wire with the same label. Thus, given two tests
$\test{\tA}_{\rX }$ and $\test{\tB}_{\rY }$ we can define their
\emph{sequential composition} $(\test{\tB\tA})_{\rX\times\rY }$ as the collection of events
\begin{equation*}
  \Qcircuit @C=1em @R=.7em @! R { & \poloFantasmaCn \rA \qw &
    \gate{\tB_j\tA_i} & \qw \poloFantasmaCn \rC &\qw}\;=\; 
  \Qcircuit @C=1em @R=.7em @! R {& \poloFantasmaCn{\rA}\qw & \gate{
      \tA_{i}} & \poloFantasmaCn{\rB}\qw &
    \gate{\tB_{j}}&\poloFantasmaCn {\rC}\qw&\qw}\,,
\end{equation*}
for $i\in\rX$ and $j\in\rY$.  A {\em singleton test} is a test
containing a single event. We call such an event {\em deterministic}.
For every system $\rA$ there exists a unique singleton test
$\{\tI_{\rA} \}$ such that $\tI_{\rB} \tA=\tA\tI_{\rA}=\tA$ for every
event $\tA$ with input $\rA$ and output $\rB$, and we call $\tI_\rA$
{\em identity} of system $\rA$.  Besides sequential compositions of
tests and events, a theory is specified by the rule for composing them
in parallel.  For every couple of systems $(\rA,\rB)$ we can form the
composite system $\rC:=\rA\rB$, on which we can perform tests
$(\test{\tC}\otimes \test{\tD})_{\rX\times\rY}$ with events
$\tC_i\otimes\tD_j$ in {\em parallel composition} represented as
follows
\begin{equation*}
  \begin{aligned}
    \Qcircuit @C=1em @R=.7em @! R {& \qw \poloFantasmaCn \rA &
      \multigate{1} {\tC_i\otimes \tD_j} & \qw
      \poloFantasmaCn \rB &\qw\\
      & \qw \poloFantasmaCn \rC & \ghost {\tC_i\otimes \tD_j} & \qw
      \poloFantasmaCn \rD &\qw}
  \end{aligned}\ =\ 
  \begin{aligned}
    \Qcircuit @C=1em @R=.7em @! R {& \qw \poloFantasmaCn \rA & \gate
      {\tC_i} & \qw \poloFantasmaCn \rB &\qw\\ & \qw \poloFantasmaCn
      \rC & \gate {\tD_j} & \qw \poloFantasmaCn \rD &\qw}
  \end{aligned}\,,
\end{equation*}
and satisfying the condition
$ ( \tE_{h} \otimes \tF_{k} ) ( \tC_{i} \otimes \tD_{j} ) = ( \tE_{h}
\tC_{i} ) \otimes ( \tF_{k}\tD_{j} ).  $
Notice that we use the tensor product symbol $\otimes$ for the
parallel composition rule.  Actually, for the quantum and the
classical OPT the parallel composition is the usual tensor product of
linear maps. However, for a general OPT, the parallel composition may
not coincide with a tensor product.

There exists a special system type $\rI$, the {\em trivial system},
such that $\rA\rI=\rI\rA=\rA$ for every system $\rA$. The tests with
input system $\rI$ and output $\rA$ are called {\em preparation tests}
of $\rA$, while the tests with input system $\rA$ and output $\rI$ are
called {\em observation tests} of $\rA$.  Preparation events of $\rA$
are graphically denoted as boxes without the input wire
$\Qcircuit @C=.5em @R=.5em { \prepareC{\rho} & \poloFantasmaCn{\rA}
  \qw & \qw }$
(or in formula as round kets $\K{\rho}_\rA$), and the
observation events by boxes with no output wire
\( \Qcircuit @C=.5em @R=.5em { & \poloFantasmaCn{\rA} \qw & \measureD{
    c} } \)
(in formula round bras $\B{ c}_\rA$).  For example, one can have
events of the following kind
\begin{equation*}
  \begin{aligned}
    \Qcircuit @C=1em @R=.7em @! R {&& \prepareC
      {\rho_i} & \qw \poloFantasmaCn \rB &\qw\\ & \qw \poloFantasmaCn
      \rC & \gate {\tD_j} & \qw \poloFantasmaCn \rD &\qw}
  \end{aligned}
\ =\ 
  \begin{aligned}
    \Qcircuit @C=1em @R=.7em @! R {    
      & \qw \poloFantasmaCn \rC & \gate{\rho_i\otimes \tD_j}& \qw
      \poloFantasmaCn {\rB\rD} &\qw}
  \end{aligned}\,.
\end{equation*}
We will always use the Greek letters to denote preparation tests
$\{\rho_i\}_{i\in\rX }$ and Latin letters to denote observation tests
$\{c_j\}_{j\in\rX }$ (we will not specify the system when it is clear
from the context).

  An arbitrary test obtained by parallel and sequential composition of
  box diagrams is called \emph{circuit}. A circuit is \emph{closed} if
  its overall input and output systems are trivial: it starts with a
  preparation test and ends with an observation test. An
  \emph{operational probabilistic theory} (OPT) is an operational
  theory where any closed circuit of tests corresponds to a
  probability distribution for the joint test. Compound tests from the
  trivial system to itself are independent, both for sequential and
  parallel composition, namely their joint probability distribution is
  given by the product of the respective joint probability
  distributions. For example the application of an observation event
  $c_i$ after the preparation event $\rho_j$ corresponds to the closed
  circuit $ \SC{ c_i }{ \rho_j}_\rA $ and denotes the probability of
  the outcome $(i,j)$ of the observation test $\test{c}_\rX$ after the
  preparation test $\test{\rho}_\rY$ of system $\rA$, i.e.
  \begin{equation*}
  \begin{aligned}
    \Qcircuit @C=.5em @R=.5em { \prepareC{\rho_j} &
      \poloFantasmaCn{\rA} \qw & \measureD{ c_i }
    }
  \end{aligned}:=
\Pr\Big[i,j\,\Big|\,
  \begin{aligned}
  \Qcircuit @C=.5em @R=.5em { \prepareC{\test{\rho}_\rY} & \poloFantasmaCn{\rA}
    \qw & \measureD{ \test{c }_\rX }} 
  \end{aligned}\,\Big].
  \end{equation*}
For a more complex example, consider the test
\begin{equation*}
\test{\tT}_\rU:=
    \begin{aligned}
\Qcircuit @C=1em @R=.7em @! R {
        \multiprepareC{2}{\test{\Psi}_{\rV}}&
\qw\poloFantasmaCn{\rA}&
\multigate{1}{\test{\tA}_\rW}&
\qw\poloFantasmaCn{\rB}&
\gate{\test{\tB}_\rX}&
\qw\poloFantasmaCn{\rC}&
\multimeasureD{1}{\test{E}_\rY}
\\
\pureghost{\test{\Psi}_\rV}&
\qw\poloFantasmaCn{\rD}&
\ghost{\test{\tA}_\rW}&\qw\poloFantasmaCn{\rE}&
\multigate{1}{\test{\tC}_\rZ}&
\qw\poloFantasmaCn{\rF}&
\pureghost{\test{E}_\rY}\qw
\\
\pureghost{\test{\Psi}_\rV}&
\qw&
\qw\poloFantasmaCn{\rG}&
\qw&\ghost{\test{\tC}_\rZ}
\\
      }
    \end{aligned}\,,
  \end{equation*}
with $\rU=\rV\times\rW\times\rX\times\rY\times\rZ$. Then we define
\begin{align*}
&  \begin{aligned}
    \Qcircuit @C=1em @R=.7em @! R { \multiprepareC{2}{\Psi_{i}}&
      \qw\poloFantasmaCn{\rA}& \multigate{1}{\tA_{j}}&
      \qw\poloFantasmaCn{\rB}& \gate{\tB_{k}}&
      \qw\poloFantasmaCn{\rC}& \multimeasureD{1}{E_{m}}
      \\
      \pureghost{\Psi_{i}}& \qw\poloFantasmaCn{\rD}&
      \ghost{\tA_{j}}&\qw\poloFantasmaCn{\rE}& \multigate{1}{\tC_{l}}&
      \qw\poloFantasmaCn{\rF}& \pureghost{E_{m}}\qw
      \\
      \pureghost{\Psi_{i}}& \qw& \qw\poloFantasmaCn{\rG}&
      \qw&\ghost{\tC_{l}}
      \\
    }
  \end{aligned}\\
&\qquad \qquad \qquad \qquad \qquad  :=\Pr[i,j,k,l,m|\test{\tT}_\rU].
\end{align*}

In the following, we will omit the parametric dependence on the
circuit if the latter is clear from the context.

Summarising: {\em by a closed circuit made of events we denote their
  joint probability upon the connection specified by the circuit
  graph, with nodes being the test boxes, and links being the system
  wires.}

Given a system $\rA$ of a probabilistic theory we can quotient the set
of preparation events of $\rA$ by the equivalence relation
$ \K{\rho}_\rA\sim\K{\sigma}_\rA \Leftrightarrow \SC{ c }{ \rho
}_\rA=\SC{ c }{ \sigma }_\rA$
for every observation event $c$. Similarly we can quotient
observation events.  The equivalence classes of preparation events and
observation events of $\rA$ will be denoted by the same symbols as
their elements $\K{\rho}_\rA$ and $\B{c}_\rA$, respectively, and will
be called \emph{state} and \emph{effect} for system $\rA$.
  
  For  every system $\rA$, we will denote by $\Stset(\rA)$, $\Cntset(\rA)$
  the sets of states and effects, respectively.  States and effects
  are real-valued functionals on each other, and can be
  naturally embedded in reciprocally dual real vector spaces,
  $\Stset_{\mathbb{R}}(\rA)$ and $\Cntset_{\mathbb{R}}(\rA)$, whose
  dimension $\dim(\rA)$ is assumed to be finite.

  In Appendix~\ref{sec:transformations} it is proved that an event
  $\tA$ with input system $\rA$ and output system $\rB$ induces a
  linear map from $\Stset_{\mathbb R}(\rA\rC)$ to
  $\Stset_{\mathbb R}(\rB\rC)$ for each ancillary system $\rC$. The
  collection of all these maps is called {\em transformation} from
  $\rA$ to $\rB$. More explicitly, 
  given two transformations $\tA,\tA^\prime\in\Trnset{(\rA,\rB)}$, one has
$\tA=\tA^\prime$, if and only if
\begin{equation*}
\begin{aligned}
  \Qcircuit @C=1em @R=.7em @! R {\multiprepareC{1}{\Psi}& \qw
    \poloFantasmaCn \rA &  \gate{\tA} & \qw \poloFantasmaCn \rB &\multimeasureD{1}{a} \\
    \pureghost\Psi &\qw& \qw \poloFantasmaCn \rC & \qw&\ghost{a}}
\end{aligned}
~=~
\begin{aligned}
\Qcircuit @C=1em @R=.7em @! R {\multiprepareC{1}{\Psi}& \qw
    \poloFantasmaCn \rA &  \gate{\tA^\prime} & \qw \poloFantasmaCn \rB &\multimeasureD{1}{a} \\
    \pureghost\Psi &\qw& \qw \poloFantasmaCn \rC & \qw&\ghost{a}}
\end{aligned}\ ,
\end{equation*}
for every $\rC$, every $\Psi\in\Stset{(\rA\rC)}$, and every
$a\in\Cntset(\rB\rC)$, namely they give the same probabilities within
every possible closed circuit. Notice that, using the fact that two
states (effects) are equal if and only if they give the same
probability when paired to every effect (state), the above
condition amounts to state that $\tA=\tA'$ if and only if
\begin{equation}
\label{eq:equaltransf}
\begin{aligned}
  \Qcircuit @C=1em @R=.7em @! R {
\multiprepareC{1}{\Psi}& \qw
    \poloFantasmaCn \rA &  \gate{\tA} & \qw \poloFantasmaCn \rB &\qw\\
    \pureghost\Psi &\qw& \qw \poloFantasmaCn \rC & \qw&\qw}
\end{aligned}
~=~
\begin{aligned}
\Qcircuit @C=1em @R=.7em @! R {
\multiprepareC{1}{\Psi}& \qw
    \poloFantasmaCn \rA &  \gate{\tA^\prime} & \qw \poloFantasmaCn \rB &\qw \\
    \pureghost\Psi &\qw& \qw \poloFantasmaCn \rC & \qw&\qw}
\end{aligned}\ ,
\end{equation}  
for every $\rC$, and every $\Psi\in\Stset{(\rA\rC)}$, or
\begin{equation}
\begin{aligned}
  \Qcircuit @C=1em @R=.7em @! R {&
    \qw\poloFantasmaCn \rA &  \gate{\tA} & \qw \poloFantasmaCn \rB  &\multimeasureD{1}{a}\\
    & \qw&\qw\poloFantasmaCn\rC & \qw&\ghost{a}}
\end{aligned}
~=~
\begin{aligned}
  \Qcircuit @C=1em @R=.7em @! R {&
    \qw\poloFantasmaCn \rA &  \gate{\tA^\prime} & \qw \poloFantasmaCn \rB  &\multimeasureD{1}{a}\\
& \qw&\qw\poloFantasmaCn\rC & \qw&\ghost{a}}
\end{aligned}
\end{equation}
 for every $\rC$, and every $a\in\Cntset{(\rB\rC)}$.

In the following, the symbols $\tA$ and
  $ \Qcircuit @C=.5em @R=.5em { & \poloFantasmaCn{\rA} \qw &
    \gate{\tA} & \poloFantasmaCn{\rB} \qw &\qw} $
  will be used to represent the transformation corresponding to
  the event $\tA$. The set of transformations from $\rA$ to $\rB$ will
  be denoted by $\Trnset(\rA,\rB)$, with linear span
  $\Trnset_\mathbb{R}(\rA,\rB)$. It is now obvious that a linear map
  $\tA\in\Trnset_{\mathbb R}(\rA,\rB)$ is {\em admissible} if it
  locally preserves the set of states $\Stset(\rA\rC)$, namely
  $\tA\otimes\tI_{\rC}(\Stset(\rA\rC)) \subseteq \Stset(\rB\rC)$, for
  every system $\rC$. In the following we will write $\tA\K{\Psi}_{\rA\rC}$
  instead of $\tA\otimes\tI_{\rC}\K{\Psi}_{\rA\rC}$, with
  $\Psi\in\Stset(\rA\rC)$ and $\tA\in\Trnset(\rA,\rB)$ when the
  domains are clear from the context.

  An operational probabilistic theory is now defined as a collection
  of systems and transformations with the above rules for parallel and
  sequential composition and with a probability associated to any
  closed circuit~\footnote{Notice that a more detailed account needs a
    category-theoretical definition of parallel and sequential
    composition of systems (see Ref.~\cite{Chiribella2016}).}.

We introduce now the notions of \emph{refinement} of an event and \emph{atomic} event.

\begin{definition}[Refinement of an event] A refinement of an event
  $\tC\in\Trnset(\rA,\rB)$ is given by a collection of events
  $\{\tD_i\}_{i\in\rX}$ from $\rA$ to $\rB$, such that there exists a
  test $\{\tD_i\}_{i\in\rY }$ with $X\subseteq Y$ and
  $\tC=\sum_{i\in\rX}\tD_i$.  We say that a refinement
  $\{\tD_i\}_{i\in\rX}$ of $\tC$ is trivial if $\tD_i=\lambda_i \tC$,
  $\lambda_i\in [0,1]$, for every $i\in\rX$. Conversely, $\tC$ is
  called the coarse-graining of the events $\{\tD_i\}_{i\in\rX}$,
  which we will also denote as $\tC=\tD_\rX$.
\end{definition}
In the following we will often refer to a
  refinement of $\tC$ simply as $\tC=\sum_{i\in\rX}\tD_i$, without
  specifying the test including the events $\tD_i$.

\begin{definition}[Refining event] Given two events
  $\tC,\tD\in\Trnset(\rA,\rB)$ we say that $\tD$ refines $\tC$, and
  write $\tD\prec\tC$, if there exist a refinement
  $\{\tD_i\}_{i\in\rX}$ of $\tC$ such that $\tD\in\{\tD_i\}_{i\in\rX}$.
\end{definition}

\begin{definition}[Non redundant test]
We call a test $\{\tA_i\}_{i\in\rX}$ non redundant when for every pair $i,j\in\rX$ one has 
$\tA_i\neq\lambda \tA_j$ for $\lambda>0$.
\end{definition}
Notice that a test that is redundant can be interpreted as a non
redundant test followed by a conditional coin tossing. As a
consequence a redundant test always gives some spurious information,
unrelated to the input state. From a redundant test one can achieve a
maximal non redundant one by taking the test made of coarse grainings
of all the sets of proportional elements.

\begin{definition}[Refinement set] Given an event
  $\tC\in\Trnset(\rA,\rB)$ we define its refinement set
  $\Ref_\tC$ the set  of all events that refine $\tC$.
\end{definition}

\begin{definition}[Atomic and refinable events]\label{def:atomic-events} An event $\tC$ is
  atomic if it admits only of trivial refinements, namely $\tD\prec\tC$
  implies $\tD=\lambda\tC$, $\lambda\in [0,1]$. An event is refinable
  if it is not atomic.
\end{definition}

In the special case of states, the word \emph{pure} is used as
synonym of atomic, with a pure state describing an event that provides
maximal knowledge about the system's preparation. This means that the
knowledge provided by a pure state cannot be further refined. As usual
a state that is not pure will be called \emph{mixed}. 

Another important relation between events is that of {\em coexistence}
and the consequent notion of {\em coexistent completion} for a set o
events.
\begin{definition}[Coexistent events and coexistent
    completion]\label{def:compatible-events} Two events
    $\tA,\tB\in\Trnset{(\rA,\rB)}$ are coexistent, and we write
    $\tA\wedge\tB$, if there exists a test
    $\{\tC_i\}_{i\in\rX}\subseteq\Trnset{(\rA,\rB)}$ such that
    $\tA=\tC_\rY$ and $\tB=\tC_\rZ$, where
    $\rY,\rZ\subseteq\rX$. Given an event $\tA$ we denote by
    $\widehat{\tA}$ the set of all events coexistent with $\tA$, and
    more generally, given a set of events $X$ its coexistent
    completion is defined as
\begin{equation}
  \widehat{X}=\{\tB;\tB\wedge\tA, \text{ for some } \tA\in X\}.
\end{equation}
\end{definition}

We observe that in the present general OPT framework features that seem
intuitive are not assumed, such as the convex completion of
transformations. A remarkable example is that of
``\emph{no-restriction of preparation tests}'' hypothesis, consisting
in the requirement that every collection of states that sum to a
deterministic state is a preparation test. Similarly, we do not assume
the \emph{no-restriction hypothesis for transformations}, namely the
requirement that every transformation that preserves the state set
belongs to a test.

A fundamental definition for this manuscript is that that of dilation.
\begin{definition}[Dilation]\label{def:dilation} We say that
  $\Psi\in\Stset(\rA\rB)$ is a dilation of $\rho\in\Stset(\rA)$ if
  \begin{equation*}
    \begin{aligned} \Qcircuit @C=1em @R=.7em @! R {\prepareC{\rho}&
        \qw \poloFantasmaCn \rA &\qw } 
    \end{aligned}
    ~=~
    \begin{aligned}
      \Qcircuit @C=1em @R=.7em @! R {\multiprepareC{1}{\Psi}& \qw \poloFantasmaCn \rA &\qw \\
        \pureghost{\Psi} & \qw \poloFantasmaCn \rB &\measureD {e}}
    \end{aligned}
  \end{equation*}
for some deterministic effect $e\in\Cntset(\rB)$. Analogously,
$c\in\Cntset(\rA\rB)$ is a dilation of $a\in\Cntset(\rA)$ if
\begin{equation*}
  \begin{aligned} \Qcircuit @C=1em @R=.7em @! R {& \qw
      \poloFantasmaCn \rA &\measureD{a} } 
\end{aligned}
  ~=~
\begin{aligned}
  \Qcircuit @C=1em @R=.7em @! R {&\qw \poloFantasmaCn \rA & \multimeasureD{1}{c} \\
    \prepareC{\omega}&\qw \poloFantasmaCn \rB &\ghost {c}}
\end{aligned}
\end{equation*}
for some deterministic state $\omega\in\Stset(\rB)$.  We denote by
$D_\rho$ the set of all dilations of the state $\rho$.  More
generally, given a collection of states $S\subseteq\Stset(\rA)$ we
define $D_S:=\bigcup_{\rho\in S }D_\rho$, with $D_{\Stset{(\rA)}}$
corresponding to the set of all states $\psi\in\Stset{(\rA\rB)}$ for
every system $\rB$. The same notation is used for the set of dilations
of effects.
  \end{definition}
  We remark that, given $\sigma\in S$, every state of the form
  $\sigma\otimes\rho$ belongs to $D_S$.

  Notice that there are generally more than one deterministic effect
  for the same system, differently from quantum theory, where the
  partial trace over the Hilbert space of the system is the only way
  to discard it. Instead, given a state $\Psi\in\Stset{(\rA\rB)}$, in
  a theory with more deterministic effects for the same system $\rB$
  the marginal state of system $\rA$ generally depends on the effect
  used to discard the system $\rB$. In the following we will call {\em
    marginal of a state with deterministic effect $e$} the specific
  marginal obtained by applying the effect
  $e\in\Cntset(\rB)$. Similarly, given an effect
  $c\in\Cntset{(\rA\rB)}$ its marginal of system $\rA$ depends on the
  choice of deterministic state on system $\rB$ and we will call {\em
    marginal of an effect with deterministic sate $\omega$} the
  specific marginal obtained by applying the deterministic state
  $\omega\in\Stset(\rB)$.

  Given a system $\rA$, in the dilations sets $D_{\Stset{(\rA})}$
  and $D_{\Cntset{(\rA})}$, there could be states and effects with the
  following property.
 
  \begin{definition}[Faithful state and faithful
    effect]\label{def:faith}
  A state $\Psi\in\Stset{(\rA\rC)}$ is faithful for system $\rA$ if given two
  transformations $\tA,\tA^\prime\in\Trnset{(\rA,\rB)}$, the condition
  $\tA\K{\Psi}_{\rA\rC} =\tA^\prime\K{\Psi}_{\rA\rC}$ implies
  $\tA=\tA^\prime$. Analogously, an effect
  $d\in\Cntset{(\rB\rC)}$ is faithful for $\rB$ if given two transformations
  $\tA,\tA^\prime\in\Trnset{(\rA,\rB)}$, the condition
  $\B{d}_{\rB\rC}\tA=\B{d}_{\rB\rC}\tA^\prime$ implies
  $\tA=\tA^\prime$.
\end{definition}

  \begin{remark}\label{rem:states-effects} 
    We observe that in the general framework, without further
    assumptions, states (preparations) and effects (measurements) are
    on equal footing, and any proposition proved for states can be
    proved in the same way for effects. Accordingly, since this paper
    relies only on the general framework of OPTs, all the results
    given in terms of states, dilations of states, and sets of
    dilations of states can be mirrored to results on effects,
    dilations of effects, and sets of dilations of effects,
    respectively. In the next Section~\ref{sec:axioms} we present some
    significant classes of OPTs that are obtained enriching the
    present framework with one or more properties, such as the
    possibility of performing the tomography of states using only
    local operations, or the possibility of obtaining an arbitrary
    mixed state as the marginal of a pure one. Among the properties
    discussed in the following there is also causality, which induces
    an asymmetry in the structure of states and effects of the
    theory. Indeed, as it happens in both classical and quantum
    theory, causality forces the existence of a unique deterministic
    effect, while the set of states typically presents several
    deterministic elements also in the presence of causality.
\end{remark}

\subsection{Relevant classes of OPTs}\label{sec:axioms}

A frequently highlighted property within the wider scenario of OPTs is
that of multipartite states discrimination via local measurements:

\begin{definition}[Local discriminability]\label{def:local-discriminability} It is possible to
  discriminate between any pair of states of composite systems using
  only local measurements. Mathematically, given two joint states
  $\Psi,\Psi^\prime\in\Stset(\rA\rB)$ with $\Psi\neq\Psi'$, there
  exist two effects $a\in\Cntset(\rA)$ and $b\in\Cntset(\rB)$, such
  that
\begin{equation*}
  \begin{aligned} \Qcircuit @C=1em @R=.7em @! R {\multiprepareC{1}{\Psi}& \qw \poloFantasmaCn \rA &\measureD a \\
      \pureghost\Psi & \qw \poloFantasmaCn \rB &\measureD b} \end{aligned}
  ~\not =~
\begin{aligned}
  \Qcircuit @C=1em @R=.7em @! R {\multiprepareC{1}{\Psi^\prime} & \qw \poloFantasmaCn \rA &\measureD a \\
      \pureghost{\Psi^\prime} & \qw \poloFantasmaCn \rB &\measureD b}
\end{aligned}\,.
\end{equation*}
\end{definition}
Notice that the names local discriminability and local tomography are
used interchangeably in the literature. Also in this manuscript we
will consider the two names as synonymous.

Two relevant consequences of local discriminability are: i) the local
characterization of transformations, stating that the local behaviour
of a transformation is sufficient to fully characterize the 
transformation itself; ii) the atomicity of parallel composition. 
Here we report those two features for the convenience of the reader.

\begin{proposition}[Local characterization of
  transformations]\label{cor:local-characterization} 
  If local discriminability holds, then for any two transformations
  $\tA,\tA^\prime\in\Trnset(\rA,\rB)$, the condition
  $\tA\K{\rho}_\rA=\tA^\prime\K{\rho}_\rA$ for every
  $\rho\in\Stset{(\rA)}$ implies $\tA=\tA^\prime$.
\end{proposition}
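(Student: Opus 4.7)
The goal is to upgrade the hypothesis, which controls only the action of $\tA$ on states of $\rA$ alone, to the full equality criterion of equation~\eqref{eq:equaltransf}, which requires agreement on every ancilla-extended state $\Psi\in\Stset(\rA\rC)$. Local discriminability is precisely the hook that lets us do this, because it allows us to detect equality on $\rB\rC$ using only product effects $b\otimes c$ with $b\in\Cntset(\rB)$ and $c\in\Cntset(\rC)$.

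Concretely, my plan is to fix an arbitrary ancillary system $\rC$ and an arbitrary $\Psi\in\Stset(\rA\rC)$, and to show that $(\tA\otimes\tI_\rC)\K\Psi_{\rA\rC}=(\tA'\otimes\tI_\rC)\K\Psi_{\rA\rC}$. By local discriminability applied to the bipartition $\rB\rC$, it is enough to verify
\begin{equation*}
\begin{aligned}
\Qcircuit @C=1em @R=.7em @! R {\multiprepareC{1}{\Psi}& \qw \poloFantasmaCn \rA & \gate{\tA} & \qw \poloFantasmaCn \rB &\measureD b \\ \pureghost\Psi & \qw \poloFantasmaCn \rC & \qw & \qw & \measureD c}
\end{aligned}
\;=\;
\begin{aligned}
\Qcircuit @C=1em @R=.7em @! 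R {\multiprepareC{1}{\Psi}& \qw \poloFantasmaCn \rA & \gate{\tA'} & \qw \poloFantasmaCn \rB &\measureD b \\ \pureghost\Psi & \qw \poloFantasmaCn \rC & \qw & \qw & \measureD c}
\end{aligned}
\end{equation*}
for every $b\in\Cntset(\rB)$ and $c\in\Cntset(\rC)$. The crucial observation is that the subcircuit consisting of $\Psi$ followed by $c$ on the $\rC$-wire is a valid preparation event of $\rA$; namely, if $\{c_j\}_{j\in\rY}$ is any observation test containing $c$, then $\{(\tI_\rA\otimes c_j)\K\Psi_{\rA\rC}\}_{j\in\rY}$ is a preparation test of $\rA$, and so $\rho:=(\tI_\rA\otimes\B c_\rC)\K\Psi_{\rA\rC}\in\Stset(\rA)$.

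With this in hand, the two sides of the displayed equation reduce respectively to $\B b_\rB\tA\K\rho_\rA$ and $\B b_\rB\tA'\K\rho_\rA$, which coincide by the hypothesis $\tA\K\rho_\rA=\tA'\K\rho_\rA$. Since $\rC$, $\Psi$, $b$, $c$ were arbitrary, local discriminability gives $(\tA\otimes\tI_\rC)\K\Psi_{\rA\rC}=(\tA'\otimes\tI_\rC)\K\Psi_{\rA\rC}$ for every dilation, and then~\eqref{eq:equaltransf} yields $\tA=\tA'$.

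The only genuinely non-trivial step is the identification of $(\tI_\rA\otimes\B c_\rC)\K\Psi_{\rA\rC}$ as an element of $\Stset(\rA)$; this is a standard feature of the OPT composition rules (any closed subcircuit with trivial input and output $\rA$ built from admissible events is a preparation event), but it must be invoked explicitly because otherwise the hypothesis, quantified only over $\Stset(\rA)$, could not be brought to bear. Once that is in place, the proof is a direct combination of the definition of equality of transformations with the local-discriminability criterion applied to the output pair $(\rB,\rC)$.
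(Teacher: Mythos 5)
Your proof is correct, and it is essentially the standard argument: the paper itself defers the proof to Ref.~\cite{PhysRevA.81.062348}, where the same route is taken, namely sliding the local effect $c$ on the ancilla through $\tA\otimes\tI_\rC$ to obtain the subnormalized state $(\tI_\rA\otimes\B{c}_\rC)\K{\Psi}_{\rA\rC}\in\Stset(\rA)$, applying the hypothesis there, and invoking local discriminability on the output pair $(\rB,\rC)$. You correctly flag the one point that needs explicit justification (that the partial pairing with $c$ yields a genuine preparation event of $\rA$), so there is no gap.
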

See Ref.~\cite{PhysRevA.81.062348} for the proof.

\begin{proposition}[Atomicity of parallel composition]\label{prop:atomic-parallel}
  If an OPT satisfies local discriminability then the parallel
  composition of atomic transformations is atomic.
\end{proposition}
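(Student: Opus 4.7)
My plan is to use local discriminability to reduce atomicity of $\tA\otimes\tA'$ to a scalar identity on matrix elements against product states and product effects, and then peel off the two factors one at a time using the atomicity of $\tA$ and $\tA'$ separately. Suppose $\tA\otimes\tA'=\sum_{i\in X}\tC_i$ sits inside a test $\{\tC_i\}_{i\in Y}$; I want to show each $\tC_i=\mu_i(\tA\otimes\tA')$ for some $\mu_i\in[0,1]$. By Proposition~\ref{cor:local-characterization} applied to $\Trnset(\rA\rA',\rB\rB')$ combined with local discriminability on both the input and the output systems, this is equivalent to establishing the bilinear identity
\[
(a\otimes a')\,\tC_i(\rho\otimes\rho')=\mu_i\,(a\tA\rho)(a'\tA'\rho')
\]
for all $\rho\in\Stset(\rA)$, $\rho'\in\Stset(\rA')$, $a\in\Cntset(\rB)$, $a'\in\Cntset(\rB')$ with a single constant $\mu_i\geq 0$.

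To prove it, I would fix $\rho'$ and $a'$ (each embedded in a preparation test of $\rA'$ and an observation test of $\rB'$) and define the events $\tF^{\rho',a'}_i\in\Trnset(\rA,\rB)$ by $\tF^{\rho',a'}_i(\rho):=(\tI_\rB\otimes a')\tC_i(\rho\otimes\rho')$. Letting $i\in Y$ together with the labels of the preparation and observation tests vary yields a bona fide test on $\rA\to\rB$, whose sub-collection obtained by fixing $(\rho',a')$ and restricting to $i\in X$ sums to $(a'\tA'\rho')\tA=:\lambda\tA$ and therefore refines $\lambda\tA$. Atomicity of $\tA$ then forces each $\tF^{\rho',a'}_i$ to be proportional to $\tA$, say $\tF^{\rho',a'}_i=\mu_i(\rho',a')\,\tA$; concretely, one uses that a refinement of $\lambda\tA$ (with $\lambda>0$ and $\tA$ atomic) consists only of scalar multiples of $\tA$, established by convexly mixing the given test with a test containing $\tA$ so that a sub-collection of the mixture refines $\tA$ itself and Definition~\ref{def:atomic-events} applies. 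Symmetrically, fixing $\rho$ and $a$ yields $(a\otimes\tI_{\rB'})\tC_i(\rho\otimes\cdot)=\nu_i(\rho,a)\,\tA'$.

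Computing $(a\otimes a')\tC_i(\rho\otimes\rho')$ by contracting $a$ against the first identity and $a'$ against the symmetric one gives
\[
\mu_i(\rho',a')\,(a\tA\rho)=\nu_i(\rho,a)\,(a'\tA'\rho'),
\]
and, since the left-hand side is independent of $(\rho,a)$ while the right-hand side is independent of $(\rho',a')$, both ratios $\mu_i(\rho',a')/(a'\tA'\rho')$ and $\nu_i(\rho,a)/(a\tA\rho)$ equal a common nonnegative constant $\mu_i$. Substituting back recovers the desired bilinear identity, and local discriminability concludes $\tC_i=\mu_i(\tA\otimes\tA')$. The main obstacle is the scaled-atomicity step used in the middle paragraph, where atomicity is actually invoked: once a refinement of $\lambda\tA$ is upgraded to a refinement of $\tA$ itself, everything else is the mechanical bookkeeping of bilinear evaluations imposed by local discriminability.
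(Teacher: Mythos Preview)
The paper does not supply its own proof of this proposition; it defers entirely to Ref.~\cite{DAriano:2014aa}. So there is no in-paper argument to compare against, and what follows is an assessment of your argument on its own.

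Your overall strategy is the standard one and is sound: local discriminability lets you test equality of transformations on product states against product effects, and you then peel off the two tensor factors using atomicity of $\tA$ and of $\tA'$ separately. The ratio argument at the end is also fine once you fix a pair $(\rho_0',a_0')$ with $a_0'\tA'\rho_0'\neq0$ (which exists unless $\tA'=0$, a trivial case) and argue from there rather than dividing blindly.

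The weak point is precisely the one you flag, and your proposed fix does not close it. Convexly mixing the test $\{\tD_i\}_{i\in\rY}$ (where $\tD_i=\tF^{\rho',a'}_i$) with a test $\{\tA,\tE_1,\ldots\}$ containing $\tA$ does not produce a sub-collection summing to $\tA$: with weights $p,1-p$ the mixed test is $\{p\tD_i\}\cup\{(1-p)\tA,(1-p)\tE_j\}$, and the natural candidate $\{p\tD_i:i\in\rX\}\cup\{(1-p)\tA\}$ sums to $(p\lambda+1-p)\tA$, which equals $\tA$ only when $\lambda=1$. No other sub-collection helps either, since the $\tE_j$'s and the $\tD_i$'s with $i\notin\rX$ need not be proportional to $\tA$. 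More fundamentally, the mixing step itself presupposes randomisation over tests, and the present paper explicitly refrains from assuming convex closure or no-restriction hypotheses. In the framework of the cited reference one has that atomicity coincides with lying on an extremal ray of the cone of events, which makes scaled atomicity automatic and your argument go through; but that equivalence is not derivable from the bare definitions used here, so as written your resolution of the obstacle is incomplete.
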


For the proof of the above proposition see Ref.~\cite{DAriano:2014aa}.
We observe that an OPT with local discriminability allows for
tomography of multipartite states using only local measurements. In an
OPT with local discriminability, the linear space of effects of a
composite system is the tensor product of the linear spaces of effects
of the component systems, namely
$\Cntset{(\rA\rB)}_\Reals\equiv \Cntset{(\rA)}_\Reals\otimes
\Cntset{(\rB)}_\Reals$.
Thus, any bipartite effect $c\in\Cntset{(\rA\rB)}$ can be written as a
linear combination of product effects, and every probability
$\SC{c}{\rho}_{\rA\rB}$, for $\rho\in\Stset{(\rA\rB)}$, can be
computed as a linear combination of the probabilities
$(\B{a}_\rA\otimes\B{b}_\rB)\K{\rho}_{\rA\rB}$ arising from a finite
set of product effects. The same holds for the linear space of states
and in an OPT with local discriminability the parallel composition of
two states (effects) can be understood as a tensor product. Finally,
the relation $\dim{(\rA\rB)} =\dim (\rA)\dim(\rB)$ between the linear
dimension of the set of states/effects holds, whereas for theories
without local discriminability it holds
$\dim{(\rA\rB)} >\dim (\rA)\dim(\rB)$.

Recently it has been shown that relevant physical theories, such as
the Fermionic theory~\cite{Bravyi2002210}, can be described in the OPT
framework relaxing the property of local
discriminability~\cite{d2014feynman,D_Ariano_2014}. The most general
scenario for OPTs that exhibit a finite degree of holism is that of
OPTs with n-local discriminability for some
$n\in\mathbb{N}$~\cite{hardy2012limited}:

\begin{definition}[$n$-local discriminability]\label{def:ndisc} 
  A theory satisfies $n$-local discriminability if whenever two states
  $\rho$ and $\rho^\prime$ are different, there exist a $n$-local
  effect $b$ such that $\SC{b}{\rho}\neq \SC{b}{\rho^\prime}$. We say
  that an effect is n-local if it can be written as a conic
  combination of tensor products of effects that are at most
  n-partite.
\end{definition}

Two notable examples are indeed Fermionic quantum theory
and real quantum
computation~\cite{hardy2012limited,d2014feynman,D_Ariano_2014} that
are both 2-local tomographic.

Another relevant class of OPTs is that of theories with
purification~\cite{PhysRevA.81.062348,chiribella2011informational}. As
a result of this paper we will show
(Proposition~\ref{prop:purification-niwd-2}) that the set of convex
OPTs with purification is strictly smaller than the set of OPTs that
satisfy no-information without disturbance. Moreover, we will see that
a weak version of purification, which does not require the uniqueness
(as in quantum theory) but just the existence of a purification for
each state, is enough to imply no-information without disturbance
together with the convexity assumption. Accordingly, we define the
following class of OPTs.

\begin{definition}[States purification]\label{def:states-purification}
  We say that an OPT satisfies states purification if for every system
  $\rA$ and for every state $\rho\in\Stset(\rA)$, there exists a
  system $\rB$ and a pure state $\Psi\in\Stset(\rA\rB)$ which is a
  dilation of $\rho$.
\end{definition}

We will prove that also the analogous notion of purification for
effects, provided in the following, is sufficient to guarantee
no-information without disturbance.

\begin{definition}[Effects
  purification]\label{def:effects-purification}
  We say that an OPT satisfies effects purification if for every
  system $\rA$ and for every effect $a\in\Cntset(\rA)$, there exists a
  system $\rB$ and an atomic effect $c\in\Cntset(\rA\rB)$ that is a
  dilation of $a$.
\end{definition}

As already noticed, the above definitions do not require the
purification to be unique up to reversible transformations on the
purifying system.

The last relevant class of OPTs that we point out is that of causal
theories:
\begin{definition}[Causal OPTs]\label{def:causality} The probability
  of preparation events
  in a closed circuit is independent of the choice of observations.
\end{definition}
Mathematically, if
  $\{\rho_i\}_{i \in \rX} \subset \Stset(\rA)$ is a preparation test,
  then the conditional probability of the preparation $\rho_i$ given
  the choice of the observation test $\{a_j\}_{j \in \rY}$ is the
  marginal
  \begin{equation*} 
  \Pr\left(i |  \{a_j\}  \right) := \sum_{j\in\rY} \SC {a_j} {\rho_i}_\rA.
  \end{equation*}
  In a causal theory the marginal probability
  $\Pr\left(i | \{a_j\} \right)$ is independent of the choice of the
  observation test $\{a_j\}$: if $\{a_j\}_{j\in\rY }$ and
  $\{b_k\}_{k\in\rZ }$ are two different observation tests, then one
  has $\Pr \left(i | \{a_j\} \right)=\Pr \left(i | \{b_k\} \right)$.

  The present notion of causality is simply the {\em Einstein
    causality} expressed in the language of OPTs. As proved in
  Ref.~\cite{PhysRevA.81.062348} causality is equivalent to the
  existence a unique deterministic effect $e_\rA$. We call the effect
  $e_\rA$ \emph{the deterministic effect} for system $\rA$. By
  definiton in non-causal theories the deterministic effect cannot be
  unique.

\section{ Information and disturbance}\label{sec:III}
Within the general scenario of operational probabilistic theories, and
without further assumptions on the structure of the theory, we aim at
defining the notions of {\em non-disturbing} and {\em no-information
  test}. These notions have already been investigated for causal
theories (Definition~\ref{def:causality}) that satisfy local
discriminability (Definition~\ref{def:local-discriminability}) or
states purification (Definition~\ref{def:states-purification}). We
start highlighting the weakness of previous approaches in cases where
the above hypotheses do not hold. The disturbance and the information
produced by a test on a physical system $\rA$ are commonly defined in
relation to measurements and states of the system $\rA$ only,
disregarding the action of the same test on an enlarged systems
$\rA\rB$.

A test $\{\tA_i\}_{i\in\rX }$ on system $\rA$ is usually said to be
non-disturbing if for every $\rho\in\Stset{(\rA)}$ one has that
$\sum_i \tA_i\K{\rho}_{\rA}=\K{\rho}_{\rA}$. However, this definition
is not operationally consistent if applied to theories without local
discriminability. A physically relevant example is that of Fermionic
theory~\cite{Bravyi2002210} that, due to the parity superselection
rule, is non-local tomographic~\cite{d2014feynman,D_Ariano_2014} (it
is 2-local tomographic according to Definition~\ref{def:ndisc}). We
can see via a simple example that, for a Fermionic system $\rA$, a
test $\{\tA_i\}_{i\in\rX }$ such that
$\sum_i \tA_i\K{\rho}_{\rA}=\K{\rho}_{\rA}$ for every
$\rho\in\Stset{(\rA)}$ still can disturb the states of a composite
system $\rA\rB$.

The parity superselection rule on a system $\rN_F$ of $N$ Fermions
forbids any state corresponding to a superposition of vectors
belonging to $\mathcal{F}_N^e$ and $\mathcal{F}_N^o$, representing
Fock vector spaces with total even and odd occupation number,
respectively. As a consequence, the linearized set of states
$\Stset_\Reals{(\rN_F)}$ splits in the direct sum of two spaces,
containing the states with even and odd parity, respectively. It is
now convenient to make use of the projectors onto the well-defined
parity subspaces $\mathbb{P}_e$, for the even space, and
$\mathbb{P}_o$, for the odd one.  Notice that, since
$\mathbb{P}_e\mathbb{P}_o=\mathbb{P}_o\mathbb{P}_e=0$ any Fermionic
state $\rho$ will be of the form
$\rho=\mathbb{P}_e\rho \mathbb{P}_e+\mathbb{P}_o\rho \mathbb{P}_o$.
Consequently the {\em parity test}
$\{\mathbb{P}_e\cdot\mathbb{P}_e,\mathbb{P}_o\cdot\mathbb{P}_o\}$
leaves every state $\rho\in\Stset(\rN_F)$ unchanged.  Intuitively,
this seems to suggest that parity can be measured without
disturbing. Indeed, this view is in agreement with the notion of
disturbance that has been considered in the literature so far.
 
Consider now a mixed state $\rho\in \Stset{(\rN_F)}$, with
$\rho=p_e\rho_e+p_o\rho_o$, $\rho_e$ and $\rho_o$ an even and an odd
pure state respectively, and $p_e+p_o=1$.  For example, consider the states
\begin{equation*}
\rho_e=\ket{00}\!\!\bra{00},\;\rho_o=\ket{01}\!\!\bra{01},
\end{equation*}
and $p_e=p_o=1/2$, so that 
\begin{equation*}
\rho={\tfrac12}(\ket{00}\!\!\bra{00}+\ket{01}\!\!\bra{01}).
\end{equation*}

Since Fermionic theory allows for states
purification~\cite{d2014feynman} (see Definition
\ref{def:states-purification}), we can always find a state
$\Psi\in \Stset{(\rM_F)}$, with $M>N$ that purifies $\rho$.  Since
$\Psi$ is pure, it has a definite parity, say even. In our example one
can choose
\begin{equation}\label{PSI3}
\Psi=\frac 12 (\ket{000}+\ket{011})(\bra{000}+\bra{011}).
\end{equation}
Therefore, the local test on the system $\rN_F$ that measures
the parity of the system will not disturb the states of $\rN_F$ but
will decohere the state $\Psi$ to a mixed state, then introducing a
disturbance. For example, in our case
\begin{align*}
&(\mathbb{P}_e\otimes I)\Psi(\mathbb{P}_e\otimes I)+
(\mathbb{P}_o\otimes I)\Psi(\mathbb{P}_o\otimes I)\\=
&{\tfrac12}(\ket{000}\!\!\bra{000}+\ket{011}\!\!\bra{011}).
\end{align*}

In order to avoid the above issue, and to introduce a definition of
non-disturbing test that works also for theories without local
discriminability, one could say that a test $\{\tA_i\}_{i\in\rX }$ on
system $\rA$ is non-disturbing upon input of $\rho\in\Stset{(\rA)}$,
if for every $\sigma$ in the refinement set of $\rho$ and every
purification $\Psi_{\rA\rB}\in\Stset(\rA\rB)$ of $\sigma$ one has that
$\sum_i \tA_i\K{\Psi}_{\rA\rB}=\K{\Psi}_{\rA\rB}$. This route, which
has been proposed in
Refs.~\cite{PhysRevA.81.062348,dariano_chiribella_perinotti_2017},
captures the operational meaning of disturbance also for Fermionic
systems.  However, the definition of
Refs.~\cite{PhysRevA.81.062348,dariano_chiribella_perinotti_2017}
requires purification, and thus cannot be used in theories without
purification, e.~g.~the cases of PR boxes, or the classical theory of
information.

Based on the above motivations our proposal is to define the
disturbance (and the information) produced by a test in terms of its
action on dilations, both of states and effects. This leads to notions
of information and disturbance that are completely general and thus do
not depend on causality, local discriminability, or purification. This
will allow us to prove the no-information without disturbance theorem
for a very large class of OPTs. In this Section we first consider the
disturbance and the information provided by a test when no
restrictions are posed on the states and effects of the theory. The
generalization to a scenario where both preparations and measurements
are limited to given subsets is presented in
Section~\ref{sec:info-dist-XY}.

\begin{definition}[Non-disturbing test]\label{def:nodist}
  Consider a test $\{\tA_i\}_{i\in\rX }$ on system $\rA$. We say that
  the test is non-disturbing if
\begin{equation}
  \sum_i\tA_i= \tI _\rA.
\end{equation}
\end{definition}

Notice that, following the above definition, the test
$\{\tA_i\}_{i\in\rX}$ is disturbing if there exist
$\Psi\in D_{\Stset{(\rA)}}$, and $c\in D_{\Cntset{(\rA)}}$, such that
\begin{equation}
\sum_{i\in\rX}
\begin{aligned}
  \Qcircuit @C=1em @R=.7em @! R {\multiprepareC{1}{\Psi}& \qw
    \poloFantasmaCn \rA &  \gate{\tA_i} & \qw \poloFantasmaCn \rA &\multimeasureD{1}{c} \\
    \pureghost\Psi &\qw& \qw \poloFantasmaCn \rB & \qw&\ghost{c}}
\end{aligned}\neq
\begin{aligned}
  \Qcircuit @C=1em @R=.7em @! R {\multiprepareC{1}{\Psi}& \qw
    \poloFantasmaCn \rA  &\multimeasureD{1}{c} \\
    \pureghost\Psi &\qw \poloFantasmaCn \rB &\ghost{c}}
\end{aligned}\;.
  \end{equation}
  This definition of disturbance thus stresses the effect of a
  transformation on correlations with remote systems, indeed a test
  $\{\tA_i\}_{i\in\rX }$ is \emph{non-disturbing} if it is
  operationally equal to the identity transformation of system $\rA$,
  namely it acts as the identity on any possible state and effect of
  any composite system.

  \begin{remark} 
    We could have defined a non-disturbing test from $\rA$ to $\rC$ as
    follows
  \begin{equation}
    \sum_i\tA_i=\tR,
  \end{equation}
  where $\tR\in\Trnset(\rA,\rC)$ is a reversible transformation,
  namely there exists another transformation $\tW\in\Trnset(\rC,\rA)$
  such that $\tW\tR=\tI_\rA$, and $\tR\tW=\tI_\rC$. Indeed if the test
  provides a systematic reversible transformation on the inputs, then
  its effect can be trivially corrected by inverting it. The
  classification of non-disturbing test according to this definition
  is trivially provided by the classification according to
  Definition~\ref{def:nodist}. Indeed, the most general non-disturbing
  test from $\rA\to\rC$ is the sequence of tests of the form
  $\{\tA_i\tR\}_{i\in\rX}$, with $\{\tA_i\}_{i\in\rX}$ non-disturbing
  according to Definition~\ref{def:nodist}, and
  $\tR\in\Trnset(\rA,\rC)$ reversible.
\end{remark}

In the same spirit we can establish if a test provides
information. Again a test could provide information both on the input
(preparation) and on the output (observation).

Let us consider the task in which Bob wants to extract information on
states (effects) of Alice via the test
$\test{\tA}_\rX:=\{\tA_i\}_{i\in\rX }\subseteq\Trnset(\rA\rC)$.
Operationally, in the most general case the test $\test{\tA}_\rX$ can
be used in composite tests made of a preparation test
$\test{\Psi}_\rY:=\{\Psi_j\}_{j\in\rY}\subseteq\Stset(\rA\rB)$, the
test $\test{\tA}_\rX:=\{\tA_i\}_{i\in\rX }\subseteq\Trnset(\rA\rC)$
and an observation test
$\test{C}_\rZ:=\{c_k\}_{k\in\rZ}\subseteq\Cntset(\rC\rB)$, leading to
the joint probabilities
\begin{equation}\label{eq:ni-scheme}
\begin{aligned}
  \Qcircuit @C=1em @R=.7em @! R {\multiprepareC{1}{\Psi_j}& \qw
    \poloFantasmaCn \rA &  \gate{\tA_i} & \qw \poloFantasmaCn \rC &\multimeasureD{1}{c_k} \\
    \pureghost{\Psi_j} &\qw& \qw \poloFantasmaCn \rB & \qw&\ghost{c_k}}
\end{aligned}= p(j,i,k|\test{\Psi}_\rY, \test{\tA}_\rX,\test{C}_\rZ)
\end{equation}
associated with possible outcomes $j,i,k$, and where we explicitly
show the dependence of the joint probability distribution $p(j,i,k)$
on the tests composing the circuit.

Within this scenario, Bob can use both the test $\test{\tA}_\rX$ and
any observation test $\test{C}_\rZ$ in order to extract the
information on the inputs, while he can use both the test
$\test{\tA}_\rX$ and any preparation test $\test{\Psi}_\rY$ in order
to extract the information on the outputs. This leaves room for two
inequivalent conditions for a no-information test $\test{\tA}_\rX$.

  \begin{enumerate}[leftmargin=*]
  \item\label{strong} \emph{Strong condition for no-information test}.  
\begin{enumerate}[leftmargin=*]
\item \emph{No-information on inputs}\label{strongin}: the test $\test{\tA}_\rX$ is
  no-information on inputs if for every preparation test
  $\test{\Psi}_\rY$ and for every observation tests $\test{C}_\rZ$,
  the joint probability in Eq.~\eqref{eq:ni-scheme} factorizes as
  \begin{equation*}
\begin{aligned}
  p(j,i,k|\test{\Psi}_\rY,
  \test{\tA}_\rX,\test{C}_\rZ)\\
  = r(i|k;\test{\tA}_\rX,\test{C}_\rZ)s(j,k|\test{\Psi}_\rY,
  \test{\tA}_\rX,\test{C}_\rZ),
\end{aligned}
\end{equation*}
namely we impose that
$r(i|j,k;\test{\Psi}_\rY,
\test{\tA}_\rX,\test{C}_\rZ)=r(i|k;\test{\tA}_\rX,\test{C}_\rZ)$,
where the probability distribution $r$ does not depend on the
preparation test $\test{\Psi}_\rY$ (we remind that it may happen that
a probability distribution depends on a given test but not on its
outcomes). The interpretation of this condition is that the outcomes
of $\test{\tA}_\rX$ and their correlations with the outcomes of any
observation test do not provide information on the preparation.

\item \emph{No-information on outputs}:\label{strongout} the test
  $\test{\tA}_\rX$ is no-information on outputs if for every
  $\test{C}_\rZ$ and for every $\test{\Psi}_\rY$ the joint probability
  in Eq.~\eqref{eq:ni-scheme} factorizes as
  \begin{equation*}
\begin{aligned}
  p(j,i,k|\test{\Psi}_\rY,
  \test{\tA}_\rX,\test{C}_\rZ)\\
  =r(i|j;\test{\Psi}_\rY,\test{\tA}_\rX)s(j,k|\test{\Psi}_\rY,
  \test{\tA}_\rX,\test{C}_\rZ),
\end{aligned}
\end{equation*}
where the probability distribution $r$ does not depend on the
observation test $\test{C}_\rZ$. This condition ensures that the outcomes of
$\test{\tA}_\rX$ and their correlations with the outcomes of any
preparation test do not provide information on the observation.
\end{enumerate}

\item \label{weak}\emph{Weak condition for no-information test}. 
\begin{enumerate}[leftmargin=*]
\item \emph{No-information on inputs}: the test $\test{\tA}_\rX$ is
  no-information on inputs if for every preparation test
  $\test{\Psi}_\rY$ and for every observation test $\test{C}_\rZ$, the
  joint probability in Eq.~\eqref{eq:ni-scheme} is such that
  \begin{equation}\label{eq:no-info-states-weak}
    \begin{aligned}
  \sum_k p(j,i,k|\test{\Psi}_\rY,
  \test{\tA}_\rX,\test{C}_\rZ)\\
  =r(i|\test{\tA}_\rX,\test{C}_\rZ)s(j|\test{\Psi}_\rY, 
  \test{\tA}_\rX,\test{C}_\rZ),
\end{aligned}
\end{equation}
where the probability distribution $r$ does not depend on the
preparation test $\test{\Psi}_\rY$. The interpretation of this
condition is that the outcomes of $\test{\tA}_\rX$ do not provide
information on the preparation, whenever we ignore the outcome of the
observation test.\label{eq:no-info-states-0}

\item \emph{No-information on outputs}: the test $\test{\tA}_\rX$ is
  no-information on outputs if for every observation test
  $\test{C}_\rZ$ and for every preparation test $\test{\Psi}_\rY$, the
  joint probability in Eq.~\eqref{eq:ni-scheme} is such that
  \begin{equation}\label{eq:no-info-effects-weak}
\begin{aligned}
 \sum_j p(j,i,k|\test{\Psi}_\rY,
    \test{\tA}_\rX,\test{C}_\rZ)=\\
r(i|\test{\Psi}_\rY,\test{\tA}_\rX)s(k|\test{\Psi}_\rY, 
    \test{\tA}_\rX,\test{C}_\rZ),
\end{aligned}
\end{equation}
  where the probability distribution $r$ does not depend on the
  observation test $\test{C}_\rZ$. This means that the outcomes of $\test{\tA}_\rX$
  do not provide information on the observation, whenever we ignore
  the outcome of the preparation test.\label{eq:no-info-effects-0}
\end{enumerate}

\end{enumerate}

It is elementary to see
that~\ref{strongin}$\Rightarrow$\ref{eq:no-info-states-0}
(and~\ref{strongout}$\Rightarrow$\ref{eq:no-info-effects-0}), namely
the strong no-information condition implies the weak one, both on
inputs and outputs. In the literature on no-information without
disturbance in quantum theory the authors take the weak
notion~\ref{eq:no-info-states-0} of no-information test (only on
inputs since the quantum theory is causal).  Here, we also choose the
conditions~\ref{weak} as expressed in the next definition (the
equivalence between the weak conditions~\ref{weak} and
Definition~\ref{def:noinfo} is proved in Appendix \ref{App:no-info
  test}). The motivation for this choice is that if an OPT satisfies
no-information without disturbance according to conditions~\ref{weak},
then it also satisfies no-information without disturbance in the
strongest sense of conditions~\ref{strong} (see
Remark~\ref{rem:niwdsw} in the following).

\begin{definition}[No-information test]\label{def:noinfo}
  A test $\{\tA_i\}_{i\in\rX }$ with events $\tA_i\in\Trnset(\rA,\rC)$
  is a no-information test if, for every choice of deterministic
  effect $e_{\rC\rB}$ and deterministic state $\omega_{\rA\rB}$, there
  exists a deterministic effect $f_{\rA\rB}$ and a deterministic state
  $\nu_{\rC\rB}$ such that for every $i\in\rX$ one has
  \begin{align}\label{eq:no-info-states}
    \B{e}_{\rC\rB}\tA_i= p_i(e)\B{f}_{\rA\rB},\\\label{eq:no-info-effects}
    \tA_i \K{\omega}_{\rA\rB}=q_i(\omega)
    \K{\nu}_{\rC\rB}.
  \end{align}
\end{definition}

According to Eq.~\eqref{eq:no-info-states} (that coincides with
Eq.~\eqref{eq:no-info-states-weak} in the weak condition of
item~\ref{eq:no-info-states-0}), the test $\{\tA_i\}_{i\in\rX }$ does
not provide information upon any possible input state.  However, the
probability distribution $p_i(e)$ might in principle provide
information about the effect $e$. On the other hand according to
Eq.~\eqref{eq:no-info-effects} (that coincides with
Eq.~\eqref{eq:no-info-effects-weak} in the weak condition of
item~\ref{eq:no-info-effects-0}), the test $\{\tA_i\}_{i\in\rX }$ does
not provide information upon output of any possible effect, while the
probability distribution $q_i(\omega)$ might in principle provide
information about the state $\omega$. The conjunction of the two
conditions implies that no-information is provided by the test about
$D_{\Stset{(\rA)}}$ and $D_{\Cntset{(\rA)}}$, namely about any
possible input state and output effect of any dilated system. The last
statement is proved in the following lemma.

\begin{lemma}\label{lem:no-info}
  Let the test $\{\tA_i\}_{i\in\rX }$ with events
  $\tA_i\in\Trnset(\rA,\rC)$ be a no-information test. Then one has
  \begin{align}\label{eq:no-info-states-lemma}
    \B{e}_{\rC\rB}\tA_i=r_i
    \B{f}_{\rA\rB},\\\label{eq:no-info-effects-lemma}
    \tA_i \K{\omega}_{\rA\rB}= r_i
    \K{\nu}_{\rC\rB}.
\end{align}
\end{lemma}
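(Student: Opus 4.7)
The plan is to combine the two factorisation identities in Definition~\ref{def:noinfo} by completing the ``sandwich'' $\B{e}_{\rC\rB}\tA_i\K{\omega}_{\rA\rB}$ into a closed scalar circuit and evaluating it in two different ways. First I would fix an arbitrary deterministic effect $e\in\Cntset{(\rC\rB)}$ and an arbitrary deterministic state $\omega\in\Stset{(\rA\rB)}$. By the no-information hypothesis, there exist a deterministic effect $f\in\Cntset{(\rA\rB)}$ and a deterministic state $\nu\in\Stset{(\rC\rB)}$ (in principle depending on the pair $(e,\omega)$) and scalars $p_i(e),\,q_i(\omega)$ such that $\B{e}_{\rC\rB}\tA_i = p_i(e)\B{f}_{\rA\rB}$ and $\tA_i\K{\omega}_{\rA\rB}=q_i(\omega)\K{\nu}_{\rC\rB}$.

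Composing these with $\K{\omega}_{\rA\rB}$ and $\B{e}_{\rC\rB}$ respectively produces one and the same closed circuit, so
\begin{equation*}
p_i(e)\,\SC{f}{\omega}_{\rA\rB} \;=\; \B{e}_{\rC\rB}\tA_i\K{\omega}_{\rA\rB} \;=\; q_i(\omega)\,\SC{e}{\nu}_{\rC\rB}.
\end{equation*}
Now $\SC{f}{\omega}_{\rA\rB}$ is the closed circuit obtained by sequencing the singleton preparation test $\{\omega\}$ with the singleton observation test $\{f\}$; since its joint test is itself a singleton, the probability of its unique outcome must be $1$. Exactly the same argument applied to $\SC{e}{\nu}_{\rC\rB}$ yields $\SC{e}{\nu}_{\rC\rB}=1$. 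Therefore $p_i(e)=q_i(\omega)$ for every admissible pair $(e,\omega)$.

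This equality being independent of the specific $e$ and $\omega$ chosen, varying $e$ with $\omega$ fixed shows that $p_i(e)$ is independent of $e$, and symmetrically $q_i(\omega)$ is independent of $\omega$; both coincide with a common scalar that I would call $r_i$. Substituting back into the two original factorisations yields Eqs.~\eqref{eq:no-info-states-lemma} and~\eqref{eq:no-info-effects-lemma}.

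The only potentially subtle step is the assertion that a closed circuit composed entirely of singleton tests has value $1$; this is not a separate computation but a direct consequence of the OPT axiom that a closed circuit defines a probability distribution over the joint test's outcome space, together with the fact that the parallel/sequential composition of singleton tests is again a singleton test. Once that observation is in place, the rest is a one-line algebraic manipulation, so I do not anticipate any further obstruction.
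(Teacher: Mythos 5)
Your proposal is correct and follows essentially the same route as the paper: both evaluate $\B{e}_{\rC\rB}\tA_i\K{\omega}_{\rA\rB}$ using each factorisation in turn and use that a deterministic effect paired with a deterministic state gives probability $1$, whence $p_i(e)=q_i(\omega)=r_i$. Your extra remarks about the $(e,\omega)$-dependence of $f,\nu$ and about singleton closed circuits having value $1$ just make explicit what the paper leaves implicit.
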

\begin{proof}
  By Eqs.~\eqref{eq:no-info-states} and~\eqref{eq:no-info-effects} one has
  \begin{align*}
    \B{e}_{\rC\rB}\tA_i\K{\omega}_{\rA\rB}= p_i(e)=q_i(\omega)=r_i,
\end{align*}
where we used the fact that $e,f$ and $\omega,\nu$ are respectively
deterministic effects and deterministic states. 
\end{proof}

\begin{remark}
  Notice that in Eq.~\eqref{eq:no-info-states} the probability of the
  transformation $\tA_i$ $\forall i\in\rX$ generally depends on the
  deterministic effect $e_{\rC\rB}$, this accounting for non-causal
  theories. In the more general case in which also the deterministic
  effect $f_{\rA\rB}$ on the right hand side of
  Eq.~\eqref{eq:no-info-states} depends on $i\in\rX$, the test
  $\{\tA_i\}_{i\in\rX}$ would provide information on the system state
  (this would happen, however, only for probabilistic states). An
  analogous argument holds for $\nu$ in
  Eq.~\eqref{eq:no-info-effects}.
\end{remark}

\subsection{No-information without disturbance}\label{sec:niwd}

In this section we state the condition of no-information without
disturbance and introduce criteria for it to be satisfied by an OPT.

\begin{definition}[OPT with no-information without disturbance]\label{def:niwd}
  We say that an OPT satisfies no-information without disturbance if,
  for every system $\rA$, and every test
  $\{\tA_i\}_{i\in\rX }\subseteq\Trnset(\rA)$, if the test is
  non-disturbing then it is a no-information test.
\end{definition}


\begin{theorem}
  An OPT satisfies no-information without disturbance if and only if
  the identity transformation is atomic for every system of the
  theory.
\label{thm:niwd}
\end{theorem}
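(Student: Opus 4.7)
The plan is to prove the two implications separately. The \emph{if} direction is essentially definitional: if $\tI_\rA$ is atomic for every system $\rA$ and $\{\tC_i\}_{i\in\rX}$ is any non-disturbing test, then $\sum_i \tC_i = \tI_\rA$ exhibits $\{\tC_i\}$ as a refinement of $\tI_\rA$, so by atomicity each $\tC_i = \lambda_i \tI_\rA$ with $\lambda_i \in [0,1]$. The no-information conditions of Definition~\ref{def:noinfo} are then satisfied trivially by choosing $f_{\rA\rB}=e_{\rA\rB}$, $\nu_{\rA\rB}=\omega_{\rA\rB}$, and $p_i(e)=q_i(\omega)=\lambda_i$.

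For the \emph{only if} direction I assume no-information without disturbance and fix any refinement $\{\tC_i\}_{i\in\rX}$ of $\tI_\rA$, with the goal of showing $\tC_i=\lambda_i\tI_\rA$. Since $\sum_i\tC_i=\tI_\rA$, the test is non-disturbing, hence by hypothesis a no-information test. Combining Definition~\ref{def:noinfo} with Lemma~\ref{lem:no-info}, for every deterministic $\omega\in\Stset(\rA\rB)$ and every ancillary system $\rB$ one has $\tC_i\K{\omega}_{\rA\rB}=r_i\K{\nu_\omega}_{\rA\rB}$ with universal coefficients $r_i$ (independent of $\omega$) and a deterministic state $\nu_\omega$ depending only on $\omega$. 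Summing over $i$ and using $\sum_i\tC_i=\tI_\rA$ yields $\K{\omega}=\bigl(\sum_i r_i\bigr)\K{\nu_\omega}$, which forces $\K{\nu_\omega}\propto\K{\omega}$ and gives
\begin{equation*}
\tC_i\K{\omega}_{\rA\rB}=\lambda_i\K{\omega}_{\rA\rB},\qquad \lambda_i:=\frac{r_i}{\sum_j r_j},
\end{equation*}
with $\lambda_i$ a universal scalar depending neither on $\omega$ nor on $\rB$.

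To promote this scalar action on deterministic states to the identity $\tC_i=\lambda_i\tI_\rA$ of transformations, my plan is the following commutation argument: for any ancillary $\rA'$, any deterministic $\omega\in\Stset(\rA'\rA\rB)$, and any effect $d\in\Cntset(\rA')$, the conditional state $\Psi_d:=(d\otimes\tI_{\rA\rB})\K{\omega}$ in $\Stset(\rA\rB)$ satisfies, because $d$ and $\tC_i$ act on disjoint subsystems,
\begin{equation*}
\tC_i\K{\Psi_d}=(d\otimes\tI)\,\tC_i\K{\omega}=(d\otimes\tI)(\lambda_i\K{\omega})=\lambda_i\K{\Psi_d}.
\end{equation*}
Hence $\tC_i$ acts as the scalar $\lambda_i$ on every state obtained as a conditional marginal of a deterministic one. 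The main obstacle is to argue that every $\Psi\in\Stset(\rA\rB)$ is of this form, equivalently that the deterministic states of all dilations linearly span $\Stset_{\mathbb{R}}(\rA\rB)$. This rests on the OPT structure---every state belongs to a preparation test whose coarse-graining is deterministic---and, in finite dimensions, it suffices to conclude by linearity that $\tC_i=\lambda_i\tI_\rA$ as transformations, which is exactly the atomicity of $\tI_\rA$.
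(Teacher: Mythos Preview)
Your \emph{if} direction is fine and matches the paper. The \emph{only if} direction, however, has a genuine gap at the very step you flag as ``the main obstacle.'' You correctly derive $\tC_i\K{\omega}_{\rA\rB}=\lambda_i\K{\omega}_{\rA\rB}$ for every deterministic $\omega$, and your commutation step is valid, so you do get $\tC_i\K{\Psi_d}=\lambda_i\K{\Psi_d}$ for every conditional marginal $\Psi_d=(d\otimes\tI)\K{\omega}$ of a deterministic state. The problem is the claim that \emph{every} $\Psi\in\Stset(\rA\rB)$ arises this way. The justification you offer---``every state belongs to a preparation test whose coarse-graining is deterministic''---only says that $\Psi$ is a \emph{summand} of a deterministic state, not that it is a \emph{conditional marginal} of one. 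Passing from the former to the latter is precisely a flag/ancilla construction (encode the outcome $j$ of the preparation test in a classical register, then read it off with an effect), and a bare OPT does not guarantee the existence of such classical ancilla systems or of the corresponding flagged states. Equivalently, you have not shown that conditional marginals of deterministic states span $\Stset_{\mathbb R}(\rA\rB)$; in a non-convex or non-causal OPT there is no reason even for deterministic states to span. So the argument, as written, does not close.

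The paper sidesteps this entirely by a different mechanism. It uses \emph{both} halves of the no-information condition: besides $\tA_i\K{\omega}=r_i\K{\omega}$ for deterministic $\omega$, also $\B{e}\tA_i=r_i\B{e}$ for every deterministic effect $e$. Then, crucially, it exploits \emph{purity} rather than determinism: for any pure $\Psi\in\Stset(\rA\rB)$ the identity refinement $\sum_i\tA_i\K{\Psi}=\K{\Psi}$ forces $\tA_i\K{\Psi}=\lambda_i(\Psi)\K{\Psi}$ automatically, by atomicity of $\Psi$. Pairing with a deterministic effect $e^\Psi$ satisfying $\SC{e^\Psi}{\Psi}\neq 0$ and using $\B{e^\Psi}\tA_i=r_i\B{e^\Psi}$ pins down $\lambda_i(\Psi)=r_i$, independent of $\Psi$. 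Since pure states generate the (finite-dimensional) state cone, linearity then gives $\tA_i=r_i\tI_\rA$. In short: you try to reach arbitrary states from deterministic ones via conditioning, which is not guaranteed in a general OPT; the paper reaches arbitrary states from pure ones, which is automatic from the cone structure, and uses the effect side of Definition~\ref{def:noinfo} (which you never invoke) to show the proportionality constant is universal.
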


\begin{proof}
  We start proving that if an OPT satisfies no-information without
  disturbance then the identity transformation is atomic. Consider a
  system $\rA$ of the theory, and a refinement $\{\tA_i\}_{i\in \rX}$
  ($\tA_i\in\Trnset(\rA)$ for every $i\in\rX$) of the identity map
  $\tI_\rA=\sum_i\tA_i$ for system $\rA$. The test
  $\{\tA_i\}_{i\in\rX}$ is clearly non-disturbing, therefore by
  hypothesis it is a no-information test. By definition of
  no-information test, and using Lemma~\ref{lem:no-info}, we know that
  for every deterministic effect $e_{\rC\rB}$, and deterministic state
  $\omega_{\rA\rB}$, there exists a deterministic effect $f_{\rA\rB}$
  and a deterministic state $\nu_{\rA\rB}$ such that for every
  $i\in\rX$ one has $\B{e}_{\rA\rB}\tA_i= r_i\B{f}_{\rA\rB}$, and
  $\tA_i \K{\omega}_{\rA\rB}=r_i \K{\nu}_{\rA\rB}$. Summing both sides
  of the last equation over the index $i\in\rX$, and remembering that
  $\sum_{i\in\rX}r_i=1$, we find that $e=f$ and
  $\omega=\nu$. Therefore, the no-information condition is
  \begin{align}\label{eq:cond11}
    \B{e}_{\rA\rB}\tA_i= r_i\B{e}_{\rA\rB},\\
\label{eq:cond12} 
\tA_i\K{\omega}_{\rA\rB}=r_i \K{\omega}_{\rA\rB},  
\end{align}
  for every deterministic effect $e_{\rA\rB}$ and for every
  deterministic state $\omega_{\rA\rB}$. Consider now an arbitrary
  pure state $\Psi\in\Stset{(\rA\rB)}$ (the same proof can be done
  choosing an arbitrary atomic effect $c\in\Cntset{(\rA\rB)}$). Since
  $\sum_i\tA_i\K{\Psi}_{\rA\rB}=\K{\Psi}_{\rA\rB}$, it follows that
\begin{align}\label{eq:cond2}
  \tA_i\K{\Psi}_{\rA\rB}=\lambda_i(\Psi)\K{\Psi}_{\rA\rB},\quad\sum_i \lambda_i(\Psi)=1,
\end{align} 
where the coefficients $\lambda_i(\Psi)$ generally depend on the state
$\Psi$. However, for each pure state $\Psi$ there exists a
deterministic effect $e^{\Psi}\in\Cntset{(\rA\rB)}$ such that
$\SC{e^{\Psi}}{\Psi}\neq 0$. Upon applying the deterministic effect
$e^{\Psi}$ on both sides of Eq.~\eqref{eq:cond2}, we get
\begin{align}
  (e^{\Psi}|\tA_i\K{\Psi}_{\rA\rB}=\lambda_i(\Psi)(e^{\Psi}|\Psi)_{\rA\rB}.
\end{align}
Now, applying both sides of Eq.~\eqref{eq:cond11} to $\Psi$, we get
\begin{align}
(e^\Psi|\tA_i\K{\Psi}_{\rA\rB}=r_i (e^\Psi|\Psi)_{\rA\rB},
\end{align}
and comparing the last two identities, considering that
$\SC{e^\Psi}{\Psi}_{\rA\rB}\neq0$, we obtain
\begin{align}
  \lambda_i(\Psi)=r_i,\qquad \forall i\in\rX.
\end{align}
Since this holds true for every pure state $\Psi$, we conclude that
$\lambda_i(\Psi)$ is independent of $\Psi$.  Then
$\tA_i\K{\rho}_{\rA\rB}=r_i\K{\rho}_{\rA\rB}$,
$\forall\rho\in\Stset(\rA\rB)$, proving that
$\tA_i=r_i\tI_\rA$. Notice that we implicitly assumed that the
probabilities $r_i$ do not depend on the choice of the system
$\rB$. Actually this can be proven as shown in Appendix~\ref{App:pulci}.

The converse implication, namely that if in an OPT the identity
transformation is atomic then a non-disturbing test is no-information,
is trivial.
\end{proof}

\begin{remark}\label{rem:niwdsw}
  Eq.~\eqref{eq:ni-scheme} shows the most general scenario in which a
  test $\{\tA_i\}_{i\in\rX}$ can be used to extract information on its
  inputs or on its outputs. We noticed that two inequivalent
  definitions of no-information tests are possible, a strong
  condition~\ref{strong}, and a weak condition~\ref{weak}, depending
  on the features of the joint probability distribution $p(j,i,k)$ of
  Eq.~\eqref{eq:ni-scheme}. However, due to the above theorem, if a
  theory satisfies no-information without disturbance in the weak
  sense, then a non-disturbing test
  $\{\tA_i\}_{i\in\rX}\in\Trnset{(\rA)}$ is such that
  $\tA_i=q_i\tI_\rA$, with $\sum_i q_i=1$. It follows that in
  Eq.~\eqref{eq:ni-scheme} the joint probability distribution
  $p(j,i,k)$ is of the form $p(j,i,k)=q_ip(j,k)$, and the test is also
  no-information in the strong sense.
\end{remark}

Besides the atomicity of the identity, we can provide other two
equivalent necessary and sufficient conditions for no-information
without disturbance.

\begin{proposition}\label{prop:niwd-rev}
  An OPT satisfies no-information without disturbance if and only if
  for every system there exists an atomic transformation which is
  either left- or right-reversible.
\end{proposition}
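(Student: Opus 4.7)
The proposition is a direct corollary of Theorem~\ref{thm:niwd}, which reduces no-information without disturbance to atomicity of the identity. I plan to establish the biconditional by going through atomicity of $\tI_\rA$ for every system $\rA$ in both directions.

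For the forward direction, I would observe that if no-information without disturbance holds, then by Theorem~\ref{thm:niwd} the identity $\tI_\rA$ is atomic for every $\rA$; since $\tI_\rA$ is trivially its own two-sided inverse, it is both left- and right-reversible and thus serves as the required atomic, one-sided reversible transformation. This direction is immediate.

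For the converse, by Theorem~\ref{thm:niwd} it suffices to prove that $\tI_\rA$ is atomic for every $\rA$. Fix $\rA$ and a refinement $\tI_\rA = \sum_{i\in\rX} \tA_i$ with $\tA_i \in \Trnset(\rA)$, and split according to which side of $\rA$ carries the given atomic one-sided reversible transformation $\tR$. If $\tR \in \Trnset(\rA,\rB)$ is atomic with left inverse $\tW$, so that $\tW\tR = \tI_\rA$, then expanding $\tR = \tR\tI_\rA = \sum_i \tR\tA_i$ exhibits $\{\tR\tA_i\}_{i\in\rX}$ as a refinement of $\tR$; atomicity of $\tR$ yields $\tR\tA_i = \mu_i \tR$ for some $\mu_i \in [0,1]$, and precomposing with $\tW$ gives $\tA_i = \mu_i \tI_\rA$. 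Symmetrically, if $\tR \in \Trnset(\rB,\rA)$ is atomic with right inverse $\tW$ satisfying $\tR\tW = \tI_\rA$, then $\tR = \tI_\rA\tR = \sum_i \tA_i\tR$ exhibits $\{\tA_i\tR\}_{i\in\rX}$ as a refinement of $\tR$, atomicity forces $\tA_i\tR = \mu_i \tR$, and post-composing with $\tW$ yields $\tA_i = \mu_i \tI_\rA$. In both cases the refinement is trivial, so $\tI_\rA$ is atomic and Theorem~\ref{thm:niwd} applies.

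The proof is essentially mechanical given Theorem~\ref{thm:niwd}; the only point requiring care is pairing each side of reversibility with the side of composition by $\tR$ that allows the inverse $\tW$ to cancel $\tR$, so that an equation about the composites $\tR\tA_i$ or $\tA_i\tR$ becomes one about $\tA_i$ alone. No significant technical obstacle is expected.
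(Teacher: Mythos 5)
Your proposal is correct and follows essentially the same route as the paper: the forward direction is the trivial observation that the identity itself is an atomic (two-sided) reversible transformation once Theorem~\ref{thm:niwd} gives its atomicity, and the converse refines the identity, composes with the atomic transformation $\tR$, invokes atomicity of $\tR$ to get $\tR\tA_i\propto\tR$ (or $\tA_i\tR\propto\tR$), and cancels with the one-sided inverse. The only cosmetic difference is that you write out both the left- and right-reversible cases explicitly, where the paper treats one and declares the other analogous.
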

\begin{proof} We start proving that a theory with an atomic reversible
  transformation for each system satisfies no-information without
  disturbance. Let $\tR\in\Trnset(\rA,\rC)$ be atomic and
  left-reversible (the right-reversible case is analogous). Then
  consider a refinement $\tI_\rA=\sum_i\tA_i$, with
  $\tA_i\in\Trnset(\rA)$ for $i\in\rX $, of the identity
  transformation. By definition of identity map we have that
  $\tR\tI_\rA=\sum_i \tR\tA_i=\tR$, and due to the atomicity of $\tR$
  it must be $\tR\tA_i\propto\tR$ for every $i\in\rX $. Since $\tR$ is
  left-reversible (namely there exists $\tW\in\Trnset(\rC,\rA)$ such
  that $\tW\tR=\tI_\rA$) it follows that $\tA_i\propto\tI_\rA$ for
  every $i\in\rX $, which proves the atomicity of $\tI_\rA$.

The other implication, that in a theory that satisfies no-information
without disturbance for every system there exists an atomic
transformation which is either left- or right-reversible, is
trivial. Indeed, in a theory that satisfies no-information without
disturbance the identity, which is both right- and left-reversible, is
atomic as proved in Theorem~\ref{thm:niwd}.
\end{proof}

\begin{proposition}
  An OPT satisfies no-information without disturbance if and only if for every 
  system every reversible transformation is  atomic.
\end{proposition}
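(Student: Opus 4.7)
The plan is to reduce the statement to Theorem~\ref{thm:niwd} (no-information without disturbance $\iff$ atomicity of $\tI_\rA$ for every $\rA$), using a composition-with-the-inverse trick analogous to the one employed in the proof of Proposition~\ref{prop:niwd-rev}.

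The backward direction ($\Leftarrow$) is immediate: for every system $\rA$ the identity $\tI_\rA$ is reversible (its own inverse), hence by hypothesis it is atomic, and Theorem~\ref{thm:niwd} then gives no-information without disturbance.

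For the forward direction ($\Rightarrow$), assume no-information without disturbance, so $\tI_\rA$ is atomic for every $\rA$. Let $\tR\in\Trnset(\rA,\rB)$ be reversible, with inverse $\tR^{-1}\in\Trnset(\rB,\rA)$ satisfying $\tR^{-1}\tR=\tI_\rA$ and $\tR\tR^{-1}=\tI_\rB$. Pick any refinement $\tR=\sum_{i\in\rX}\tD_i$, coming from a test $\{\tD_i\}_{i\in\rY}$ with $\rX\subseteq\rY$. Sequentially composing with $\tR^{-1}$ yields the test $\{\tR^{-1}\tD_i\}_{i\in\rY}\subseteq\Trnset(\rA)$, and
\begin{equation*}
\sum_{i\in\rX}\tR^{-1}\tD_i=\tR^{-1}\sum_{i\in\rX}\tD_i=\tR^{-1}\tR=\tI_\rA,
\end{equation*}
so $\{\tR^{-1}\tD_i\}_{i\in\rX}$ is a refinement of $\tI_\rA$. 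Atomicity of $\tI_\rA$ forces $\tR^{-1}\tD_i=\lambda_i\tI_\rA$ for some $\lambda_i\in[0,1]$, and composing on the left with $\tR$ gives $\tD_i=\lambda_i\tR$ for every $i\in\rX$. Hence the refinement is trivial and $\tR$ is atomic.

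I do not expect any real obstacle: the whole argument is the standard trick of pulling a refinement back through an invertible map to a refinement of the identity, and then invoking the already proved atomicity of $\tI_\rA$. The only subtlety worth checking is that the refinement property of events survives sequential composition with $\tR^{-1}$ (i.e.\ that $\{\tR^{-1}\tD_i\}_{i\in\rY}$ is indeed a test of $\Trnset(\rA)$), but this is guaranteed by the sequential composition rule of the OPT framework.
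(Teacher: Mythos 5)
Your proof is correct and follows essentially the same route as the paper: pull the refinement of $\tR$ back to a refinement of $\tI_\rA$ by composing with $\tR^{-1}$ (the paper composes on the right, you on the left — immaterial), invoke atomicity of the identity from Theorem~\ref{thm:niwd}, and recover triviality of the refinement by composing with $\tR$ again; the converse in both cases is just the observation that the identity is reversible. Your version is stated for $\tR\in\Trnset(\rA,\rB)$ rather than $\Trnset(\rA)$, a harmless and slightly more general reading of the statement.
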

\begin{proof}
  We prove that if the theory satisfies no-information without
  disturbance, then every reversible transformation is atomic.
  Indeed, let $\tR\in\Trnset(\rA)$ be reversible, and suppose that
  $\tR=\sum_{i\in\rX}\tR_i$ for test $\{\tR_i\}_{i\in\rX}$.  Then, one
  has
\begin{equation}
\sum_{i\in\rX}\tR_i\tR^{-1}=\tI_\rA,
\end{equation}
and by Theorem~\ref{thm:niwd} one has that
$\tR_i\tR^{-1}=p_i\tI_\rA$. Finally, multiplying by $\tR$ to the
right, we conclude that $\tR_i=p_i\tR$, namely the refinements of
$\tR$ must be trivial. For the converse, it is sufficient to observe
that the identity is reversible.
\end{proof}

\subsection{Information without disturbance}\label{sec:V}

In this section we provide the general structure of the state spaces
and effect spaces of any theory where some information can be extracted
from a system without introducing disturbance. Such information is
``classical'' in the sense that the measurement is the reading of
information that is repeatable and shareable. In particular, for the
classical OPT the whole information encoded on a system can be read in
this way.  The proof of the above statements are based on the
following theorem.

\begin{theorem}\label{Thm:PROJ} The non redundant atomic refinement of
  the identity is unique for every
  system. Moreover, given the non redundant atomic refinement
  $\{\tA_i\}_{i\in\rX}\subseteq\Trnset(\rA)$ of the identity
  $\tI_\rA=\sum_i\tA_i$, one has $\tA_i\tA_j=\tA_i\delta_{ij}$.
\end{theorem}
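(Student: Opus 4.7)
The plan is to handle the ``moreover'' clause first (the orthogonality relation $\tA_i\tA_j = \delta_{ij}\tA_i$) and then bootstrap it into a uniqueness argument. Throughout I exploit the fact that, for any non-redundant atomic refinement $\{\tA_i\}_{i\in\rX}$ of $\tI_\rA$, the sequential composition of the test with itself is itself a valid test $\{\tA_i\tA_{j}\}_{(i,j)\in\rX\times\rX}$, from which I can carve out sub-collections that refine either a fixed $\tA_i$ or a fixed $\tA_j$.

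For the orthogonality step, fix $i,j\in\rX$. On the one hand, $\sum_{j'\in\rX}\tA_i\tA_{j'} = \tA_i\tI_\rA = \tA_i$, so the collection $\{\tA_i\tA_{j'}\}_{j'\in\rX}$ refines $\tA_i$ in the sense of Definition~\ref{def:atomic-events}; by atomicity of $\tA_i$ there exist $\lambda_{ij}\in[0,1]$ with $\tA_i\tA_j = \lambda_{ij}\tA_i$ and $\sum_j\lambda_{ij}=1$. Symmetrically, $\sum_{i'\in\rX}\tA_{i'}\tA_j = \tI_\rA\tA_j = \tA_j$ gives $\tA_i\tA_j = \mu_{ij}\tA_j$ for some $\mu_{ij}\in[0,1]$. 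Combining yields $\lambda_{ij}\tA_i = \mu_{ij}\tA_j$. If $i\neq j$ and $\lambda_{ij}>0$, then $\tA_i\propto\tA_j$ with a strictly positive coefficient, violating non-redundancy; hence $\lambda_{ij}=0$ for $i\neq j$, and normalisation then forces $\lambda_{ii}=1$, establishing $\tA_i\tA_j = \delta_{ij}\tA_i$.

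For uniqueness, let $\{\tA_i\}_{i\in\rX}$ and $\{\tB_j\}_{j\in\rY}$ be two non-redundant atomic refinements of $\tI_\rA$. The same double-expansion trick applied to the mixed composition $\{\tA_i\tB_j\}$ yields $\alpha_{ij},\beta_{ij}\in[0,1]$ with
\begin{equation*}
\tA_i\tB_j = \alpha_{ij}\tA_i = \beta_{ij}\tB_j,\quad \sum_j\alpha_{ij}=\sum_i\beta_{ij}=1.
\end{equation*}
Whenever $\alpha_{ij}>0$ one has $\tA_i\propto\tB_j$. Since two distinct $\tB_j$'s cannot be proportional to the same $\tA_i$ without contradicting non-redundancy of $\{\tB_j\}$, exactly one index $j=\pi(i)$ has $\alpha_{i\pi(i)}>0$, and then the normalisation $\sum_j\alpha_{ij}=1$ forces $\alpha_{i\pi(i)}=1$. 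By symmetry a map $\sigma:\rY\to\rX$ is defined by $\tB_j\propto\tA_{\sigma(j)}$. The composition $\sigma\circ\pi$ sends $i$ to an index with $\tA_i\propto\tA_{\sigma(\pi(i))}$, so by non-redundancy $\sigma\circ\pi=\mathrm{id}_\rX$, and likewise $\pi\circ\sigma=\mathrm{id}_\rY$, so $\pi$ is a bijection. Plugging back,
\begin{equation*}
\tB_j=\sum_{i\in\rX}\tA_i\tB_j=\sum_{i\in\rX}\alpha_{ij}\tA_i = \tA_{\sigma(j)},
\end{equation*}
which shows that $\{\tA_i\}_{i\in\rX}$ and $\{\tB_j\}_{j\in\rY}$ coincide up to a bijective relabelling of the outcome set.

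The main obstacle is largely bookkeeping: one must take care that $\{\tA_i\tA_{j'}\}_{j'}$ and $\{\tA_i\tB_j\}$ really are refinements in the sense of Definition~\ref{def:atomic-events}, which requires invoking the sequential composition of tests and keeping track of the order convention $\tA\tB$ used in the paper. Once this is secured, non-redundancy does all the heavy lifting: it promotes the a priori continuous proportionality constants $\lambda_{ij},\alpha_{ij}$ to a Kronecker-delta pattern and simultaneously rigidifies the two refinements into a single one.
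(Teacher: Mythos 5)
Your proof is correct and follows essentially the same route as the paper's: compose the refinement(s) of the identity with themselves (or with each other), apply atomicity twice to get the two proportionality relations $\tA_i\tB_j=\alpha_{ij}\tA_i=\beta_{ij}\tB_j$, and let non-redundancy plus normalisation collapse the coefficients to a Kronecker delta. The only difference is presentational---you prove orthogonality before uniqueness and are a bit more explicit about why the index correspondence is a bijection, whereas the paper does uniqueness first and leaves that bookkeeping implicit.
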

\begin{proof} Suppose that the identity transformation of system $\rA$
  allows for two atomic refinements $\tI_\rA=\sum_{i\in\rX}\tA_i$, and
  $\tI_\rA=\sum_{j\in\rY}\tB_j$.  Since $\sum_i\tA_i\tB_j=\tB_j$, from
  the atomicity of the transformations $\tB_j$ we get
  $\tA_i\tB_j=c_{ij} \tB_j$, for some $c_{ij}\geq 0$ such that
  $\sum_{i\in \rX}c_{ij}=1$ $\forall j\in\rY$. Similarly we get
  $\tA_i\tB_j=d_{ij}\tA_i$ for some $d_{ij}\geq 0$ such that
  $\sum_{j\in \rY}d_{ij}=1$ $\forall i\in\rX$.  Then
  $c_{ij}\tB_j=d_{ij}\tA_i$.  By non redundancy one has that for fixed
  $j$ there is only one value of $i=i(j)$ such that $c_{ij}>0$, and
  normalisation gives $c_{i(j)j}=1$. By a similar argument for a fixed
  $i$ there is $j(i)$ such that $d_{ij(i)}=1$. Then one has
  $\tB_j=\tA_{i(j)}$. This proves the uniqueness of the non redundant
  atomic refinement of the identity.

  By the same argument as before, for the non redundant atomic
  refinement of the identity one has
  $\tA_i\tA_j=c_{ij} \tA_j=d_{ij}\tA_i$, for some
  $c_{ij},d_{ij}\geq 0$ such that
  $\sum_{i\in \rX}c_{ij}=\sum_{j\in \rX}d_{ij}=1$ $\forall i,j\in\rX$.
  By atomicity and non redundancy one must have
  $c_{ij}=d_{ij}=\delta_{i,j}$.
\end{proof}

The above theorem has as a consequence the following structure theorem
for OPTs.

\begin{corollary}\label{cor:struct}
  For any pair of systems $\rA$, $\rB$ of an OPT one has the following
  decomposition of the set of states and of the set of effects of
  $\rA\rB$
\begin{equation}\label{bigoplus}
\begin{aligned}
  \Stset(\rA\rB)&=\bigoplus_{(i,j)\in\rX\times\rY}\Stset_{ij}(\rA\rB),\\
  \Cntset(\rA\rB)&=\bigoplus_{(i,j)\in\rX\times\rY}\Cntset_{ij}(\rA\rB),
\end{aligned}
\end{equation}
where for non redundant atomic decompositions $\{\tA_i\}_{i\in\rX}$,
$\{\tB_j\}_{j\in\rY}$ of the identities $\tI_\rA$ and $\tI_\rB$, one
has
\begin{equation}\label{BIGOPS}
\begin{aligned}
(\tA_i\otimes\tB_j)\K{\Psi_{i'j'}}&=\delta_{ii'}\delta_{jj'}\K{\Psi_{ij}},\\
\B{c_{i'j'}} (\tA_i\otimes\tB_j)&=\delta_{ii'}\delta_{jj'}\B{c_{ij}},
\end{aligned}
\end{equation}
for all $\Psi_{i'j'}\in\Stset_{i'j'}(\rA\rB)$ and
$c_{i'j'}\in\Cntset_{i'j'}(\rA\rB)$.
\end{corollary}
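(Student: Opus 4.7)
The plan is to treat the transformations $\tA_i \otimes \tB_j$ as a complete family of mutually orthogonal idempotents on $\rA\rB$, which then cut out the direct summands in the standard way.

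First I would lift the relations from Theorem~\ref{Thm:PROJ} to the composite system. Since parallel composition is bilinear and $\tI_{\rA\rB} = \tI_\rA \otimes \tI_\rB$, the identities $\tA_i\tA_{i'} = \delta_{ii'}\tA_i$ and $\tB_j\tB_{j'} = \delta_{jj'}\tB_j$ immediately imply
\begin{equation*}
(\tA_i \otimes \tB_j)(\tA_{i'} \otimes \tB_{j'}) = \delta_{ii'}\delta_{jj'}(\tA_i \otimes \tB_j),
\end{equation*}
together with the resolution $\sum_{(i,j)}(\tA_i \otimes \tB_j) = (\sum_i\tA_i) \otimes (\sum_j\tB_j) = \tI_{\rA\rB}$.

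Next I would set $\Stset_{ij}(\rA\rB) := (\tA_i \otimes \tB_j)\Stset(\rA\rB)$ and, symmetrically, $\Cntset_{ij}(\rA\rB) := \Cntset(\rA\rB)(\tA_i \otimes \tB_j)$. The orthogonality relations in Eq.~\eqref{BIGOPS} then follow directly: any $\K{\Psi_{i'j'}} \in \Stset_{i'j'}(\rA\rB)$ can be written $\K{\Psi_{i'j'}} = (\tA_{i'}\otimes\tB_{j'})\K{\Xi}$, so applying $\tA_i \otimes \tB_j$ and invoking the displayed idempotent relation yields $\delta_{ii'}\delta_{jj'}\K{\Psi_{ij}}$ with $\K{\Psi_{ij}} := (\tA_i\otimes\tB_j)\K{\Xi} \in \Stset_{ij}(\rA\rB)$. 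The effects identity is proved by the mirror argument, which is guaranteed to work by Remark~\ref{rem:states-effects}.

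Finally I would establish the direct sum decomposition \eqref{bigoplus}. Existence of a decomposition is immediate: for each $\Psi \in \Stset(\rA\rB)$,
\begin{equation*}
\K{\Psi} = \tI_{\rA\rB}\K{\Psi} = \sum_{(i,j)}(\tA_i\otimes\tB_j)\K{\Psi} = \sum_{(i,j)}\K{\Psi_{ij}},
\end{equation*}
with $\K{\Psi_{ij}} \in \Stset_{ij}(\rA\rB)$. Uniqueness (the direct-sum property) follows from the orthogonality in \eqref{BIGOPS}: if $\sum_{(i,j)}\K{\Phi_{ij}} = 0$ with $\Phi_{ij} \in \Stset_{ij}$, apply $\tA_{i'}\otimes\tB_{j'}$ to both sides to isolate $\K{\Phi_{i'j'}} = 0$. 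The effect side is identical.

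The main obstacle is conceptual rather than computational: one must be careful that the direct sum symbol in Eq.~\eqref{bigoplus} is interpreted at the level of the real linear spans $\Stset_\mathbb{R}(\rA\rB)$ and $\Cntset_\mathbb{R}(\rA\rB)$ (since the $\Stset_{ij}$ themselves are generally not linear subspaces but subsets of the state cone). Once this is understood, all the real work has been done in Theorem~\ref{Thm:PROJ}; the corollary is essentially the standard fact that a resolution of the identity into mutually orthogonal idempotents produces a direct-sum decomposition of any module on which they act.
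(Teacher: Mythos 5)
Your proof is correct and follows exactly the route the paper intends: the corollary is presented as an immediate consequence of Theorem~\ref{Thm:PROJ}, obtained by lifting the orthogonality relations $\tA_i\tA_{i'}=\delta_{ii'}\tA_i$, $\tB_j\tB_{j'}=\delta_{jj'}\tB_j$ to the product test $\{\tA_i\otimes\tB_j\}$ via the interchange law and the resolution $\sum_{i,j}\tA_i\otimes\tB_j=\tI_{\rA\rB}$, and then reading off the block decomposition from the resulting family of mutually orthogonal idempotents. Your closing caveat about interpreting $\bigoplus$ at the level of the linear spans (the $\Stset_{ij}$ being cones rather than subspaces) is also the right reading of Eq.~\eqref{bigoplus}.
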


\begin{remark}
  Notice that from Eq.~\eqref{bigoplus} it trivially follows that for
  any system $\rA$ the block decomposition holds
\begin{equation}\label{eq:bigoplus2}
  \Stset(\rA)=\bigoplus_{i\in\rX}\Stset_{i}(\rA),\quad \Cntset(\rA)=\bigoplus_{i\in\rX}\Cntset_{i}(\rA).
\end{equation}
However, Eq.~\eqref{bigoplus} contains the additional information that
the decomposition holds in that specific form also for composite
systems. This is not a straightforward consequence of the
decomposition of local states and local effects, as witnessed by the
Fermionic case. Indeed, the state in Eq.~\eqref{PSI3} does not have
definite parity for the two subsystems corresponding to two Fermions
on the left and one on the right, hence the state space cannot be of
the form in Eq.~\eqref{bigoplus}.
\end{remark}

\begin{remark}
  For a theory without atomicity of parallel composition it is
  possibile that the refinement $\tA_i\otimes\tB_j$ in
  Eq.~\eqref{BIGOPS} of $\tI_{\rA\rB}$ is not atomic. In such a case
  one has $\Stset(\rA\rB)=\bigoplus_{k\in\rZ}\Stset_k(\rA\rB)$, and
  $\Stset_{ij}(\rA\rB)=\bigoplus_{k\in\rZ_{ij}}\Stset_k(\rA\rB)$, for
  some partition $\rZ_{ij}$ of $\rZ$.
\end{remark}

\subsubsection{Full information without disturbance}

In the following we formalise the fact that a theory where any
information can be extracted via a non-disturbing test must have only
classical systems. Let us first define the notion of full-information
without disturbance.

\begin{definition}[Full-information without
  disturbance]\label{def:fiwd}
  An OPT satisfies full-information without disturbance if for every
  system $\rA$ and every test
  $\{\tB_j\}_{j\in\rY}\subseteq\Trnset{(\rA)}$ there exists a non
  disturbing test $\{\tA_i\}_{i\in\rX}\subseteq\Trnset{(\rA)}$ (namely
  $\sum_i\tA_i=\tI_\rA)$ such that
\begin{equation}\label{eq:fiwd}
  \tB_j=\sum_i p(j|i)\,\tV_{ij}\tA_i \tR_{ij},
\end{equation}  
for some probability distribution $p$ and reversible transformations
$\tV_{ij},\tR_{ij}\in\Trnset{(\rA)}$.
\end{definition}

As a consequence of the above definition we have the following lemma
on the structure of atomic maps in a theory with full-information
without disturbance.
\begin{lemma}\label{lem:fiwd-atomic}
  Consider an OPT with full-information without disturbance. Any
  atomic transformation $\tB\in\Trnset{(\rA)}$ is of the form
\begin{equation}\nonumber
\tB=\lambda\,\tU\tA_k=\lambda\,\tA_l\tU,
\end{equation}
where $\tA_k,\tA_l$ are atomic transformations in the unique
non-redundant refinement of the identity $\tI_\rA=\sum_i\tA_i$ of
Theorem \ref{Thm:PROJ}, $\tU\in\Trnset{(\rA)}$ is a reversible
transformation and $\lambda\geq 0$.
\end{lemma}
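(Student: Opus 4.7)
The plan is to combine the decomposition granted by Definition~\ref{def:fiwd} with the rigidity of the non-redundant atomic refinement $\{\tA_i\}_{i\in\rX}$ supplied by Theorem~\ref{Thm:PROJ}. First I would reduce $\tB$ to a single ``sandwich'' of the form $\tV\tA_k\tR$ by using atomicity; then I would show that reversibles can be shuffled past the $\tA_i$'s up to a permutation, so that the two reversibles merge into a single one, on either side.

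Since $\tB$ is an event of some test $\{\tB_j\}_{j\in\rY}$, say $\tB=\tB_{j_0}$, applying full-information without disturbance produces a non-disturbing test $\{\tA'_i\}_{i\in\rX}$ with $\sum_i\tA'_i=\tI_\rA$ and reversibles $\tV_{ij_0},\tR_{ij_0}$ such that
\[
\tB=\sum_i p(j_0|i)\,\tV_{ij_0}\,\tA'_i\,\tR_{ij_0}.
\]
By finite dimensionality each $\tA'_i$ admits an atomic refinement, and the atomic refinement of $\tI_\rA$ thereby obtained, made non-redundant by grouping proportional elements, must coincide with $\{\tA_i\}_{i\in\rX}$ by the uniqueness in Theorem~\ref{Thm:PROJ}. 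Hence $\tA'_i=\sum_k c_{ik}\tA_k$ with $c_{ik}\ge 0$, giving
\[
\tB=\sum_{i,k}\mu_{ik}\,\tV_{ij_0}\,\tA_k\,\tR_{ij_0}.
\]
Each summand is atomic: any refinement of $\tV_{ij_0}\tA_k\tR_{ij_0}$ can be pulled back through the reversibles to a refinement of $\tA_k$, which by atomicity of $\tA_k$ is trivial. Atomicity of $\tB$ then forces each nonzero summand to be proportional to $\tB$, so picking one such summand yields $\tB=\lambda\,\tV\,\tA_k\,\tR$ for some $\lambda\ge 0$ and reversibles $\tV,\tR$.

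Next, I would show that every reversible $\tU\in\Trnset(\rA)$ acts on $\{\tA_i\}$ by permutation. The family $\{\tU^{-1}\tA_i\tU\}_i$ again sums to $\tI_\rA$, is atomic (by the same pullback argument), and is non-redundant, since $\tU^{-1}\tA_i\tU\propto \tU^{-1}\tA_j\tU$ would yield $\tA_i\propto \tA_j$. By the uniqueness part of Theorem~\ref{Thm:PROJ}, there is a permutation $\pi_\tU$ of $\rX$ with $\tU^{-1}\tA_i\tU=\tA_{\pi_\tU(i)}$, equivalently
\[
\tA_i\,\tU=\tU\,\tA_{\pi_\tU(i)},\qquad \tU\,\tA_i=\tA_{\pi_\tU^{-1}(i)}\,\tU.
\]
Using this intertwining relation once to push $\tR$ to the left of $\tA_k$ and once to push $\tV$ to the right of $\tA_k$, and setting $\tU:=\tV\tR$, I obtain simultaneously $\tB=\lambda\,\tU\,\tA_{k'}=\lambda\,\tA_l\,\tU$ for appropriate indices $k'$ and $l$, which is exactly the form stated in the lemma.

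The main obstacle is precisely the intertwining step: converting the abstract uniqueness of the non-redundant atomic refinement of the identity into the covariance identity $\tA_i\tU=\tU\tA_{\pi_\tU(i)}$. This is what allows the two reversibles $\tV$ and $\tR$ obtained from the full-information decomposition to be absorbed into a single reversible $\tU$ on either side of a single $\tA_k$. Once this is in hand, the rest of the argument is a routine application of atomicity of $\tB$ and the stability of atomic events under pre- and post-composition with reversibles.
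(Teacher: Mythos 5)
Your proof is correct and follows essentially the same two-step strategy as the paper's: first use atomicity of $\tB$ together with the full-information decomposition to collapse the sum to a single term $\lambda\,\tV\tA_k\tR$, then use the uniqueness of the non-redundant atomic refinement (Theorem~\ref{Thm:PROJ}) to merge the two reversibles into one. Where you diverge is in how the merging is executed: the paper compares the three tests $\{\tV\tA_i\tR\}_{i\in\rX}$, $\{\tA_i\tV\tR\}_{i\in\rX}$ and $\{\tV\tR\tA_i\}_{i\in\rX}$, all coarse-graining to the reversible $\tU=\tV\tR$, and invokes uniqueness of the non-redundant atomic refinement of a \emph{reversible} transformation, whereas you extract from Theorem~\ref{Thm:PROJ} the covariance identity $\tU^{-1}\tA_i\tU=\tA_{\pi_\tU(i)}$ and commute the reversibles through the $\tA_i$'s. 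The two arguments are equivalent, but yours has the small advantage of using Theorem~\ref{Thm:PROJ} only as literally stated (for the identity), while the paper tacitly extends it to arbitrary reversible transformations. Your opening step is also slightly more careful: you take whatever non-disturbing test Definition~\ref{def:fiwd} provides and refine it down to the canonical atomic refinement, rather than assuming the decomposition can be taken directly with respect to $\{\tA_i\}_{i\in\rX}$; the price is the implicit assumption (shared by the paper elsewhere) that atomic refinements exist and that coarse-graining proportional atomic events preserves atomicity.
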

\begin{proof}
  Consider a test $\{\tB_j\}_{j\in\rY}\subseteq\Trnset{(\rA)}$ such
  that $\tB=\tB_j$ for some $j\in\rY$. Due to full-information without
  disturbance (see Eq.~\eqref{eq:fiwd} in Definition~\ref{def:fiwd}
  where we now take as non disturbing test $\{\tA_i\}_{i\in\rX}$ the
  unique non-redundant atomic refinement of the identity
  $\tI_\rA=\sum_i\tA_i$) and to the atomicity of $\tB$ one has
  $\tB=\lambda\,\tV\tA_i \tR$, for some $i\in\rX$, $\lambda\geq 0$ and
$\tV,\tR\in\Trnset{(\rA)}$ reversible transformations. Consider
now the three tests $\{\tV\tA_i\tR\}_{i\in\rX}$,
$\{\tA_i \tV\tR\}_{i\in\rX}$ and $\{\tV \tR\tA_i\}_{i\in\rX}$, and
observe that
$\sum_i \tV\tA_i\tR=\sum_i \tV\tR\tA_i=\sum_i \tA_i\tV\tR=\tV\tR=\tU$,
with $\tU\in\Trnset{(\rA)}$ reversible. We then conclude the proof
noticing that by Theorem~\ref{Thm:PROJ} the non redundant atomic
refinement of a reversible transformation is unique.
\end{proof}

We can now state the main Theorem of this section.
\begin{theorem}\label{thm:class}
  If an OPT is full-information without disturbance then every system
  of the theory is classical.
\end{theorem}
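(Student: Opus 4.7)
The plan is to fix a system $\rA$ and, using Theorem~\ref{Thm:PROJ} and Corollary~\ref{cor:struct}, work with the unique non-redundant atomic refinement $\tI_\rA=\sum_{i\in\rX}\tA_i$ (with orthogonality $\tA_i\tA_j=\delta_{ij}\tA_i$) and the induced direct-sum decompositions $\Stset(\rA)=\bigoplus_{i\in\rX}\Stset_i(\rA)$ and $\Cntset(\rA)=\bigoplus_{i\in\rX}\Cntset_i(\rA)$. Since a classical system is by definition one whose state cone is a simplex of perfectly discriminable pure states, the task reduces to showing that every non-trivial block $\Stset_i(\rA)$ is one-dimensional.

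First, I would show that every non-trivial block $\Stset_i(\rA)$ contains a pure state $\Psi_i$: picking any $\sigma\in\Stset(\rA)$ with $\tA_i\sigma\neq 0$, the element $\tA_i\sigma\in\Stset_i(\rA)$ admits a pure refinement after iterated refinement in finite dimension. Direct-sum uniqueness together with positivity of the refinement cone forces refinements of a block-$i$ element to stay in $\Stset_i(\rA)$, so this pure refinement lies in block~$i$. The dual argument, applied to the non-zero effect $e\tA_i\in\Cntset_i(\rA)$ for any deterministic effect $e$, yields an atomic effect $b_i\in\Cntset_i(\rA)$ with $b_i(\Psi_i)>0$.

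Next, I would verify that $\tC_i:=\K{\Psi_i}\B{b_i}\in\Trnset(\rA)$ is atomic. For any refinement $\tC_i=\sum_k\tD_k$, purity of $\Psi_i$ forces $\tD_k\K{\sigma}\propto\K{\Psi_i}$ for every $\sigma$ (including those with $b_i(\sigma)=0$, where positivity of the cone gives $\tD_k\K{\sigma}=0$); composing on the right with $e$ makes $f_k:=e\tD_k$ a genuine effect, with $\sum_k f_k=e\tC_i=e(\Psi_i)b_i$, and atomicity of $b_i$ then gives $f_k\propto b_i$, whence $\tD_k\propto\tC_i$. Applying Lemma~\ref{lem:fiwd-atomic} to the atomic $\tC_i$ yields $\tC_i=\lambda\,\tA_l\tU$ for some reversible $\tU\in\Trnset(\rA)$, some $l\in\rX$, and some $\lambda>0$. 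Taking linear ranges gives
\begin{equation*}
  \mathrm{span}(\Psi_i)=\mathrm{range}(\tC_i)=\tA_l\bigl(\tU(\Stset_{\Reals}(\rA))\bigr)=\Stset_{l,\Reals}(\rA),
\end{equation*}
where I use that $\tU$ is a bijection on $\Stset_{\Reals}(\rA)$ and that $\tA_l$ projects onto the $l$th block. Hence $\Stset_{l,\Reals}(\rA)$ is one-dimensional and contains $\Psi_i\in\Stset_{i,\Reals}(\rA)$; by disjointness of blocks in the direct sum, $l=i$, and $\Stset_i(\rA)$ is one-dimensional, spanned by $\Psi_i$.

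Iterating over every non-trivial block yields the simplex structure of $\Stset(\rA)$ with perfectly distinguishing effects $\{b_i\}$; a further application of Lemma~\ref{lem:fiwd-atomic} reduces every atomic transformation to the classical form $\sigma\mapsto\lambda\,b_k(\sigma)\,\K{\Psi_{\pi(k)}}$ for some permutation $\pi$, so every test on $\rA$ is a classical stochastic map between the pure states; the same argument applied to $\rA\rB$ shows composite systems are classical too. The main obstacle is the bookkeeping in the first two steps---existence of pure and of atomic effect refinements staying inside their blocks, which rests on positivity of refinements combined with direct-sum uniqueness---together with the handling of the degenerate case $b_i(\sigma)=0$ when proving atomicity of $\tC_i$; once these are in place, the dimensional collapse of each block follows immediately from Lemma~\ref{lem:fiwd-atomic}.
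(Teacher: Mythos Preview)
Your overall strategy---reducing to one-dimensionality of each block $\Stset_i(\rA)$ via Lemma~\ref{lem:fiwd-atomic}---matches the paper's, but there is a genuine gap in your atomicity argument for $\tC_i=\K{\Psi_i}\B{b_i}$. The chain ``purity of $\Psi_i$ forces $\tD_k\K{\sigma}\propto\K{\Psi_i}$ \ldots\ atomicity of $b_i$ gives $f_k\propto b_i$, whence $\tD_k\propto\tC_i$'' only establishes $\tD_k=\alpha_k\,\tC_i$ on \emph{local} states $\sigma\in\Stset(\rA)$. In the OPT framework, equality of transformations requires agreement on all bipartite extensions (Eq.~\eqref{eq:equaltransf}), and without local discriminability---precisely the generality this paper targets---local agreement does not suffice (cf.~Proposition~\ref{cor:local-characterization}). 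The paper is explicit on this point: it states that $\K{\rho}\B{a}$ ``is generally not atomic''. Moreover, your range identity $\mathrm{span}(\Psi_i)=\tA_l\tU(\Stset_\Reals(\rA))=\Stset_{l,\Reals}(\rA)$ genuinely needs a single term $\lambda\,\tA_l\tU$ on the right-hand side, so the gap cannot be repaired by simply decomposing $\tC_i$ into several such terms.

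The paper circumvents the issue. It decomposes $\K{\rho}\B{a}$ into its atomic refinements, each of the form $\lambda_j\,\tU_j\tA_i$ by Lemma~\ref{lem:fiwd-atomic}, obtaining $\K{\rho}\B{a}=\sum_j\lambda_j\,\tU_j\tA_i$. Applying both sides to $\rho$ gives $\SC{a}{\rho}\K{\rho}=\sum_j\lambda_j\,\tU_j\K{\rho}$, which is nonzero by reversibility of the $\tU_j$; hence $\SC{a}{\rho}\neq0$ for \emph{every} atomic pair $\rho,a$ in the same block. Then applying both sides to an arbitrary atomic $\rho'\in\Stset_i(\rA)$ yields $\SC{a}{\rho'}\K{\rho}=\sum_j\lambda_j\,\tU_j\K{\rho'}$; since the left-hand side is a nonzero multiple of the pure state $\rho$, each $\tU_j\K{\rho'}\propto\K{\rho}$, so $\rho'\propto\tU_j^{-1}\rho$. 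All atomic states in block~$i$ are therefore proportional, without ever needing $\K{\rho}\B{a}$ itself to be atomic.
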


\begin{proof} 
  Consider an arbitrary system $\rA$ of the theory. Since by
  hypothesis the identity is not atomic, let $\{\tA_i\}_{i\in\rX}$ be
  the unique non-redundant atomic refinement of the identity
  $\tI_\rA=\sum_i\tA_i$ of Theorem \ref{Thm:PROJ}. Due to Corollary
  \ref{cor:struct} (and the immediately following remark) the sets of
  states and effects decompose as in Eq.~\eqref{eq:bigoplus2}. We now
  prove that all the blocks in such decompositions must be
  one-dimensional. To this end we show that any pair of states
  $\rho,\rho^\prime\in\Stset_i(\rA)$ is such that
  $\rho^\prime\propto \rho$.

  First we show that if an OPT satisfies full-information without
  disturbance then for every non null atomic $\rho\in\Stset_i{(\rA)}$
  and atomic $a\in\Cntset_i{(\rA)}$, one has $\SC{a}{\rho}\neq 0$.
  Given such $\rho\in\Stset_i{(\rA)}$ and $a\in\Cntset_i{(\rA)}$
  consider the transformation $\K{\rho}\B{a}\in\Trnset{(\rA)}$, that
  is generally not atomic. Due to Lemma~\ref{lem:fiwd-atomic} all the
  atomic refinements of the above transformation are of the form
  \begin{equation}\label{eq:meas-prep}
    \K{\rho}\B{a}= \sum_j\lambda_{j}\,\tU_j\tA_i,
\end{equation}
where each $\tA_i$ is the element of the non redundant refinement of
the identity $\tI_\rA$ such that $\tA_i\K{\rho}=\K{\rho}$, and
$\B{a}\tA_i=\B{a}$, $\lambda_j>0$, and the $\tU_j$ are reversible
trasformations. Applying both sides of Eq.~\eqref{eq:meas-prep} to the
state $\rho\in\Stset_i{(\rA)}$ one has
$\K{\rho}\SC{a}{\rho}= \sum_j\lambda_{j}\,\tU_j\K{\rho}$.  Reminding
that the $\tU_{j}$ are all reversible, $\lambda_j>0$ and $\rho$ is
non-null, one concludes that the right hand side cannot be null, and
this proves that also the pairing $\SC{a}{\rho}$ is non-null.

Let us apply the transformation in Eq.~\eqref{eq:meas-prep} to another
arbitrary atomic state $\rho^\prime\in\Stset_i{(\rA)}$. Since
$\tA_i\K{\rho^\prime}=\K{\rho^\prime}$ one finds
$\K{\rho}\SC{a}{\rho^\prime}=
\sum_j\lambda_{j}\,\tU_j\K{\rho^\prime}$,
for some $\lambda_j>0$ and $\tU_j$ reversible transformations. As
shown above it is $\SC{a}{\rho^\prime}\neq 0$, and using the atomicity
of $\rho$ one has $\K{\rho^\prime}\propto{\tU_j^{-1}}\K{\rho}$ for
every $j$. Since this holds true for every atomic state
$\rho^\prime\in\Stset_i{(\rA)}$, one has proved that all atomic states
(and then all states) in $\Stset_i{(\rA)}$ are proportional to the
same atomic state, let's say ${\tU_{j_0}^{-1}}\K{\rho}$ for some
$j_0$.  Via an analogous argument one can see that all effects in
$\Cntset_i{(\rA)}$ are proportional to the same atomic effect.
\end{proof}

\begin{remark}\label{rem:classical-systems}
  We remind that a system is classical when all its pure states are
  jointly perfectly discriminable. In this case the base of the conic
  hull of the pure states of each system is a simplex, which
  corresponds to a subset of the set of states for a convex theory.  A
  special case of theory whose systems are all classical is the usual
  classical information theory, where indeed one has full-information
  without disturbance. On the other hand, even when all systems are
  classical, the theory can differ from classical information theory
  e.~g.~in the rule for systems composition. For example there exist
  OPTs whose systems are all clssical but that do not satisfy local
  discriminability (see Ref.~\cite{PhysRevA.101.042118}).
\end{remark}

\section{Information and disturbance with restricted input and
  output}\label{sec:IV}

In this section we extend our previous results to study the relation
between disturbance and information when both input states and output
effects are limited to some given subsets. To this end we first
introduce the basics notion of identical transformations upon
restricted input and output resources.

\subsection{Operational identities between transformations}

As expressed in Eq.~\eqref{eq:equaltransf}, two transformations
$\tA,\tA^\prime\in\Trnset(\rA,\rB)$ of an OPT are said to be
operationally equal if for every system $\rC$ and for every state
$\Psi\in\Stset{(\rA\rC)}$ one has
$\tA\K{\Psi}_{\rA\rC}=\tA^\prime\K{\Psi}_{\rA\rC}$. However, two
non-identical maps $\tA,\tA^\prime\in\Trnset(\rA,\rB)$ could behave in
the same way when their action is restricted to a relevant subclass of
states.

The notion of identical transformation upon input of a state
$\rho\in\Stset(\rA)$ has been already introduced in the literature
(see
Refs.~\cite{nielsen_chuang_2010,dariano_chiribella_perinotti_2017} and
references therein):

\begin{definition}[Equal transformations upon input of
  $\rho$]\label{def:transformations-upon-input-of-rho}
  We say that two transformations $\tA,\tA^\prime\in\Trnset(\rA,\rB)$
  are equal upon input of $\rho\in\Stset(\rA)$, and write
  $\tA=_\rho\tA^\prime$, if for every $\sigma\in\Ref_\rho$ we have that
  $\tA \sigma = \tA^\prime\sigma$.
\end{definition}

\begin{remark}[Operational interpretation of equality upon input]
  The equality upon input of a state $\rho$ was originally introduced
  for quantum theory in Ref.~\cite{nielsen_chuang_2010}, where the
  authors extended the equality to the whole support of the chosen
  density matrix $\rho$. Within the OPT framework the equality upon
  input of $\rho$ is instead extended to the refinement set
  $\Ref_\rho$~\cite{dariano_chiribella_perinotti_2017}.  This choice
  can be easily motivated in operational terms: the equality
  $\tA=_\rho\tA^\prime$ means that the two maps $\tA$ and $\tA^\prime$
  are indistinguishable on the state $\rho$, independently of how it
  has been prepared. Suppose that the state $\rho$ is prepared by
  Alice as $\rho=\sum_{i\in\rX}\sigma_i$, for some refinement of
  $\rho$. Even Alice, using her knowledge of the preparation cannot
  distinguish between $\tA$ and $\tA^\prime$.
\end{remark}

From Proposition~\ref{cor:local-characterization} we know that the local
action of a map is sufficient to determine the map itself if the OPT
satisfies local discriminability (see Definition
\ref{def:local-discriminability}). However, for theories without local
discriminability the local action of a transformation might not be
sufficient to characterize it.  According to
Definition~\ref{def:transformations-upon-input-of-rho}, then, even if
$\tA=_\rho\tA^\prime$, still the maps $\tA$ and $\tA^\prime$ could act
differently upon input of dilations of $\rho$, namely it could be
$\tA\K{\Psi}_{\rA\rC}\neq \tA^\prime\K{\Psi}_{\rA\rC}$, for some
$\Psi\in \Ref_{D_{\rho}}$. In this case the difference between $\tA$
and $\tA^\prime$ would go undetected if their action on system $\rA$
only is considered. For this reason we introduce the notion of equal
transformations upon input of dilations of a state $\rho$.

\begin{definition}[Equal transformations upon
  input of $D_\rho$]\label{def:transformations-upon-input-of-D-rho}
  Given a state $\rho\in\Stset(\rA)$, we say that two transformations
  $\tA,\tA^\prime\in\Trnset(\rA,\rB)$ are equal upon input of
  $D_\rho$, and write $\tA=_{D_\rho}\tA^\prime$, if
  $\tA \K{\Psi}_{\rA\rC} = \tA^\prime\K{\Psi}_{\rA\rC}$ for every
  $\Psi\in \Ref_{D_{\rho}}$.
\end{definition}

Notice that the above definition requires that two transformations act
in the same way on the set $\Ref_{D_\rho}$.  Due to the absence of
no-restriction of preparation tests, it is not true in general that
$D_{\Ref_\rho}=\Ref_{D_\rho}$, as one might expect.  The only
inclusion that can be proved without further assumptions is
$\Ref_{D_\rho}\subseteq D_{\Ref_\rho}$, (see Lemma~\ref{lem:inclusion}
in Appendix~\ref{App:inclusion}).

Here we show that the two
Definitions~\ref{def:transformations-upon-input-of-rho} and
\ref{def:transformations-upon-input-of-D-rho} coincide for causal OPTs
with local discriminability. For this purpose we first need the
following lemma.

\begin{lemma}\label{lem:dilation} 
  In a causal OPT, if $\Psi\in\Stset{(\rA\rB)}$, with
  $\Psi\in \Ref_{D_\rho}$ for some $\rho\in\Stset{(\rA)}$, then
    \begin{equation}\label{Refrho}
      \{\B{b}_{\rB}\K{\Psi}_{\rA\rB}|b\in\Cntset{(\rB)}\}\subseteq\Ref_\rho.
\end{equation}
\end{lemma}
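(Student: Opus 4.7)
The plan is to unpack the definition of $\Psi\in\Ref_{D_\rho}$, invoke causality in the form ``unique deterministic effect'', and then exhibit $\B{b}_\rB\K{\Psi}_{\rA\rB}$ as one of the events of a preparation test of $\rA$ whose partial sum equals $\rho$. That will by definition place it in $\Ref_\rho$.

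First, I would unpack the hypothesis. By definition of $\Ref_{D_\rho}=\bigcup_{\Psi'\in D_\rho}\Ref_{\Psi'}$, there exists a dilation $\Psi'\in D_\rho$ of $\rho$ with $\Psi\prec\Psi'$. Unpacking refinement, this means there is a preparation test $\{\Psi_j\}_{j\in\rY}$ of $\rA\rB$ together with a subset $\rX\subseteq\rY$ such that $\Psi=\Psi_{j_0}$ for some $j_0\in\rX$ and $\Psi'=\sum_{j\in\rX}\Psi_j$. Invoking causality for the first time, the dilation relation $\Psi'\in D_\rho$ takes the concrete form $\B{e_\rB}_\rB\K{\Psi'}_{\rA\rB}=\K{\rho}_\rA$, where $e_\rB$ is the (unique, by causality) deterministic effect of $\rB$.

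Next, I would embed the arbitrary effect $b\in\Cntset(\rB)$ into an observation test. Since every effect arises as an event in some observation test, pick $\{b_k\}_{k\in\rZ}$ containing $b=b_{k_0}$; by causality, $\sum_{k\in\rZ}b_k=e_\rB$. Now form the composite preparation test of $\rA$ whose events are
\begin{equation*}
\bigl\{\B{b_k}_\rB\K{\Psi_j}_{\rA\rB}\bigr\}_{(j,k)\in\rY\times\rZ}.
\end{equation*}
This is a bona fide preparation test of $\rA$: it is the sequential composition of the preparation test $\{\Psi_j\}$ with the local observation test $\{b_k\}$ on the $\rB$ wire. Summing the events indexed by $(j,k)\in\rX\times\rZ$ and using bilinearity, one obtains $\B{\sum_k b_k}_\rB\K{\sum_{j\in\rX}\Psi_j}_{\rA\rB}=\B{e_\rB}_\rB\K{\Psi'}_{\rA\rB}=\K{\rho}_\rA$. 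Hence this partial sum exhibits $\{\B{b_k}\K{\Psi_j}\}_{(j,k)\in\rX\times\rZ}$ as a refinement of $\rho$; in particular the event at $(j_0,k_0)$, which is precisely $\B{b}_\rB\K{\Psi}_{\rA\rB}$, belongs to $\Ref_\rho$.

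The main subtlety, rather than obstacle, is pedantic care with the definition of $\Ref$: the subset $\rX\subseteq\rY$ of summed indices can be a proper subset of the full test, so one should not write $\Psi'=\sum_{j\in\rY}\Psi_j$. This is why the partial-sum bookkeeping at the end has to be done over $\rX\times\rZ$ rather than $\rY\times\rZ$. The second place where attention is needed is the role of causality: without it, the deterministic effect witnessing the dilation need not coincide with $\sum_k b_k$, and the clean collapse of the partial sum to $\rho$ would fail, which is also why the statement of the lemma is restricted to causal OPTs.
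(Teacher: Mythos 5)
Your proof is correct and follows essentially the same route as the paper's: both compose the bipartite preparation test refining a dilation of $\rho$ with a local observation test on $\rB$ containing $b$ (the paper uses the binary coarse-graining $\{b,e-b\}$ and lumps the refinement into $\{\Psi,\bar\Psi,\Lambda\}$, while you keep the full index sets $\rY\times\rZ$), and then identify the partial sum over the relevant indices with $\rho$ via causality. The extra bookkeeping with $\rX\subseteq\rY$ is just a more explicit rendering of what the paper compresses into the three-element test.
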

\begin{proof}
  Since $\Psi\in \Ref_{D_\rho}$ there exists a preparation test
  $\{\Psi,\bar\Psi,\Lambda\}\subseteq\Stset(\rA\rB)$ such that
  $\Psi+\bar\Psi\in D_\rho$. For an arbitrary $b\in\Cntset(\rB)$,
  thanks to causality that ensures the existence of a unique
  deterministic effect (see Definition \ref{def:causality}), one can
  construct the test
\begin{align*}
&\{\B{b}_\rB\K{\Psi}_{\rA\rB},\B{e-b}_\rB\K{\Psi}_{\rA\rB},
\\
&\B{b}_\rB\K{\bar\Psi}_{\rA\rB},\B{e-b}_\rB\K{\bar\Psi}_{\rA\rB},
\\
&\B{b}_\rB\K{\Lambda}_{\rA\rB},\B{e-b}_\rB\K{\Lambda}_{\rA\rB}\},
\end{align*}
where $e_\rB$ is the unique deterministic effect. Since the
coarse-graining of the first four elements is $\rho$, we conclude that
$\B{b}_\rB\K{\Psi}_{\rA\rB}\in\Ref_\rho$.
\end{proof}

\begin{proposition}\label{prop:local-characterization} 
  In a causal OPT with local discriminability, given two
  transformations $\tA,\tA^\prime\in\Trnset(\rA,\rB)$, the two
  conditions $\tA=_\rho \tA^\prime$ and $\tA=_{D_\rho} \tA^\prime$ are
  equivalent.
\end{proposition}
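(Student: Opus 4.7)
The statement is an equivalence, so there are two directions. One direction is essentially immediate from definitions: if $\tA=_{D_\rho}\tA^\prime$, then specialising to the trivial ancillary system $\rC=\rI$ we recognise that every $\sigma\in\Ref_\rho$ is a dilation of $\rho$ (with trivial purifying system), hence $\sigma\in\Ref_{D_\rho}$, and so $\tA\sigma=\tA^\prime\sigma$. This gives $\tA=_\rho\tA^\prime$ at no cost.

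The substantive direction is $\tA=_\rho\tA^\prime\Rightarrow \tA=_{D_\rho}\tA^\prime$. My plan is to reduce equality of the two output states on $\rB\rC$ to equality on all local product effects, which is legitimate because we have assumed local discriminability. Concretely, fix $\Psi\in\Ref_{D_\rho}$ with $\Psi\in\Stset(\rA\rC)$ and any effects $b\in\Cntset(\rB)$ and $c\in\Cntset(\rC)$. The pairing
\begin{equation*}
(\B{b}_\rB\otimes\B{c}_\rC)(\tA\otimes\tI_\rC)\K{\Psi}_{\rA\rC}
\end{equation*}
rearranges as $\B{b}_\rB\,\tA\bigl(\B{c}_\rC\K{\Psi}_{\rA\rC}\bigr)$, and analogously for $\tA^\prime$, so it suffices to prove $\tA\sigma_c=\tA^\prime\sigma_c$ where $\sigma_c:=\B{c}_\rC\K{\Psi}_{\rA\rC}\in\Stset(\rA)$.

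Here is where Lemma~\ref{lem:dilation} does the work: causality supplies a unique deterministic effect on $\rC$, and the lemma then asserts $\sigma_c\in\Ref_\rho$ for every $c\in\Cntset(\rC)$. The hypothesis $\tA=_\rho\tA^\prime$ therefore yields $\tA\sigma_c=\tA^\prime\sigma_c$, hence the local product pairings of $(\tA\otimes\tI_\rC)\Psi$ and $(\tA^\prime\otimes\tI_\rC)\Psi$ agree for all $b\otimes c$. By local discriminability these two states of $\rB\rC$ coincide, which is precisely $\tA=_{D_\rho}\tA^\prime$.

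The only potentially delicate step is the correct deployment of Lemma~\ref{lem:dilation}: one must remember that $\Psi$ lies in $\Ref_{D_\rho}$ (not in $D_\rho$ itself), which is the precise hypothesis of the lemma, and that causality is essential there to turn an arbitrary $b\in\Cntset(\rC)$ into a two-element test $\{b,e_\rC-b\}$ and assemble the refining preparation test. Once this is in place the argument is a short chain of rewrites, and no new machinery beyond Proposition~\ref{cor:local-characterization}'s underlying local tomography and the cited lemma is required.
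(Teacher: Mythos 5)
Your proof is correct and follows essentially the same route as the paper's: reduce to local product effects $b\otimes c$, invoke Lemma~\ref{lem:dilation} (with causality) to place $\B{c}_\rC\K{\Psi}_{\rA\rC}$ in $\Ref_\rho$, apply the hypothesis $\tA=_\rho\tA^\prime$, and conclude by local discriminability. The only nitpick is in the trivial direction: a refinement $\sigma$ of $\rho$ is not itself a dilation of $\rho$, but since $\rho$ is a dilation of itself (trivial ancilla) one has $\Ref_\rho\subseteq\Ref_{D_\rho}$, which is what your argument needs.
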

\begin{proof} We first prove that
  $\tA=_\rho \tA^\prime\Rightarrow \tA=_{D_\rho} \tA^\prime$. Consider
  an arbitrary $\Psi\in \Ref_{D_\rho}$, with
  $\rho\in\Stset{(\rA)}$. Let for example be
  $\Psi\in\Stset{(\rA\rC)}$. By hypothesis we have that
  $\tA=_{\rho}\tA^\prime$, namely
  $\tA\K{\sigma}_\rA=\tA^\prime\K{\sigma}_\rA$ for every
  $\sigma\in\Ref_\rho$. Then, due to Lemma~\ref{lem:dilation},
  $\forall b\in\Cntset{(\rB)}$, $\forall c\in\Cntset{(\rC)}$, we have
  that
  $\B{b}_\rB \B{c}_\rC (\tA\otimes\tI_\rC) \K{\Psi}_{\rA\rC}=\B{b}_\rB
  \B{c}_\rC (\tA^\prime\otimes\tI_\rC)\K{\Psi}_{\rA\rC}$,
  and by local discriminability (see Definition
  \ref{def:local-discriminability}) we conclude that
  $(\tA\otimes\tI_\rC)
  \K{\Psi}_{\rA\rC}=(\tA^\prime\otimes\tI_\rC)\K{\Psi}_{\rA\rC}$.
  Since this holds true for every $\Psi\in \Ref_{D_\rho}$, we conclude
  that $\tA=_{D_\rho}\tA^\prime$.  The converse implication
  $\tA=_{D_\rho} \tA^\prime\Rightarrow \tA=_\rho \tA^\prime$ is
  trivial.
\end{proof}

Dealing also with non-causal theories, where the role of states and
effects is interchangeable (see Remark~\ref{rem:states-effects}) it is
in order to introduce the counterpart of the above definition with
effects replacing states.  Accordingly we define equal transformations
upon output of dilations of an effect $a$.

\begin{definition}[Equal transformations upon output of
  $D_b$]\label{def:transformations-upon-output-of-D-a}
  Given an effect $b\in\Cntset(\rB)$, we say that two transformations
  $\tA,\tA^\prime\in\Trnset(\rA,\rB)$ are equal upon output of $D_b$,
  and write $\tA \tensor[_{D_b}]{=}{}\tA^\prime$, if
  $ \B{c}_{\rB\rC}\tA = \B{c}_{\rB\rC}\tA^\prime$ for every
  $c\in {\Ref_{D_b}}$.
\end{definition}

In the most general case one can define equality of two
transformations when both states and effects are limited to two given
subsets.

\begin{definition}[Equal transformations upon $X,Y$]\label{def:transformations-D-rho-a}
  We say that two transformations $\tA,\tA^\prime\in\Trnset(\rA,\rB)$
  are equal upon input of $X\subseteq D_{\Stset{(\rA)}}$ and upon
  output of $Y\subseteq D_{\Cntset{(\rB)}}$---or simply upon
  $X,Y$---and write $\tA\tensor[_Y]{=}{_X}\tA^\prime$, if
\begin{equation}
\begin{aligned}
  \Qcircuit @C=1em @R=.7em @! R {\multiprepareC{1}{\Psi}& \qw
    \poloFantasmaCn \rA &  \gate{\tA} & \qw \poloFantasmaCn \rB &\multimeasureD{1}{c} \\
    \pureghost\Psi &\qw& \qw \poloFantasmaCn \rC & \qw&\ghost{c}}
\end{aligned}=
\begin{aligned}
  \Qcircuit @C=1em @R=.7em @! R {\multiprepareC{1}{\Psi}& \qw
    \poloFantasmaCn \rA &  \gate{\tA^\prime} & \qw \poloFantasmaCn \rB &\multimeasureD{1}{c} \\
    \pureghost\Psi &\qw& \qw \poloFantasmaCn \rC & \qw&\ghost{c}}
\end{aligned}\;,
  \end{equation}  
  for every $\Psi\in\Ref_X$ and for every $c\in\Ref_Y$.
\end{definition}

As a special case, given a state $\rho\in\Stset(\rA)$, and an effect
$b\in\Cntset{(\rB)}$, two transformations
$\tA,\tA^\prime\in\Trnset(\rA,\rB)$ are equal upon $D_{\rho},D_b$ when
in the last definition we take $X=D_{\rho}$ and $Y=D_b$. Accordingly,
also Definitions~\ref{def:transformations-upon-input-of-D-rho} and
\ref{def:transformations-upon-output-of-D-a} are special cases of
Definition~\ref{def:transformations-D-rho-a}, with
$\tA=_{D_{\rho}}\tA^\prime$ corresponding to the choice
$X=D_{\rho}, Y=D_{\Cntset{(\rB)}}$ and
$\tA\tensor[_{D_b}]{=}{}\tA^\prime$ corresponding to the choice
$X=D_{\Stset{(\rA)}}, Y=D_b$.  Naturally, also the notion of equal
transformations $\tA=\tA^\prime$ is the one of
Definition~\ref{def:transformations-D-rho-a} with no restrictions on
the set states and effects, that is $X=D_{\Stset{(\rA)}}$,
$Y=D_{\Cntset{(\rB)}}$.

Based on the above identities of transformations, any property of an
OPT can be generalized in the upon $X,Y$ scenario. Here we only
present the case of two properties that will be used to derive the
following results.

In Definition~\ref{def:atomic-events} we introduced the notion of
atomic events, and we can provide a weaker version of the property of
atomicity for transformations.

\begin{definition}[Atomic and refinable transformation
  upon $X,Y$.]\label{def:atomicity-upon-rho-a}
  A transformation $\tA\in\Trnset(\rA,\rB)$ is atomic upon input of
  $X\subseteq D_{\Stset{(\rA)}}$ and upon output of
  $Y\subseteq D_{\Cntset{(\rB)}}$---or simply upon $X,Y$---if all its
  refinements are trivial upon $X,Y$, namely $\tB\prec\tA$ implies
  $\tB\tensor[_{Y}]{=}{_{X}}\lambda\tA$, $\lambda\in [0,1]$.
  Conversely, we say that an event is refinable upon $X,Y$ whenever it
  is not atomic upon $X,Y$
\end{definition}

Again, we mention as a special case the atomicity upon
$X=D_\rho,Y=D_b$ for some state $\rho\in\Stset{(\rA)}$ and effect
$b\in\Cntset{(\rB)}$, which in turn reduces to atomicity upon input of
$D_{\rho}$ (atomicity upon output of $D_{b}$) when
$X=D_\rho, Y=D_{\Cntset{(\rB)}}$ ($X=D_{\Stset{(\rA)}
},Y=D_b$).
Similarly, the usual notion of atomicity corresponds to the choice
$X=D_{\Stset{(\rA)}},Y=D_{\Cntset{(\rB)}}$.

Finally, the usual notion of faithful state and faithful effect (see
Definition~\ref{def:faith}) can be generalized to the notion of
faithful state and faithful effect upon input of $X$ and upon output
of $Y$.

\begin{definition}[Faithful state and faithful effect upon
  $X,Y$]\label{def:faithuponD}
  Consider two arbitrary transformations
  $\tA,\tA^\prime\in\Trnset{(\rA,\rB)}$ and two subsets
  $X\subseteq D_{\Stset{(\rA)}}$ and $Y\subseteq D_{\Cntset{(\rB)}}$.
  A state $\Psi\in\Stset{(\rA\rC)}$ is faithful upon input of $X$ and
  upon output of $Y$---or simply upon $X,Y$---if the condition
  $\tA\K{\Psi}_{\rA\rC}= \tA^\prime\K{\Psi}_{\rA\rC}$ implies
  $\tA\tensor[_{Y}]{=}{_{X}}\tA^\prime$.  Analogously, an effect
  $d\in\Cntset{(\rB\rC)}$ is faithful upon $X,Y$ if the condition
  $\B{d}_{\rB\rC}\tA=\B{d}_{\rB\rC}\tA^\prime$ implies
  $\tA\tensor[_{Y}]{=}{_{X}}\tA^\prime$.
\end{definition}

The case of faithful state and faithful effect of
Definition~\ref{def:faith} corresponds to the choice
$X=D_{\Stset{(\rA)}},Y=D_{\Cntset({\rB})}$ in the above definition.

\subsection{Information and disturbance upon
  X,Y}\label{sec:info-dist-XY}

We start generalizing Definition~\ref{def:nodist} of non-disturbing test:

\begin{definition}[Non-disturbing test upon $X,Y$]\label{def:nodist-XY}
  Consider a test $\{\tA_i\}_{i\in\rX }$ on system $\rA$. We say that
  the test is non-disturbing upon input of
  $X\subseteq D_{\Stset{(\rA)}}$ and upon output of
  $Y\subseteq D_{\Cntset{(\rA)}}$---or simply upon $X,Y$---if
  \begin{equation}\label{eq:non-dsturbing-ui-up}
    \sum_i\tA_i\tensor[_Y]{=}{_X} \tI _\rA.
\end{equation}
\end{definition}

According to Definition~\ref{def:nodist} a test $\{\tA_i\}_{i\in\rX }$
is \emph{non-disturbing} if it is operationally equal to the identity
transformation of system $\rA$; this is a special case of the above
definition when $X=D_{\Stset{(\rA)}}$, and $Y=D_{\Cntset{(\rA)}}$.

We can now determine if a test
$\{\tA_i\}_{i\in\rX }\subseteq\Trnset(\rA,\rC)$ provides information
upon $X\subseteq D_{\Stset(\rA)}$, $Y\subseteq D_{\Cntset(\rC)}$. We
first observe that the prescription upon $X,Y$ establishes that (i) we
are only interested in getting information on states in $X$ and on
effects in $Y$, and (ii) we can only use preparations that involve
states in $X$ and measurements that involve effects in $Y$ in order to
extract the information.  However, a test containing a state in $X$
(or an effect in $Y$) may involve events in $\widehat X$
($\widehat Y$), where the set $\widehat Z$ is the coexistent
completion of the set $Z$ as in
Definition~\ref{def:compatible-events}.

Let us focus on the scheme in Eq.  \eqref{eq:ni-scheme}, with the
state in the set $X$ and effect in the set $Y$.  A test that directly
provides information on $X$ may give different probability
distributions for different elements of $X$, given that one measures
an effect $c\in Y$. However, the test could also provide information
about $X$ indirectly. This is the case, for example, when for every
test $\{c_k\}_{k\in \rZ}\subseteq\widehat Y$, one has
$p(j,i|k)=p(i)p(j)$ for $\Psi_i\in X$, while factorisation does not
occur for $\Psi_i\in\widehat X\setminus X$. A similar situation can
occur when the information is about $Y$, with the roles of preparation
and measurement exchanged.

We thus generalize Definition~\ref{def:noinfo} of no-information test
as follows:
\begin{definition}[No-information test upon
  $X,Y$]\label{def:noinfo-XY}
  A test $\{\tA_i\}_{i\in\rX }$ with events $\tA_i\in\Trnset(\rA,\rC)$
  is a no-information test upon input of
  $X\subseteq D_{\Stset{(\rA)}}$ and upon output of
  $Y\subseteq D_{\Cntset{(\rC)}}$---or simply upon $X,Y$---if for
  every choice of deterministic effect $e_{\rC\rB}\in \widehat{Y}$ and
  deterministic state $\omega_{\rA\rB}\in \widehat{X}$, there exists a
  deterministic effect $f_{\rA\rB}$ and a deterministic state
  $\nu_{\rC\rB}$ such that for every $i\in\rX$ one has
  \begin{align}\label{eq:no-info-rho}
    \B{e}_{\rC\rB}\tA_i=_{\widehat{X}} p_i(e)
    \B{f}_{\rA\rB},\\\label{eq:no-info-a}
    \tA_i \K{\omega}_{\rA\rB}\tensor[_{\widehat{Y}}]{=}{} q_i(\omega)
    \K{\nu}_{\rC\rB}.
\end{align}
\end{definition}
As a special case the above definition coincides with
Definition~\ref{def:noinfo}, corresponding to $X=D_{\Stset{(\rA)}}$,
and $Y=D_{\Cntset{(\rA)}}$. Notice that in this case it is also
$\widehat{X}=X$ and $\widehat{Y}=Y$.

According to Eq.~\eqref{eq:no-info-rho}, the test
$\{\tA_i\}_{i\in\rX }$ does not provide information upon input of
$\widehat{X}$ once the observations are limited to $\widehat{Y}$
($e_{\rC\rB}\in \widehat{Y}$). However, the probability distribution
$p_i(e)$ might in principle provide information about the effect
$e$. On the other hand according to Eq.~\eqref{eq:no-info-a}, the test
$\{\tA_i\}_{i\in\rX }$ does not provide information upon output of
$\widehat{Y}$, once the preparations are limited to $\widehat{X}$
($\omega_{\rA\rB}\in \widehat{X}$), while the probability distribution
$q_i(\omega)$ might in principle provide information about the state
$\omega$. The conjunction of the two conditions implies that
no-information is provided by the test about $\widehat{X}$ and
$\widehat{Y}$:

\begin{lemma}\label{lem:noinfo-XY}
  Let the test $\{\tA_i\}_{i\in\rX }$ with events
  $\tA_i\in\Trnset(\rA,\rC)$ be a no-information test upon $X,Y$. Then
  one has
  \begin{align}
    \B{e}_{\rC\rB}\tA_i=_{\widehat{X}} r_i
    \B{f}_{\rA\rB},\label{eq:noinfo1}\\
    \tA_i \K{\omega}_{\rA\rB}\tensor[_{\widehat{Y}}]{=}{} r_i
    \K{\nu}_{\rC\rB}.
\end{align}
\end{lemma}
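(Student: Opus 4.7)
The plan is to imitate the argument used in Lemma~\ref{lem:no-info}, adjusted so that the restricted operational equalities $=_{\widehat{X}}$ and $\tensor[_{\widehat{Y}}]{=}{}$ are used only where they apply. Concretely, I would fix an arbitrary $i\in\rX$, pick any deterministic effect $e_{\rC\rB}\in\widehat{Y}$ and any deterministic state $\omega_{\rA\rB}\in\widehat{X}$ as in Definition~\ref{def:noinfo-XY}, and then evaluate the scalar $\B{e}_{\rC\rB}\tA_i\K{\omega}_{\rA\rB}$ in two different ways.

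First, by Eq.~\eqref{eq:no-info-rho} (which applies precisely because $\omega\in\widehat{X}$), the scalar equals $p_i(e)\B{f}_{\rA\rB}\K{\omega}_{\rA\rB}$; since the pairing of a deterministic effect with a deterministic state gives $1$ (the unique outcome of the corresponding closed singleton circuit has probability one), this reduces to $p_i(e)$. Second, by Eq.~\eqref{eq:no-info-a} (which applies precisely because $e\in\widehat{Y}$), the same scalar equals $q_i(\omega)\B{e}_{\rC\rB}\K{\nu}_{\rC\rB}=q_i(\omega)$ by the same singleton-test argument. Equating the two expressions yields $p_i(e)=q_i(\omega)$ for every admissible $e$ and $\omega$, so both sides must coincide with a common constant $r_i$ depending only on $i$. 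Substituting $p_i(e)\mapsto r_i$ and $q_i(\omega)\mapsto r_i$ in the defining identities of Definition~\ref{def:noinfo-XY} gives exactly \eqref{eq:noinfo1} and its effect-side companion.

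The only potential obstacle is bookkeeping rather than substance: one must check that the restrictions on the operational equalities do not obstruct the sandwich argument, i.e.\ that the deterministic $\omega\in\widehat{X}$ required to collapse $p_i(e)$ and the deterministic $e\in\widehat{Y}$ required to collapse $q_i(\omega)$ are both available. This is guaranteed by Definition~\ref{def:noinfo-XY} itself, which quantifies over exactly these deterministic elements of $\widehat{X}$ and $\widehat{Y}$. No further hypothesis about convexity, causality, local discriminability, or purification is needed, in keeping with the fully general OPT setting of this section.
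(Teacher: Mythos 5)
Your proof is correct and is essentially the paper's own argument: the paper likewise sandwiches $\tA_i$ between the deterministic $e\in\widehat{Y}$ and $\omega\in\widehat{X}$ to obtain $\B{e}_{\rC\rB}\tA_i\K{\omega}_{\rA\rB}=p_i(e)=q_i(\omega)=:r_i$, using that a deterministic effect paired with a deterministic state gives unity. Your extra bookkeeping about where the restricted equalities $=_{\widehat{X}}$ and $\tensor[_{\widehat{Y}}]{=}{}$ may be applied is exactly the point the paper compresses into the phrase ``remembering that $\omega_{\rA\rB}\in\widehat{X}$, $e_{\rC\rB}\in\widehat{Y}$.''
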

\begin{proof}
  By Eqs.~\eqref{eq:no-info-rho} and~\eqref{eq:no-info-a}, and
  remembering that $\omega_{\rA\rB}\in \widehat{X}$,
  $e_{\rC\rB}\in \widehat{Y}$, one has
  \begin{align*}
    \B{e}_{\rC\rB}\tA_i\K{\omega}_{\rA\rB}= p_i(e)=q_i(\omega)=r_i,
\end{align*}
where we used the fact that $e,f$ and $\omega,\nu$ are respectively
deterministic effects and deterministic states. 
\end{proof}

Here we state the condition of no-information without disturbance upon
$X,Y$:
\begin{definition}[No-information without disturbance upon $X,Y$]
  Consider a system $\rA$ and two subsets
  $X\subseteq D_{\Stset{(\rA)}}$, $Y\subseteq D_{\Cntset{(\rA)}}$.
  Then the OPT satisfies no-information without disturbance upon input
  of $X$ and upon output of $Y$---or simply upon $X,Y$---if for every
  test $\{\tA_i\}_{i\in\rX }\subseteq\Trnset(\rA)$ that is
  non-disturbing upon $X,Y$, the test does not provide information
  upon $X,Y$.
\end{definition}

Clearly the above definition generalizes Definition~\ref{def:niwd}
which corresponds to the choice $X=D_{\Stset{(\rA)}}$, and
$Y=D_{\Cntset{(\rA)}}$, namely the theory satisfies no-information
without disturbance and any informative test necessary disturbs.

All the results presented in Section~\ref{sec:niwd} on no-information
without disturbance can now be extended to the ``upon $X,Y$''
scenario. The result of Theorem~\ref{thm:niwd}, proving the atomicity
of the identity as a necessary and sufficient condition for
no-information without disturbance, is generalized by the following
theorem:

\begin{theorem}\label{thm:niwdu2}
  An OPT satisfies no-information without disturbance upon
  $\widehat{X},\widehat{Y}$, with $X\subseteq D_{\Stset{(\rA)}}$,
  $Y\subseteq D_{\Cntset{(\rA)}}$, if and only if the identity
  $\tI_\rA$ is atomic upon $\widehat{X},\widehat{Y}$.
\end{theorem}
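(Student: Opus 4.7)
My plan is to mirror the proof of Theorem~\ref{thm:niwd}, replacing every unrestricted equality by its ``upon $\widehat{X},\widehat{Y}$'' counterpart, and being careful that all witnesses (pure states, separating effects, probabilities $r_i$) can be found inside the restricted regions.

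The backward direction is the easy one. Assume $\tI_\rA$ is atomic upon $\widehat{X},\widehat{Y}$ and let $\{\tA_i\}_{i\in\rX}$ be a non-disturbing test upon $\widehat{X},\widehat{Y}$, i.e.\ $\sum_i \tA_i \tensor[_{\widehat{Y}}]{=}{_{\widehat{X}}}\tI_\rA$. By Definition~\ref{def:atomicity-upon-rho-a} each refining event satisfies $\tA_i \tensor[_{\widehat{Y}}]{=}{_{\widehat{X}}}r_i\,\tI_\rA$, for some $r_i\in[0,1]$ with $\sum_i r_i=1$. Pairing with any deterministic $e\in\widehat{Y}$ (resp.\ $\omega\in\widehat{X}$) immediately delivers the two no-information identities of Definition~\ref{def:noinfo-XY} with $f=e$, $\nu=\omega$, and $p_i(e)=q_i(\omega)=r_i$.

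For the forward direction, suppose no-information without disturbance upon $\widehat{X},\widehat{Y}$ holds and let $\tI_\rA \tensor[_{\widehat{Y}}]{=}{_{\widehat{X}}}\sum_{i\in\rX}\tA_i$ be any refinement of the identity in that sense. The test $\{\tA_i\}$ is non-disturbing upon $\widehat{X},\widehat{Y}$ and hence a no-information test there, so Definition~\ref{def:noinfo-XY} applies. Summing the two no-information equations over $i$ and using the non-disturbance identifies the ancillary deterministic effect and state: we obtain $\B{f}_{\rA\rB}=_{\widehat{X}}\B{e}_{\rA\rB}$ and $\K{\nu}_{\rA\rB}\tensor[_{\widehat{Y}}]{=}{}\K{\omega}_{\rA\rB}$, with $\sum_i p_i(e)=\sum_i q_i(\omega)=1$. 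Lemma~\ref{lem:noinfo-XY} then yields $p_i(e)=q_i(\omega)=:r_i$, so that
\begin{equation*}
\B{e}_{\rA\rB}\tA_i =_{\widehat{X}} r_i\,\B{e}_{\rA\rB},\qquad
\tA_i\K{\omega}_{\rA\rB}\tensor[_{\widehat{Y}}]{=}{} r_i\,\K{\omega}_{\rA\rB},
\end{equation*}
for every deterministic $e\in\widehat{Y}$ and $\omega\in\widehat{X}$.

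The core step is to promote these ``deterministic'' equalities to the full equality $\tA_i\tensor[_{\widehat{Y}}]{=}{_{\widehat{X}}}r_i\tI_\rA$. Here I would repeat the pure-state trick from Theorem~\ref{thm:niwd}: take any atomic $\Psi\in\Ref_{\widehat{X}}$; from $\sum_i\tA_i\K{\Psi}_{\rA\rB}\tensor[_{\widehat{Y}}]{=}{}\K{\Psi}_{\rA\rB}$ and atomicity of $\Psi$, every $\tA_i\K{\Psi}_{\rA\rB}\tensor[_{\widehat{Y}}]{=}{}\lambda_i(\Psi)\K{\Psi}_{\rA\rB}$; pair with a deterministic $e^\Psi\in\widehat{Y}$ not annihilating $\Psi$, compare with the first boxed identity above, and read off $\lambda_i(\Psi)=r_i$. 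Doing the mirror argument on $\widehat{Y}$ and invoking the span of atomic/pure elements in $\Ref_{\widehat{X}}$ and $\Ref_{\widehat{Y}}$ gives $\tA_i\tensor[_{\widehat{Y}}]{=}{_{\widehat{X}}} r_i\tI_\rA$, which is exactly atomicity of $\tI_\rA$ upon $\widehat{X},\widehat{Y}$.

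The main obstacle is twofold. First, the argument uses the coexistent completions $\widehat{X},\widehat{Y}$ rather than $X,Y$ themselves precisely so that the deterministic dilations $e,\omega$ and the tests built from them belong to bona fide tests of the theory; this is why Definition~\ref{def:noinfo-XY} was formulated with $\widehat{X},\widehat{Y}$ and must be used carefully at every restriction step. Second, the coefficients $r_i$ must be shown to be independent of the ancillary system $\rB$ used to dilate; this is the same technical point handled in Appendix~\ref{App:pulci} for the unrestricted theorem, and the same argument transports verbatim once all equalities are restricted to $\widehat{X},\widehat{Y}$.
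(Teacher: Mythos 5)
Your overall strategy is exactly the paper's: the proof is obtained by transporting the argument of Theorem~\ref{thm:niwd} to the restricted setting, and your backward direction and the identification $f=e$, $\nu=\omega$, $p_i(e)=q_i(\omega)=r_i$ via Lemma~\ref{lem:noinfo-XY} are fine. However, there is one genuine gap, and it sits at precisely the single point where the restricted theorem differs non-trivially from the unrestricted one. In your ``pure-state trick'' you write: \emph{pair with a deterministic $e^\Psi\in\widehat{Y}$ not annihilating $\Psi$}. In Theorem~\ref{thm:niwd} the existence of such an $e^\Psi$ is guaranteed because one may range over \emph{all} deterministic effects, and a state annihilated by all of them would be the null state. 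Upon restriction this fails: the only deterministic effects you are allowed to use lie in $\widehat{Y}$, and it is perfectly possible that $\SC{e}{\Psi}=0$ for \emph{every} deterministic $e\in\widehat{Y}$ while $\Psi$ is a nonzero state of the theory. Your argument, as written, simply has nothing to say about such $\Psi$, so the conclusion $\lambda_i(\Psi)=r_i$ is not established for them and the promotion to $\tA_i\tensor[_{\widehat Y}]{=}{_{\widehat X}} r_i\tI_\rA$ does not follow.

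The repair is short but must be stated: if $\SC{e}{\Psi}=0$ for all deterministic $e\in\widehat{Y}$, then $\Psi$ is equal to the null state \emph{upon} $\widehat{X},\widehat{Y}$ (every effect in $\Ref_{\widehat Y}$ is dominated by a deterministic effect in $\widehat Y$, hence also pairs to zero with $\Psi$), so the required identity $\tA_i\K{\Psi}\tensor[_{\widehat Y}]{=}{}r_i\K{\Psi}$ holds trivially for such states and they can be discarded; the remaining states admit a non-annihilating $e^\Psi\in\widehat Y$ and your argument applies. This is exactly the one sentence the paper adds to the proof of Theorem~\ref{thm:niwd} to obtain Theorem~\ref{thm:niwdu2}. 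It is telling that the two ``obstacles'' you explicitly flag at the end (the role of the coexistent completions in building legitimate tests, and the $\rB$-independence of the $r_i$ from Appendix~\ref{App:pulci}) are both real but routine, while the vanishing-pairing issue --- the actual new content of the restricted statement --- goes unaddressed.
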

\begin{proof}
  The proof follows the lines of that of Theorem~\ref{thm:niwd}.  In
  the present case Eq.~\eqref{eq:cond2} for pure state in
  $\widehat{X}$ holds upon $\widehat Y$.  Now, if we apply on both
  sides a deterministic effect $e_{\rA\rB}\in\widehat Y$, using
  Eq.~\eqref{eq:noinfo1} it may happen that $\SC{e}{\Psi}=0$. The case
  where this happens for every $e\in\widehat Y$ corresponds to a state
  $\Psi$ which is equal to the null state upon
  $\widehat{X},\widehat{Y}$. Considering the remaining cases one
  concludes that
  $\tA_i\K\rho \tensor[_{\widehat Y}]{=}{} r_i\K{\rho}$,
  $\forall\rho\in \widehat{X}$. Similarly, one concludes that
  $\B{b}\tA_i\tensor[]{=}{_{\widehat X}} r_i\B{b}$,
  $\forall b\in \widehat{Y}$.  These two last conditions imply that
  $\tA_i\tensor[_{\widehat Y}]{=}{_{\widehat X}} r_i\tI_\rA$,
  corresponding to atomicity of $\tI_\rA$ upon
  $\widehat{X}, \widehat{Y}$.  Also in this case the proof follows
  straightforwardly from that of Theorem~\ref{thm:niwd}. The opposite
  implication is trivial.
\end{proof}

Analogously, one can generalize the other
necessary and sufficient conditions of Section~\ref{sec:niwd}.  One
can also provide only sufficient conditions for no-information without
disturbance. An example is given in the following proposition:

\begin{proposition}
  An OPT satisfies no-information without disturbance upon
  $\widehat{X},\widehat{Y}$, with $X\subseteq D_{\Stset{(\rA)}}$,
  $Y\subseteq D_{\Cntset{(\rA)}}$, if there exists a pure state
  $\Psi\in\Ref_{\widehat{X}}$ that is faithful upon
  $\widehat{X},\widehat{Y}$. Similarly, an OPT satisfies
  no-information without disturbance upon $\widehat{X},\widehat{Y}$ if
  there exists an atomic effect $b\in\Ref_{\widehat{Y}}$ that is
  faithful upon $\widehat{X},\widehat{Y}$.
\end{proposition}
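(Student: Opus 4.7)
The plan is to reduce the claim to Theorem~\ref{thm:niwdu2}, which states that no-information without disturbance upon $\widehat{X},\widehat{Y}$ is equivalent to the atomicity of $\tI_\rA$ upon $\widehat{X},\widehat{Y}$. So it suffices to show that, under the stated hypothesis, every refinement of $\tI_\rA$ is trivial upon $\widehat{X},\widehat{Y}$.

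Concretely, I would take an arbitrary refinement $\{\tA_i\}_{i\in\rX}$ of $\tI_\rA$, i.e. a collection with $\tI_\rA=\sum_i\tA_i$ drawn from some test. Applying this equation to the pure state $\Psi\in\Stset(\rA\rC)$ gives $\K{\Psi}=\sum_i \tA_i\K{\Psi}$, so $\{\tA_i\K{\Psi}\}_i$ is a refinement of the state $\K{\Psi}$. Purity of $\Psi$ then forces each refining event to be proportional to $\Psi$, namely $\tA_i\K{\Psi}=\lambda_i\K{\Psi}$ with $\lambda_i\in[0,1]$ and $\sum_i\lambda_i=1$. Rewriting this as $\tA_i\K{\Psi}=\lambda_i\tI_\rA\K{\Psi}$ and invoking that $\Psi$ is faithful upon $\widehat{X},\widehat{Y}$ (Definition~\ref{def:faithuponD}), we obtain $\tA_i\tensor[_{\widehat{Y}}]{=}{_{\widehat{X}}}\lambda_i\tI_\rA$. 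This is precisely the triviality of the refinement upon $\widehat{X},\widehat{Y}$, so $\tI_\rA$ is atomic upon $\widehat{X},\widehat{Y}$, and Theorem~\ref{thm:niwdu2} yields the thesis.

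The second half of the proposition, concerning a faithful atomic effect $b\in\Ref_{\widehat{Y}}$, is formally dual and can be obtained by the state--effect symmetry of Remark~\ref{rem:states-effects}: the analogous argument replaces the refinement of $\K{\Psi}$ induced by $\{\tA_i\}$ with the refinement of $\B{b}$ given by $\{\B{b}\tA_i\}_i$, atomicity of $b$ provides $\B{b}\tA_i=\mu_i\B{b}=\mu_i\B{b}\tI_\rA$, and faithfulness of $b$ upon $\widehat{X},\widehat{Y}$ propagates this to $\tA_i\tensor[_{\widehat{Y}}]{=}{_{\widehat{X}}}\mu_i\tI_\rA$.

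I do not anticipate a serious obstacle: the argument is essentially a localised version of the atomicity-of-identity step already used in the proof of Theorem~\ref{thm:niwd}, with the role played there by an arbitrary pure state $\Psi$ (combined with the strong notion of faithfulness implicit in the unrestricted setting) now played explicitly by the hypothesised pure faithful $\Psi\in\Ref_{\widehat{X}}$. The only point that deserves care is keeping track of the ambient system $\rC$ on which $\Psi$ lives and making sure that the equality $\tA_i\K{\Psi}=\lambda_i\tI_\rA\K{\Psi}$ is read as an identity on $\rA\rC$ before faithfulness is applied; once this is done, the reduction to Theorem~\ref{thm:niwdu2} is immediate.
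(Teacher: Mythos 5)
Your core mechanism---purity of $\Psi$ forces $\tA_i\K{\Psi}=\lambda_i\K{\Psi}$, and faithfulness upon $\widehat{X},\widehat{Y}$ then upgrades this to $\tA_i\tensor[_{\widehat{Y}}]{=}{_{\widehat{X}}}\lambda_i\tI_\rA$---is exactly the paper's. The difference is the routing, and it hides a gap. You only ever consider \emph{exact} refinements of the identity, $\tI_\rA=\sum_i\tA_i$, prove that these are trivial upon $\widehat{X},\widehat{Y}$, and then hand the rest to Theorem~\ref{thm:niwdu2}. But the proposition is about tests that are non-disturbing \emph{upon} $\widehat{X},\widehat{Y}$, i.e.\ tests satisfying only the weaker condition $\sum_i\tA_i\tensor[_{\widehat{Y}}]{=}{_{\widehat{X}}}\tI_\rA$ of Definition~\ref{def:nodist-XY}; such a test need not be an exact refinement of $\tI_\rA$, so your atomicity argument does not reach it, and the direction of Theorem~\ref{thm:niwdu2} you invoke (atomicity upon $\widehat{X},\widehat{Y}$, which as defined quantifies only over exact refinements, implies no-information for all tests non-disturbing upon $\widehat{X},\widehat{Y}$) is precisely the part the paper dismisses as ``trivial'' without addressing this mismatch. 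The tell-tale sign is that your argument never uses the hypothesis $\Psi\in\Ref_{\widehat{X}}$.

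The paper's proof is direct and shows how that hypothesis closes the gap: given an arbitrary test with $\sum_i\tA_i\tensor[_{\widehat{Y}}]{=}{_{\widehat{X}}}\tI_\rA$, the membership $\Psi\in\Ref_{\widehat{X}}$ means the upon-$\widehat{X}$ equality already applies to $\Psi$ itself, yielding the exact identity $\sum_i\tA_i\K{\Psi}=\K{\Psi}$ on $\rA\rC$; from there purity and faithfulness run exactly as in your argument, giving $\tA_i\tensor[_{\widehat{Y}}]{=}{_{\widehat{X}}}p_i\tI_\rA$ and hence the no-information property directly, with no appeal to Theorem~\ref{thm:niwdu2}. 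Your proof becomes correct, and essentially identical to the paper's, once you replace ``take an exact refinement of $\tI_\rA$'' with ``take a test non-disturbing upon $\widehat{X},\widehat{Y}$'' and use $\Psi\in\Ref_{\widehat{X}}$ at the first step. The dual statement for a faithful atomic effect $b\in\Ref_{\widehat{Y}}$ is handled by analogy in both treatments.
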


\begin{proof} We explicitly prove the case of faithful state, since
  that of faithful effect follows by analogy.  Given a system $\rA$,
  let $\Psi\in \Ref_{\widehat{X}}$ be pure and faithful upon
  $\widehat{X},\widehat{Y}$ (see Definition~\ref{def:faithuponD}). Now
  let the test $\{\tA_i\}_{i\in\rX } \in\Trnset(\rA)$ be
  non-disturbing upon $\widehat{X},\widehat{Y}$, namely
  $\sum_i\tA_i\tensor[_{\widehat{Y}}]{=}{_{\widehat{X}}} \tI _\rA$.
  Then, since $\Psi\in \Ref_{\widehat{X}}$ we have
  $\sum_i\tA_i\K\Psi=\K\Psi$, and since $\Psi$ is pure, there exists a
  set of probabilities $\{p_i\}_{i\in\rX }$ such that
  $\tA_i\K{\Psi}=p_i\K{\Psi}$. However, due to the faithfulness of
  $\Psi$, the map $\tA\mapsto \tA\K{\Psi}$ is injective upon
  $\widehat{X},\widehat{Y}$, and we conclude that
  $\tA_i\tensor[_{\widehat{Y}}]{=}{_{\widehat{X}}}p_i \tI_\rA$, and,
  by definition, the test $\{\tA_i\}_{i\in\rX } \in\Trnset(\rA)$ does
  not extract information upon $\widehat{X},\widehat{Y}$.
\end{proof}

As a corollary one has a sufficient condition for no-information
without disturbance with no restrictions on
inputs and outputs:

\begin{corollary}\label{prop:faith-no-info}
  An OPT satisfies no-information without disturbance if for every
  system $\rA$ there exists a pure state faithful for $\rA$ or an
  atomic effect faithful for $\rA$.
\end{corollary}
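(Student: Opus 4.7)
The plan is to obtain the corollary as a direct specialisation of the preceding proposition, taking $X=D_{\Stset(\rA)}$ and $Y=D_{\Cntset(\rA)}$. With these choices the ``upon $X,Y$'' framework collapses to the unrestricted one used in Definition~\ref{def:niwd}: no-information without disturbance upon $\widehat X,\widehat Y$ becomes plain no-information without disturbance, since the coexistent completion of the full set of dilated states (respectively effects) coincides with the set itself, and consequently the relations $=_{\widehat X}$ and $\tensor[_{\widehat Y}]{=}{}$ reduce to ordinary equality of transformations.

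Under this specialisation I would first observe that Definition~\ref{def:faithuponD} of ``faithful upon $\widehat X,\widehat Y$'' reduces exactly to Definition~\ref{def:faith} of ``faithful for $\rA$'', since the implication to be verified becomes $\tA\K{\Psi}_{\rA\rC}=\tA'\K{\Psi}_{\rA\rC}\Rightarrow \tA=\tA'$, with no restriction on the test states or effects. Next, I would note that membership of a pure state $\Psi$ in $\Ref_{\widehat X}=\Ref_{D_{\Stset(\rA)}}$ is automatic: any state of the form required (in particular, any state of any composite system $\rA\rC$ of which $\Psi$ is an element) lies in $D_{\Stset(\rA)}$, and every state trivially refines itself. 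The same observation handles atomic effects $b$ with respect to $\Ref_{\widehat Y}=\Ref_{D_{\Cntset(\rA)}}$.

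With these identifications in place, the hypothesis of the corollary -- existence, for every system $\rA$, of either a pure faithful state or an atomic faithful effect -- becomes precisely the hypothesis of the proposition applied to the full sets. Invoking the proposition system by system then yields no-information without disturbance on every system, which is by Definition~\ref{def:niwd} what we had to prove.

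There is essentially no obstacle beyond the bookkeeping of identifying the restricted and unrestricted notions; the only point that needs explicit verification is the coincidence $\widehat X=X$ and $\widehat Y=Y$ for $X=D_{\Stset(\rA)}$ and $Y=D_{\Cntset(\rA)}$, already remarked in the text just after Definition~\ref{def:noinfo-XY}. Once that is stated, the argument reduces to quoting the proposition.
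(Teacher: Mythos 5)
Your proposal is correct and follows exactly the route the paper intends: the corollary is stated as an immediate specialisation of the preceding proposition to $X=D_{\Stset(\rA)}$ and $Y=D_{\Cntset(\rA)}$, using the observation (already made after Definition~\ref{def:noinfo-XY}) that in this case $\widehat X=X$ and $\widehat Y=Y$, so the restricted notions collapse to the unrestricted ones. Your additional checks that faithfulness upon $\widehat X,\widehat Y$ reduces to Definition~\ref{def:faith} and that membership in $\Ref_{\widehat X}$ (resp.\ $\Ref_{\widehat Y}$) is automatic are exactly the bookkeeping the paper leaves implicit.
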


\section{Outlook on no-information without disturbance}\label{sec:VI}

In this last section we analyse the relation between
no-information without disturbance and other properties of operational
probabilistic theories. We focus on local discriminability (see
Definition~\ref{def:local-discriminability}) and purification (see
Definitions~\ref{def:states-purification} and
\ref{def:effects-purification}) that, being typical quantum features,
are commonly associated with no-information without disturbance. Here
instead we show that no-information without disturbance can actually
be satisfied independently of the above two properties.

\subsubsection*{No-information without disturbance vs. purification}

The following proposition proves that the probabilistic
theory~\cite{PhysRevA.75.032304,PhysRevA.71.022101,d2010testing,Short_2010,PhysRevLett.119.020401}
corresponding to the PR-boxes model, originally introduced in
Ref.~\cite{popescu1994quantum}, satisfies no-information without
disturbance.
\begin{proposition}\label{PR:prboxes}
  The PR-boxes theory satisfies no-information without disturbance.
\end{proposition}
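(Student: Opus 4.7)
The plan is to prove the statement by verifying the criterion of Theorem~\ref{thm:niwd}, namely that the identity transformation $\tI_\rA$ is atomic for every system $\rA$ of the PR-boxes theory. I would first recall the structure of the theory: a single elementary system (``gbit'') has a state space that is a square with four pure states $\omega_{00},\omega_{01},\omega_{10},\omega_{11}$ and a dual effect space also forming a square; composite systems are built according to the no-signalling principle, giving the familiar box-world polytopes (with PR-boxes among the extremal bipartite states); transformations are positive maps preserving the state space of every composite system. Tests are collections of such maps that sum, on every state, to a deterministic transformation.

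For a single gbit, the proof of atomicity rests on two ingredients. First, the four pure states are extremal rays of the cone generated by $\Stset(\rA)$, so for any refinement $\tI_\rA=\sum_{i\in\rX}\tA_i$, each $\tA_i\K{\omega_{jk}}$ must be proportional to $\K{\omega_{jk}}$: writing $\tA_i\K{\omega_{jk}}=\mu^{(i)}_{jk}\K{\omega_{jk}}$, the completeness condition $\sum_i\mu^{(i)}_{jk}=1$ is automatic. Second, the four pure states are affinely dependent via the parallelogram identity $\K{\omega_{00}}+\K{\omega_{11}}=\K{\omega_{01}}+\K{\omega_{10}}$; linearity of $\tA_i$ forces $\mu^{(i)}_{00}\K{\omega_{00}}+\mu^{(i)}_{11}\K{\omega_{11}}=\mu^{(i)}_{01}\K{\omega_{01}}+\mu^{(i)}_{10}\K{\omega_{10}}$, and since any three of the pure states are linearly independent this yields a common coefficient $\mu^{(i)}_{jk}=\mu^{(i)}$. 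Hence $\tA_i=\mu^{(i)}\tI_\rA$, and the refinement is trivial.

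For composite systems the same scheme applies, but the analysis of the affine dependences among the extremal states of the no-signalling polytope is more elaborate: one has local product states, classical correlations, and PR-boxes, all of which are extremal. Extremality again forces each $\tA_i$ to act as a scalar on each pure state, and the linear relations among the extremal no-signalling boxes (generated by the elementary ``parallelogram'' relations inherited from each local gbit, plus the relations that couple local and PR-type extremal states) again collapse all scalars into a single one. This composite step is the main obstacle, because the list of extremal states grows rapidly and one must exhibit enough affine relations to fix a single coefficient $\mu^{(i)}$ for every $\tA_i$; I would handle it by induction on the number of parties, reducing the relations in an $n$-partite box-world to those of its bipartite marginals together with the no-signalling glueing conditions. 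Once atomicity of $\tI_\rA$ is established for all systems, Theorem~\ref{thm:niwd} gives the conclusion. Alternatively, one may bypass the composite step by exhibiting, for each system, an atomic reversible transformation (the identity itself suffices once atomicity at the single-system level is established and is preserved under parallel composition of reversible maps) and invoking Proposition~\ref{prop:niwd-rev}.
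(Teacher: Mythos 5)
Your single-gbit argument is correct and is a nice self-contained replacement for the paper's appeal to the literature: extremality of the four vertices forces each $\tA_i$ to act as a scalar on each pure state, and the parallelogram relation $\K{\omega_{00}}+\K{\omega_{11}}=\K{\omega_{01}}+\K{\omega_{10}}$ collapses the four scalars into one. Note, though, that even here the final step $\tA_i=\mu^{(i)}\tI_\rA$ --- equality as transformations, i.e.\ on all dilations, not just on local states --- needs Proposition~\ref{cor:local-characterization}, hence local discriminability of box-world; in a general OPT agreement on $\Stset(\rA)$ does not determine a transformation.

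The genuine gap is the composite case, and it affects both of your routes. In the primary route you would have to classify the affine relations among all extremal points of the $n$-partite no-signalling polytope and show that they force a single scalar; you do not carry this out, and the proposed induction on the number of parties is a substantial piece of polytope combinatorics that is not automatic. In the fallback route, the claim that atomicity ``is preserved under parallel composition'' is precisely the nontrivial point: in a general OPT the parallel composition of atomic transformations need \emph{not} be atomic (the paper explicitly remarks on this failure mode after Corollary~\ref{cor:struct}), and the preservation holds here only because box-world satisfies local discriminability, via Proposition~\ref{prop:atomic-parallel}. Since you never invoke local discriminability, the fallback as written begs the question at its key step. The paper's own proof is essentially your fallback with this hole filled: it takes an atomic reversible transformation on each elementary system, forms the product $U_1\otimes\cdots\otimes U_N$ (using the known generation of multipartite reversible maps by local ones and permutations), and concludes atomicity of the product from local discriminability together with Proposition~\ref{prop:atomic-parallel}, then applies Proposition~\ref{prop:niwd-rev}. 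Once you add the explicit appeal to local tomography of box-world, your argument closes.
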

\begin{proof}
  This can be proved in several ways. For example we show that any
  system of the theory allows for a reversible atomic transformation
  and then use Proposition~\ref{prop:niwd-rev}. The fact that any
  system has a reversible atomic transformation follows from the
  following three points. I) The reversible transformations of the
  elementary system $\rA$ of the theory (the convex set of normalized
  states of $\rA$ is represented by a square, and the set of
  reversible transformations of $\rA$ coincides with the set of
  symmetries of a square, the dihedral group of order eight $D_8$,
  containing four rotations and four reflections) are
  atomic~\cite{d2010testing}. II) From
  Refs.~\cite{PhysRevLett.104.080402,Al_Safi_2014} we know that the
  set of reversible maps of the $N$-partite system $\rA^{\otimes N}$
  is generated by local reversible operations plus permutations of the
  systems. Accordingly, the system $\rA^{\otimes N}$ allows for a
  multipartite reversible transformation $U_1\otimes U_2\otimes U_N$
  made of local reversible transformations $U_i$, $i=1,\ldots N$. III)
  Since PR-boxes satisfy local discriminability, the chosen
  multipartite transformation is atomic due to the atomicity of
  parallel composition (see Proposition~\ref{prop:atomic-parallel}).
\end{proof}

As a corollary one has that also PR-boxes with minimal tensor product
satisfy no-information without disturbance.

\begin{corollary}\label{prop:prboxes-minimal-tensor-product}
  The PR-boxes theory with minimal tensor product satisfies
  no-information without disturbance.
\end{corollary}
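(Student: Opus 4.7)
The plan is to reduce the corollary to Proposition~\ref{prop:niwd-rev}, exactly as in the proof of Proposition~\ref{PR:prboxes}, by exhibiting on every system of the minimal-tensor-product PR-boxes theory an atomic transformation that is left- or right-reversible. Since the minimal tensor product coincides with the one of the full PR-boxes theory on the elementary system $\rA$, point~I) of the previous proof carries over unchanged: the symmetry group $D_{8}$ of the square provides atomic reversible transformations of $\rA$.

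For multipartite systems $\rA^{\otimes N}$, the first step is to notice that every local reversible $U_{1}\otimes U_{2}\otimes\cdots\otimes U_{N}$ with $U_{i}$ a symmetry of the square is automatically a reversible transformation of $\rA^{\otimes N}$ in the minimal tensor product theory, because it sends product states to product states, and hence convex combinations of product states to convex combinations of product states, with inverse of the same form. This is enough: we need only one atomic reversible on each system, not the whole group of reversibles.

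The next step is to show that such local reversibles are atomic. The minimal tensor product is, in particular, local-tomographic: any two distinct states can be discriminated by product effects, because the state space by construction sits inside the tensor product $\Stset_{\mathbb{R}}(\rA)^{\otimes N}$ of the local state spaces. Hence Proposition~\ref{prop:atomic-parallel} (atomicity of parallel composition under local discriminability) applies, and the tensor product $U_{1}\otimes\cdots\otimes U_{N}$ of atomic reversibles of $\rA$ is atomic in $\Trnset(\rA^{\otimes N})$.

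The conclusion is then immediate: every system of the minimal-tensor-product PR-boxes theory admits an atomic reversible transformation, so by Proposition~\ref{prop:niwd-rev} the theory satisfies no-information without disturbance. The only delicate point I anticipate is justifying that the minimal tensor product genuinely satisfies local discriminability in the sense of Definition~\ref{def:local-discriminability}, and that the local reversibles remain admissible after restricting to the minimal composition; both facts, however, are essentially built into the definition of the minimal tensor product, so no serious obstacle arises.
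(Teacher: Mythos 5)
Your proposal is correct and follows essentially the same route as the paper, which likewise reduces the corollary to Proposition~\ref{prop:niwd-rev} by exhibiting an atomic reversible transformation on every system, exactly as in the proof of Proposition~\ref{PR:prboxes}. Your additional observations---that a single local product of square symmetries suffices, that it remains reversible under the minimal tensor product, and that the minimal tensor product is locally tomographic so Proposition~\ref{prop:atomic-parallel} applies---are precisely the details the paper leaves implicit when it says the proof ``is as in Proposition~\ref{PR:prboxes}''.
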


\begin{proof}
  Consider PR-boxes theory with minimal tensor
  product~\cite{BARNUM20113}. We remind that in a probabilistic
  framework, the minimal tensor product of state sets (or effect sets)
  of two systems is the operation that yields the set of states (or
  effects) of the composite system as containing only product states
  (or product effects) and their probabilistic mixtures. The PR-boxes
  theory with minimal tensor product has the same elementary system as
  the PR-boxes theory (with the convex set of normalized states
  represented by a square), but with composite systems constrained by
  the minimal tensor product for both states and effects. The proof
  that this probabilistic theory satisfies no-information without
  disturbance is as in Proposition~\ref{PR:prboxes} (one proves that
  every system has a reversible atomic transformation and
  no-information without disturbance follows from
  Proposition~\ref{prop:niwd-rev}).
\end{proof}

We can now establish the independence between no-information without
disturbance and purification (see also Figure~\ref{fig:sets}).

\begin{proposition}\label{prop:purification-niwd}
  No-information without disturbance and states or effects
  purification are independent.
\end{proposition}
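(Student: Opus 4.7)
To prove independence, I will exhibit counterexamples in both directions: an OPT satisfying no-information without disturbance (NIWD) but failing both states and effects purification, and an OPT satisfying purification while violating NIWD.

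For the first direction, I would invoke the PR-boxes theory treated just above. By Proposition~\ref{PR:prboxes} (and Corollary~\ref{prop:prboxes-minimal-tensor-product} in the minimal tensor product variant), these theories satisfy NIWD. To show that states purification fails, I would argue directly from the geometry of the elementary system, whose normalized state space is a square: the center of the square is a genuinely mixed deterministic state, and a straightforward analysis of bipartite PR-box states shows that no pure joint state on any composite system has the central state as one of its marginals, because the extremal joint states project only onto vertices of the local squares. An analogous argument rules out effects purification, using the dual role of states and effects in the PR-boxes framework. Hence PR-boxes witness that NIWD does not entail either form of purification.

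For the converse direction, I need a theory that dilates every state (or every effect) by a pure one but whose identity transformation admits a nontrivial refinement, so that NIWD fails by Theorem~\ref{thm:niwd}. Because the forthcoming Proposition~\ref{prop:purification-niwd-2} tells us that convex OPTs with states purification must satisfy NIWD, this counterexample must be non-convex. The plan is to modify a classical or classical-like OPT by enlarging the transformation set while restricting the state set to a non-convex collection that still admits purifying pure states in a composite system. Concretely, I would take a theory whose basic system has two perfectly discriminable pure states together with a non-trivial refinement of the identity (for instance the pair of perfectly distinguishing projectors), and verify that each state is a marginal of a pure bipartite state obtained by correlating it with a copy of itself, while the presence of the non-trivial identity refinement witnesses the failure of NIWD via Theorem~\ref{thm:niwd}.

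The main obstacle is constructing the second counterexample in a way that is fully consistent with the OPT axioms: one must specify systems, states, effects, and transformations satisfying sequential and parallel composition rules and yielding a valid probability rule on closed circuits, while simultaneously granting pure dilations to every state (or atomic dilations to every effect) and admitting a non-trivial refinement of the identity. The delicate point is to ensure that enlarging the transformations to break atomicity of the identity does not force, through composition with dilations, the state set to become convex and thus trigger the implication purification $+$ convexity $\Rightarrow$ NIWD. Once such a theory is described, the verification itself reduces to checking atomicity by hand and exhibiting the required pure dilations, and the proposition follows.
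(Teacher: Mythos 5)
Your overall strategy (two counterexamples, one in each direction) matches the paper's, but both halves have problems. In the first direction, your geometric claim about PR-boxes is false for the full theory: the PR box itself is an \emph{extremal} bipartite no-signaling state whose marginals are the maximally mixed states (the centers of the local squares) --- that is precisely its defining feature --- so it is simply not true that ``the extremal joint states project only onto vertices of the local squares.'' The argument that mixed states admit no pure dilation only works in the \emph{minimal tensor product} variant, where every pure bipartite state is a product of local pure states and hence has pure marginals; this is exactly the counterexample the paper uses (Corollary~\ref{prop:prboxes-minimal-tensor-product}). As written, your proof establishes nothing about the standard PR-boxes theory and you should restrict the claim to the minimal-tensor-product model.

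In the second direction you correctly identify that the counterexample must be non-convex (by Proposition~\ref{prop:purification-niwd-2}), but you never actually produce a theory: you describe a construction plan and then list the consistency of that construction with the OPT axioms as ``the main obstacle,'' which leaves the proof incomplete. The paper closes this gap with an off-the-shelf example, \emph{deterministic classical theory}, in which all outcome probabilities are $0$ or $1$: there every state is pure and every effect is atomic, so states and effects purification hold trivially (each state is its own pure dilation, with no need to correlate it with a copy of itself), while the identity on a classical bit refines nontrivially into the two distinguishing projectors, so full information is extractable without disturbance and NIWD fails. Your sketch is pointing at essentially this theory, but without naming and verifying it the independence claim is not yet proved.
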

\begin{proof}
We prove that:
\begin{enumerate}
\item\label{itm:pur1} Purification $\nRightarrow$ No-information
  without disturbance.
\item\label{itm:pur2} No-information without disturbance
  $\nRightarrow$ Purification
\end{enumerate}

\ref{itm:pur1}. Consider deterministic classical theory. This is
classical theory where the probabilities of outcomes in any test are
either 0 or 1 (see also Ref.~\cite{PhysRevA.81.062348}). One can
easily see that in this probabilistic theory all states are pure and
all effects are atomic. As a consequence the theory satisfies both
states and effects purification according to
Definitions~\ref{def:states-purification}
and~\ref{def:effects-purification}. On the other hand, in this theory
all information can be extracted without disturbance.

\ref{itm:pur2}. Consider PR-boxes theory with minimal tensor
product. On one hand this probabilistic theory satisfies
no-information without disturbance (see
Corollary~\ref{prop:prboxes-minimal-tensor-product}). On the other
hand, one can check that the theory satisfies neither states
purification, nor effects purification. First notice that, due to the
minimal tensor product prescription, multipartite pure states and
atomic effects are tensor products of local pure states and local
atomic effects, respectively. Therefore, given a multipartite pure
state (atomic effect) all its marginals are also pure states (atomic
effects). It follows that a state (effect) that is not pure (atomic)
cannot admit of any pure (atomic) dilation. Since the elementary system
of PR-boxes with minimal tensor product includes both non pure states
and non atomic effects, the theory does not satisfy states or effects
purification.
\end{proof}

\begin{figure}[t!]
\centering 
\includegraphics [width=\columnwidth]{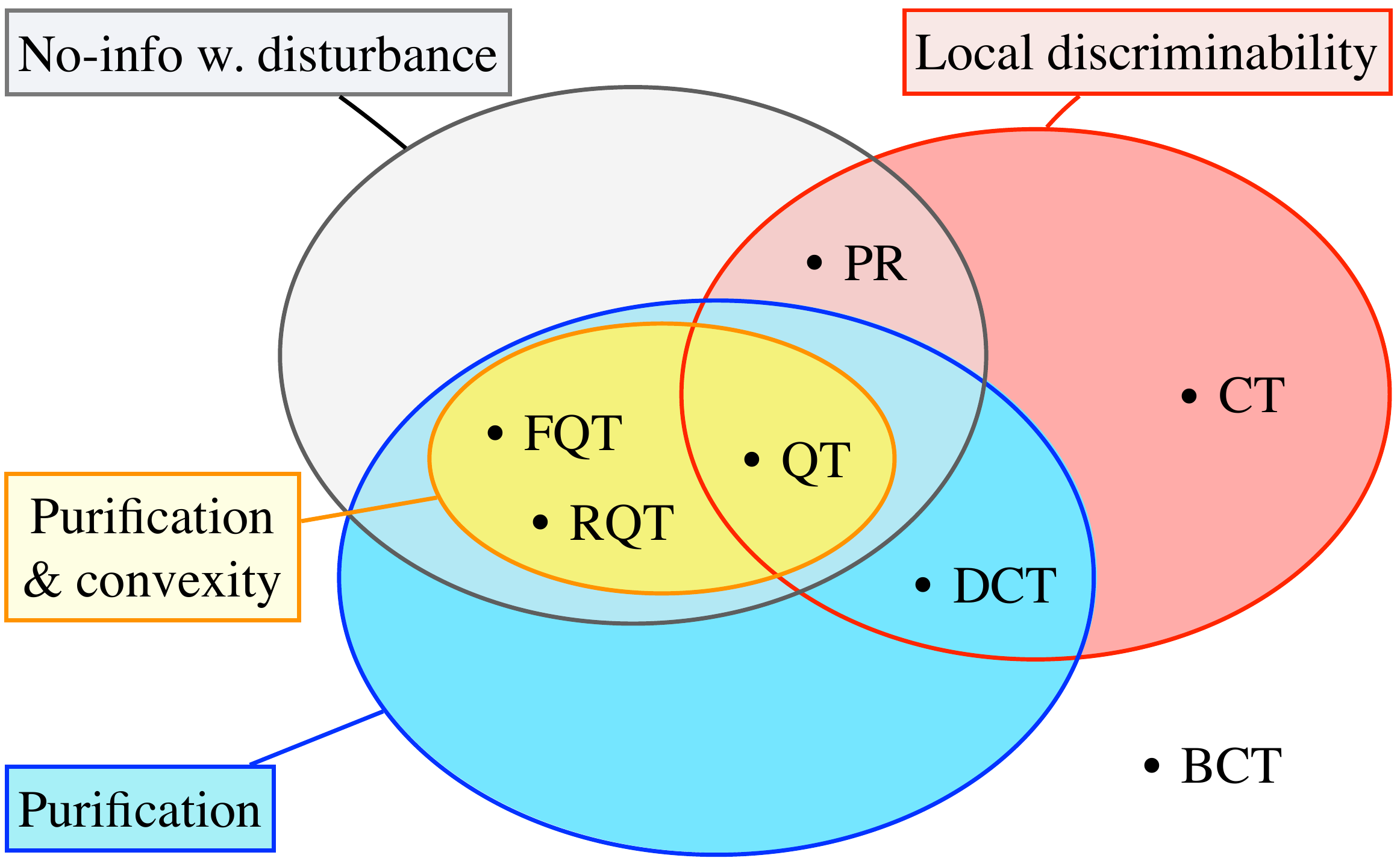}
\caption{Comparing OPTs that satisfy no-information without
  disturbance (grey set), local discriminability (red set) and states
  or effects purification (blue set). Quantum theory (QT) lies at the
  intersection of the three sets. As proved in
  Proposition~\ref{prop:purification-niwd} no-information without
  disturbance and purification are independent features. An example of
  OPT that satisfies no-information without disturbance but violates
  purification is the PR-boxes theory with minimal tensor product
  (PR). Moreover, PR-boxes satisfy local discriminability, providing a
  non-trivial intersection between local discriminability and
  no-information without disturbance in the absence of
  purification. On the other hand determinisctic classical theory
  (DCT) is an example of OPT that satisfies purification but violates
  no-information without disturbance. As proved in
  Proposition~\ref{prop:purification-niwd-2} the set of convex OPTs
  with purification is a proper subset of OPTs with no-information
  without disturbance. We observe that no-information without
  disturbance is also independent of local discriminability, as proved
  in Proposition~\ref{prop:local-disc-niwd}. Indeed classical theory
  (CT) satisfies only local discriminability while Fermionic quantum
  theory (FQT) and real quantum theory (RQT) satisfy only
  no-information without disturbance. Finally, it has been shown in
  Ref.~\cite{PhysRevA.102.052216} that there exist OPTs without local
  discriminability, that have all systems classical, thus retaining
  the possibility of extracting all the information without
  disturbance. An example is the {\em bilocal classical theory} (BCT)
  of the same Ref.~\cite{PhysRevA.102.052216}, which satisfies 2-local
  discriminability (see Definiton~\ref{def:ndisc}).}\label{fig:sets}
\end{figure}

While purification is not enough to imply no-information without
disturbance, in the next proposition we show that convex
OPTs~\footnote{Actually one might relax the convexity hypothesis with
  the following weaker condition: {\em If two states belong to a
    jointly perfectly discriminable set, there exists an element of
    the set that can be convexly mixed with both.} In particular, this
  is the case if some convex combination of the two states is a state
  of the theory.} that satisfy states or effect purification are a
subset of the OPTs with no-information without disturbance (actually
they are a proper subset, see also Fig.~\ref{fig:sets}). This provides
another useful sufficient condition for no-information without
disturbance (see also Figure~\ref{fig:sets}).

\begin{proposition}\label{prop:purification-niwd-2}
  A convex OPT with states purification or effects purification
  satisfies no-information without disturbance.
\end{proposition}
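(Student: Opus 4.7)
The plan is to reduce the statement to Theorem~\ref{thm:niwd} and then show that in a convex OPT with states purification the identity $\tI_\rA$ of every system $\rA$ is atomic. So I fix a refinement $\tI_\rA=\sum_{i\in\rX}\tA_i$ and aim to prove $\tA_i=p_i\tI_\rA$ for probabilities $p_i$ with $\sum_ip_i=1$; by the characterization~\eqref{eq:equaltransf}, it suffices to establish $\tA_i\K{\Psi}_{\rA\rC}=p_i\K{\Psi}_{\rA\rC}$ for every system $\rC$ and every state $\Psi\in\Stset(\rA\rC)$.

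The starting observation is that if $\Phi\in\Stset(\rA\rC)$ is pure, then $\K{\Phi}=\sum_i\tA_i\K{\Phi}$ is a refinement of the atomic state $\Phi$, so each term must be proportional to $\K{\Phi}$: there exist $p_i(\Phi)\in[0,1]$ with $\sum_ip_i(\Phi)=1$ such that $\tA_i\K{\Phi}=p_i(\Phi)\K{\Phi}$. I promote this to an arbitrary state $\Psi\in\Stset(\rA\rC')$ using states purification: pick a pure purification $\Phi\in\Stset(\rA\rC'\rD)$ of $\Psi$ and apply the deterministic effect on $\rD$ that returns $\Psi$ as marginal, yielding $\tA_i\K{\Psi}=p_i(\Phi)\K{\Psi}$. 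For $\Psi\neq 0$ the scalar is intrinsic to $\Psi$ (any two purifications of the same $\Psi$ would produce the same coefficient, since $\K{\Psi}\neq 0$), so I may denote it $p_i(\Psi)$.

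The crucial step is showing $p_i(\Psi)$ is independent of the non-zero state $\Psi$. For two non-zero states $\Psi_1,\Psi_2\in\Stset(\rA\rC)$ on a common composite system, convexity provides the state $\Psi=\tfrac12(\Psi_1+\Psi_2)$; applying $\tA_i$ to $\K{\Psi}=\tfrac12(\K{\Psi_1}+\K{\Psi_2})$ and using the results above gives
\begin{equation*}
p_i(\Psi)\bigl(\K{\Psi_1}+\K{\Psi_2}\bigr)=p_i(\Psi_1)\K{\Psi_1}+p_i(\Psi_2)\K{\Psi_2}.
\end{equation*}
If $\K{\Psi_1}$ and $\K{\Psi_2}$ are linearly independent this forces $p_i(\Psi_1)=p_i(\Psi_2)=p_i(\Psi)$; if they are linearly dependent they must be proportional as non-zero states and the equality $p_i(\Psi_1)=p_i(\Psi_2)$ follows immediately from linearity of $\tA_i$. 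To compare $\Psi_1\in\Stset(\rA\rC_1)$ with $\Psi_2\in\Stset(\rA\rC_2)$ on different composite systems, I tensor with deterministic states on the complementary ancillas so that both are embedded in $\Stset(\rA\rC_1\rC_2)$; the tensor product structure ensures that $p_i$ is preserved under such embeddings, so the previous argument applies and yields a single constant $p_i$. Together with $\tA_i\K{0}=0$ this gives $\tA_i=p_i\tI_\rA$, proving atomicity of $\tI_\rA$.

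The effects-purification case follows mutatis mutandis by the state-effect duality of Remark~\ref{rem:states-effects}: from $\tI_\rA=\sum_i\tA_i$ and an atomic effect $a$ on $\rA\rC$ one infers $\B{a}\tA_i=p_i(a)\B{a}$, extends to every effect via effects purification, and uses convexity on the effect cone to force $p_i(a)$ to be a system-independent constant. I expect the main technical obstacle to be the rigorous comparison of the scalars $p_i$ across distinct ancillary systems---handled here by tensoring with deterministic states and exploiting the product structure of $\tA_i$ on such tensor states---together with the bookkeeping of the degenerate linearly-dependent case, which is disposed of by linearity alone.
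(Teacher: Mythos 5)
Your proof is correct, but it takes a genuinely different route from the paper's. The paper argues by contradiction: it passes to an atomic non-redundant refinement of $\tI_\rA$, invokes Theorem~\ref{Thm:PROJ} and the block decomposition of Corollary~\ref{cor:struct} to conclude that the marginals of $\tA_i\K{\Psi}$ and $\tA_j\K{\Psi}$ (for $\Psi$ a purification of a mixed state with nonzero weight in every block) are perfectly discriminable, and derives a contradiction with the proportionality $\tA_i\K{\Psi}\propto\K{\Psi}$ forced by purity. You instead prove atomicity of $\tI_\rA$ directly and for an \emph{arbitrary} refinement: purity gives $\tA_i\K{\Phi}=p_i(\Phi)\K{\Phi}$ on pure states, states purification transfers this eigenvalue relation to every state, and convexity---via the linear-independence dichotomy---together with the tensoring-with-deterministic-states device of Appendix~\ref{App:pulci} collapses the eigenvalue to a single constant across all states and all ancillas. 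Your version is more self-contained: it needs neither Theorem~\ref{Thm:PROJ} nor the notion of perfect discriminability, and it does not require first extracting an atomic non-redundant refinement (a step the paper leaves implicit). The paper's version, on the other hand, makes visible exactly where the classical block structure obstructs purification, which is the conceptual point of Section~\ref{sec:V}. Both proofs deploy convexity at the analogous juncture---you to mix two arbitrary states and compare their eigenvalues, the paper to manufacture a state supported on every block---and both treat the effects-purification case by the duality of Remark~\ref{rem:states-effects} at the same (admittedly informal) level of detail.
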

\begin{proof} We consider the case of states purification (see
  Definition~\ref{def:states-purification}), with the proof for
  effects purification (see Definition~\ref{def:effects-purification})
  following by analogy according to Remark~\ref{rem:states-effects}.

  Given a convex OPT with states purification suppose that it violates
  no-information without disturbance, namely there exists a system
  $\rA$ such that $\tI_\rA$ is not atomic. Then let
  $\tI_\rA=\sum_i\tA_i$, for some atomic non redundant test
  $\{\tA_i\}_{i\in\rX}\subseteq\Trnset(\rA)$.  Let us consider a mixed
  state $\Stset(\rA)\ni\K{\rho}=\sum_{i\in\rX}p_i\K{\sigma_i}$ with
  $p_i\K{\sigma_i}:=\tA_i\K{\rho}$, and $\{p_i\}_{i\in\rX}$ a
  probability distribution with $p_i>0$ $\forall i$. Then by
  Theorem~\ref{Thm:PROJ} we have
  $\tA_i\K{\sigma_j}=\delta_{ij}\K{\sigma_i}$. Since the theory allows
  for purification, let $\Psi\in\Stset{(\rA\rB)}$ be a purification of
  $\rho$ for deterministic effect $e\in\Cntset(\rB)$. Now, one one
  hand since the test $\{\tA_i\}_{i\in\rX}$ refines the identity, it
  is $\K{\Psi}_{\rA\rB}=\sum_{i\in\rX}\tA_i \K{\Psi}_{\rA\rB}$, and
  being $\Psi$ pure it must be
  $\tA_i \K{\Psi}_{\rA\rB}=q_i \K{\Psi}_{\rA\rB}$, with
  $\{q_i\}_{i\in\rX}$ a probability distribution. On the other hand,
  for every $i\neq j$ the marginals with deterministic effect
  $e\in\Cntset(\rB)$ of $\tA_i \K{\Psi}_{\rA\rB}$ and
  $\tA_j \K{\Psi}_{\rA\rB}$ are perfectly discriminable. But this
  contradicts the fact that $\tA_i \K{\Psi}_{\rA\rB}$ and
  $\tA_j \K{\Psi}_{\rA\rB}$ are both proportional to $\Psi$.
\end{proof}

\subsubsection*{No-information without disturbance vs. local
  discriminability}

Turning to the case of local discriminability (see Definition
\ref{def:local-discriminability}), we now show that it is independent
of no-information without disturbance (see also
Figure~\ref{fig:sets}).

\begin{proposition}\label{prop:local-disc-niwd}
  No-information without disturbance and local discriminability are
  independent.
\end{proposition}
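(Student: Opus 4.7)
The plan is to establish independence by exhibiting two explicit OPTs, one for each direction of non-implication, and in each case invoking the equivalence conditions for no-information without disturbance already proved in Theorem~\ref{thm:niwd} and Proposition~\ref{prop:niwd-rev}. This reduces the problem to a structural check on the atomicity of the identity (or of some reversible transformation) together with a direct verification of local discriminability, both of which are standard properties of the theories I will use.

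\emph{Local discriminability does not imply no-information without disturbance.} I would use classical information theory (CT), i.e.~the OPT whose systems have state sets that are simplices, with transformations being stochastic maps. For the elementary classical system with pure states $\{\K{i}\}_{i=1,\dots,d}$, the identity admits the refinement $\tI=\sum_{i}\tA_{i}$ with $\tA_{i}=\K{i}\B{i}$, and these $\tA_{i}$ are clearly non-trivial (not proportional to $\tI_\rA$). Hence the identity is not atomic, so by Theorem~\ref{thm:niwd} the theory fails no-information without disturbance. On the other hand, CT satisfies local discriminability since joint probability distributions are uniquely determined by their local marginals obtained from the deterministic pointwise effects.

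\emph{No-information without disturbance does not imply local discriminability.} Here I would take either Fermionic quantum theory (FQT) or real quantum theory (RQT); the figure caption already points to both. Both fail local discriminability: FQT is only $2$-local tomographic because of the parity superselection rule~\cite{d2014feynman,D_Ariano_2014}, and RQT requires non-local observables because the space of Hermitian real matrices on $\Reals^{d_\rA}\otimes\Reals^{d_\rB}$ is strictly larger than the tensor product of the local spaces~\cite{hardy2012limited}. To conclude no-information without disturbance I would use Proposition~\ref{prop:niwd-rev}: it suffices to exhibit, for every system $\rA$ of the theory, at least one atomic reversible transformation. For FQT any parity-preserving unitary on the Fock space yields such a transformation, and for RQT any real orthogonal transformation does.

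\emph{Main obstacle.} The only non-routine step is checking atomicity of the chosen reversible transformation in FQT. Unitary conjugation is atomic in standard quantum theory because every refinement by completely positive maps of a rank-one Kraus operator is proportional to it; the analogous argument in FQT has to respect the parity superselection rule, but since refinements of a parity-preserving unitary are still described by a Kraus sum whose only free component is proportional to the unitary itself, the conclusion carries over. For RQT the same Kraus-rank argument applies directly in the real setting. Once atomicity is in place, Proposition~\ref{prop:niwd-rev} closes the argument, and the two counterexamples together prove the proposition.
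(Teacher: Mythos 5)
Your proof is correct, and it uses the same two counterexamples as the paper (classical theory for one direction; Fermionic and real quantum theory for the other), but for the second direction you take a genuinely different route. The paper establishes no-information without disturbance for FQT and RQT by observing that both are \emph{convex theories with states purification} and then invoking Proposition~\ref{prop:purification-niwd-2}; you instead exhibit an atomic reversible transformation for every system and invoke Proposition~\ref{prop:niwd-rev}. Your route is more hands-on and self-contained, but it forces you to confront exactly the step you flag as non-routine: verifying atomicity of a unitary (or orthogonal) conjugation in a theory \emph{without} local discriminability, where a refinement is a sum of OPT transformations defined by their action on all dilations, and where Proposition~\ref{prop:atomic-parallel} is not available. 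The rank-one Choi/Radon--Nikodym argument you sketch does go through (any completely positive map dominated by a conjugation with a rank-one Choi operator is proportional to it, and this pins down the transformation on every dilation as well), so the gap is closable, but it is real work that the paper's route sidesteps entirely by piggybacking on the general structural result that convexity plus purification implies no-information without disturbance. For the first direction your argument is essentially the paper's, just with the refinement $\tI=\sum_i\K{i}\B{i}$ made explicit via Theorem~\ref{thm:niwd}, which is a welcome bit of added detail.
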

\begin{proof} We prove that:
\begin{enumerate}
\item\label{itm:ld1} Local discriminability $\nRightarrow$ No-information without
  disturbance,

\item\label{itm:ld2} No-information without disturbance $\nRightarrow$
  Local discriminability.
\end{enumerate}

\ref{itm:ld1}. Classical theory satisfies local discriminability but
violates no-information without disturbance, since in this theory all
information can be extracted without disturbance.

\ref{itm:ld2}. Fermionic~\cite{Bravyi2002210,D_Ariano_2014,d2014feynman} and real quantum
theory~\cite{hardy2012limited,d2014feynman,D_Ariano_2014} both
violate local discriminability as proved in
Refs.~\cite{hardy2012limited,d2014feynman,D_Ariano_2014} where it is
also shown that they are both 2-local theories according to
Definition~\ref{def:ndisc}. On the other hand Fermionic and real quantum
theories are convex theories with states purification and, due to
Proposition~\ref{prop:purification-niwd-2}, both satisfy no-information
without disturbance.
\end{proof}

Finally, we observe that as a consequence of
Corollary~\ref{thm:class}, and subsequent
Remark~\ref{rem:classical-systems}, the classical theory of
information is the only theory with local discriminability in which
all the information can be extracted without disturbance. However, in
the absence of local discriminability, it is still possible to have
other theories where all the information can be extracted without
disturbance. This has been proved in Ref.~\cite{PhysRevA.102.052216}
where the authors describe an OPT whose systems of any dimension are
classical (and then violate no-information without disturbance), but
with a parallel composition that differs from the usual classical one,
leading to a violation of local discriminability, more precisely to a
2-local theory according to Definition~\ref{def:ndisc}. This theory is
interesting because it provides an example of OPT that violates
simultaneously no-information without disturbance, local
discriminability and purification (see bilocal classical theory (BCT)
in Figure~\ref{fig:sets}).

\section{Conclusions}\label{sec:VII}
We have analysed the interplay between information and disturbance for
a general operational probabilistic theory, considering the effect of
measurements also on correlations with the environment, differently
from the traditional approach focused only on the measured
system. Indeed, the two resulting notions of disturbance coincide only
in special cases, such as quantum theory, as well as every
theory that satisfies local discriminability. Our approach is
universal for any OPT, including theories without causality,
purification or convexity. In this setting we proved that the
atomicity of the identity transformation is an equivalent condition
for no-information without disturbance.

We have characterized the structure of theories where the identity is
not atomic, showing that in this case the information that can be
extracted without disturbance is ``classical'', in the sense that it
is sharable and repeatable. On the other hand, we have established
that every OPT entails information whose extraction requires
disturbance, with the only exception of theories with all systems
classical.

While no-information without disturbance is a consequence of convexity
along with purification (purification of states or of effects), we
proved that purification and no-information without disturbance are
independent.  Similarly, we have shown that no-information without
disturbance and local discriminability are independent properties.

Our results are expected to have immediate applicability to secure
key-distribution. Indeed, a physical theory including a system (or
even just a set of states of a system) that satisfies no-information
without disturbance can guarantee a private and reliable channel for
distributing messages. The idea of studying secure key-distribution in
a framework more general than the classical and the quantum ones has
been proposed in Refs.~\cite{barrett2007information,BARNUM20113}.  In
Ref.~\cite{barrett2007information} it has been conjectured that in
every theory that is not classical secure key-distribution is
possible. The present generalisation of no-information without
disturbance to arbitrary OPTs is a first step in proving such a conjecture.

\acknowledgments This publication was made possible thanks to the
financial support of Elvia and Federico Faggin Foundation.

 \bibliography{bibliography-doi-nourl}

\onecolumn\newpage
\appendix

\section{Transformations induced by events}\label{sec:transformations}
In the operational framework any event $\tA$ induces a map
between states. Consider for example the event
\begin{equation*}
  \Qcircuit @C=1em @R=.7em @! R { & \poloFantasmaCn \rA \qw &
    \gate{\tA} & \qw \poloFantasmaCn \rB &\qw}\ .
\end{equation*}
For every choice of ancillary system $\rC$, and for every state
$\Psi\in\Stset{(\rA\rC)}$, the event $\tA$ maps the state $\Psi$ to the
state given by the following circuit
\begin{equation*}
\begin{aligned}
  \Qcircuit @C=1em @R=.7em @! R {\multiprepareC{1}{\Psi}& \qw
    \poloFantasmaCn \rA &  \gate{\tA} & \qw \poloFantasmaCn \rB &\qw \\
    \pureghost\Psi & \qw \poloFantasmaCn \rC & \qw}
\end{aligned}\ .
\end{equation*}

Accordingly, while states and effects are linear functionals over each
other, we can always look at an event as a map between states
\begin{align}
  \tA:\K{\Psi}_{\rA\rC}\in\Stset{(\rA\rC)}\mapsto \tA\K{\Psi}_{\rA\rC}\in\Stset{(\rB\rC)},
\end{align}
(and similarly as a map between effects, from $\Cntset{(\rA\rC)}$ to
$\Cntset{(\rB\rC)}$). The map above can be linearly extended to a map
from $\Stset_\Reals{(\rA\rC)}$ to $\Stset_\Reals{(\rB\rC)}$ (we denote
the extended map with the same symbol $\tA$) and this extension is
unique. Indeed a linear combination of states of $\rA\rC$ is null, say
$\sum_i c_i \K{\Psi_i}$, if and only if $\sum_ic_i\SC{a}{\Psi_i}=0$
for any $a\in\Cntset{(\rB\rC)}$. Moreover, since
$\B{b}_\rB\tA\in\Cntset{(\rA\rC)}$ for every $b\in\Cntset{(\rB)}$,
then $\forall b\in\Cntset{(\rB)}$ we have that
$0=\B{b}_\rB\tA(\sum_i c_i\K{\Psi_i}_{\rA\rB})=\sum_i c_i
\B{b}_\rB\tA\K{\Psi_i}_{\rA\rB}=\B{b}_\rB\sum_i
c_i\tA\K{\Psi_i}_{\rA\rB}$,
and we finally get $\sum_i c_i \tA\K{\Psi_i}_{\rA\rB}=0$.

\section{No-information test}\label{App:no-info test}
We show that the weak condition~\ref{weak} for no-information test is
equivalent to one in Definition~\ref{def:noinfo}.

We focus now on no-information on the input (the case of
no-information on the output following by analogy) and show that
Eqs.~\eqref{eq:no-info-states-weak} and~\eqref{eq:no-info-states} are
equivalent. We first prove that Eq.~\eqref{eq:no-info-states} implies
Eq.~\eqref{eq:no-info-states-weak}. To this end we evaluate the left
hand side of Eq.~\eqref{eq:no-info-states-weak} using
Eq.~\eqref{eq:no-info-states}, namely
\begin{equation*}
\begin{aligned}
\sum_kp(j,i,k|\test{\Psi}_\rY,\test{\tA}_\rX,\test{C}_\rZ)=
\sum_k\B{c_k}\tA_i\K{\Psi_j}\\
=\B{e}\tA_i\K{\Psi_j}=p_i(e)\SC{f}{\Psi_j}.
\end{aligned}
\end{equation*}
Now we notice that $p_i(e)$, which is a probability distribution on
the outcomes of the test $\test{\tA}_\rX$, also depends on the
deterministic effect $e=\sum_k c_k$. Therefore $p_i(e)$ is a
probability distribution that depends on both the test
$\test{\tA}_\rX$ and on the observation test $\test{C}_\rZ$ while it
does not depend on the preparation test $\test{\Psi}_\rY$, exactly as
the probability distribution $r$ on the right hand side of
Eq.~\eqref{eq:no-info-states-weak}. Finally, we notice that
$\SC{f}{\Psi_j}$ is a probability distribution on the outcomes of the
preparation test $\test{\Psi}_\rY$, and the deterministic effect $f$
can depend on both tests $\test{\tA}_\rX$ and $\test{C}_\rZ$. We then
conclude that $\SC{f}{\Psi_j}$ is a probability distribution that
generally depends on all test $\test{\Psi}_\rY$, $\test{\tA}_\rX$ and
$\test{C}_\rY$, as the probability distribution $s$ on the right hand
side of Eq.~\eqref{eq:no-info-states-weak}. Now we check the other
implication, namely that Eq.~\eqref{eq:no-info-states-weak} implies
Eq.~\eqref{eq:no-info-states}. Due to
Eq.~\eqref{eq:no-info-states-weak}, one has
\begin{equation}\label{eq:comparing}
\begin{aligned}
&\B{e}\tA_i\K{\Psi_j}=
  \sum_k\B{c_k}\tA_i\K{\Psi_j}=  \sum_kp(j,i,k|\test{\Psi}_\rY,\test{\tA}_\rX,\test{C}_\rZ)\\
  &=r(i|\test{\test{\tA}}_\rX,\test{C}_\rZ)s(j|\test{\Psi}_\rY,\test{\tA}_\rX,\test{C}_\rZ).
\end{aligned}
\end{equation}
First we notice that summing over the index $i$ in the first and last member 
of Eq.~\eqref{eq:comparing} we get
$s(j|\test{\Psi}_\rY,\test{\tA}_\rX,\test{C}_\rZ)=\SC{f}{\Psi_j}$,
where $\B{f}=\sum_i\B{e}\tA_i$ is a deterministic effect depending on tests
$\test{\tA}_\rX,\test{C}_\rZ$ (indeed
$\sum_i\tA_i$ is a deterministic transformation). Therefore one has
$\B{e}\tA_i= r(i|\test{\test{\tA}}_\rX,\test{C}_\rZ) \B{f}$. Since on
the left hand side of the last identity the only dependence on the observation test
$\test{C}_\rZ$ is through the deterministic effect $e$, one finally gets
that $r(i|\test{\test{\tA}}_\rX,\test{C}_\rZ)$ is of the form $p_i(e)$,
as in Eq.~\eqref{eq:no-info-states}.

\section{Techical observation}\label{App:pulci}
Given a transformation $\tA\in\Trnset{(\rA)}$ such that for every
system $\rB$ there exists $p_\rB$ such that
\begin{equation}\label{p}
\tA\K{\rho}_{\rA\rB}=p_\rB\K{\rho}_{\rA\rB},
\end{equation}
then actually $p_\rB\equiv p$ cannot depend on the system
$\rB$. Indeed, choosing $\rho=\tau\otimes\sigma_\rB$ with arbitrary
$\tau\in\Stset(\rA)$ and normalised $\sigma\in\Stset(\rB)$, and
discarding system $\rB$ on both sides of Eq.~\eqref{p}
$\forall\rho\in\Stset(\rA\rB)$, one obtains
$\tA\K\tau=p_\rB\K\tau$. This clearly shows that $p_\rB\equiv p$.

\section{Techical lemma}\label{App:inclusion}
\begin{lemma}\label{lem:inclusion}
  Given a state $\rho\in\Stset(\rA)$ one has
  $\Ref_{D_\rho}\subseteq D_{\Ref_\rho}$, where $\Ref_{D_\rho}$
  denotes the union of the refinements of any state in $D_\rho$. Given
  an effect $a\in\Cntset(\rA)$ one has
  $\Ref_{D_a}\subseteq D_{\Ref_a}$, where $\Ref_{D_a}$ denotes the
  union of the refinements of any effect in $D_a$.
\end{lemma}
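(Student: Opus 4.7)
My plan is to unpack the two nested set operations and use the fact that applying a deterministic effect to every element of a preparation test yields a preparation test. I treat the state case; the effect case then follows by the symmetry noted in Remark~\ref{rem:states-effects}.

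The first step is to pick an arbitrary $\Phi\in\Ref_{D_\rho}$. By definition of $\Ref_{D_\rho}=\bigcup_{\Psi\in D_\rho}\Ref_\Psi$, there exists some $\Psi\in D_\rho$ with $\Phi\prec\Psi$. Unpacking $D_\rho$, we have $\Psi\in\Stset(\rA\rB)$ for some system $\rB$, and a deterministic effect $e\in\Cntset(\rB)$ such that $(\tI_\rA\otimes\B{e}_\rB)\K{\Psi}_{\rA\rB}=\K{\rho}_\rA$. Unpacking $\Phi\prec\Psi$, there is a preparation test $\{\Phi_i\}_{i\in\rY}\subseteq\Stset(\rA\rB)$ with a subset $\rX\subseteq\rY$ such that $\Psi=\sum_{i\in\rX}\Phi_i$ and $\Phi=\Phi_{i_0}$ for some $i_0\in\rX$.

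The key move is then to post-compose the whole test $\{\Phi_i\}_{i\in\rY}$ with the deterministic effect $e$ on system $\rB$, obtaining the collection $\{(\tI_\rA\otimes\B{e}_\rB)\K{\Phi_i}_{\rA\rB}\}_{i\in\rY}$, which is itself a preparation test on $\rA$ (sequential composition of a preparation test with an event is a test). Because $e$ is linear and $\Psi=\sum_{i\in\rX}\Phi_i$, the partial sum indexed by $\rX$ equals
\begin{equation*}
\sum_{i\in\rX}(\tI_\rA\otimes\B{e}_\rB)\K{\Phi_i}_{\rA\rB}=(\tI_\rA\otimes\B{e}_\rB)\K{\Psi}_{\rA\rB}=\K{\rho}_\rA,
\end{equation*}
so $\{(\tI_\rA\otimes\B{e}_\rB)\K{\Phi_i}_{\rA\rB}\}_{i\in\rX}$ is a refinement of $\rho$. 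In particular $\sigma:=(\tI_\rA\otimes\B{e}_\rB)\K{\Phi_{i_0}}_{\rA\rB}\in\Ref_\rho$.

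Finally, the same effect $e$ witnesses that $\Phi$ is a dilation of $\sigma$, since $(\tI_\rA\otimes\B{e}_\rB)\K{\Phi}_{\rA\rB}=\K{\sigma}_\rA$. Hence $\Phi\in D_\sigma\subseteq D_{\Ref_\rho}$, which is what we wanted. There is no real obstacle here: the proof is a bookkeeping exercise that relies only on two facts already in the framework, namely that post-composition of a preparation test with a deterministic effect is a preparation test, and that linearity of effects commutes with the refinement sum. The analogue for effects is obtained by reading the diagram upside down, pre-composing the observation test refining $c\in D_a$ with a deterministic state $\omega$ on the ancillary system.
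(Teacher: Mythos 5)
Your proof is correct and follows essentially the same route as the paper's: both arguments marginalise the refinement of the dilated state with the deterministic effect $e$ and observe that the resulting collection is a refinement of $\rho$, so that the chosen element $\Phi$ is a dilation of a member of $\Ref_\rho$. Your version is slightly more explicit about why the marginalised collection is itself a preparation test, but this is a presentational difference only.
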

\begin{proof} Since the proof in the case of states and effects is
  exactly the same (see Remark~\ref{rem:states-effects}) we focus on
  the former. Consider a system $\rB$ and a state
  $\Psi\in\Stset{(\rA\rB)}$:
\begin{enumerate}
\item $\Psi\in D_{\Ref_\rho}$ iff
  $\B{e}_{\rB}\K{\Psi}_{\rA\rB}=\K{\sigma}_\rA$, with
  $\sigma\in\Ref_\rho$, and $e\in\Cntset(\rB)$ deterministic.
\item $\Psi\in \Ref_{D_{\rho}}$ iff there exists a state
  $\Omega\in\Stset{(\rA\rB)}$, with $\Omega\in D_\rho$ such that
  $\Psi\in\Ref_\Omega$, i.e.~there exists a refinement
  $\{\Psi_i\}_{i\in\rX }$ of $\Omega$ such that
  $\Psi\in \{\Psi_i\}_{i\in\rX }$.
\end{enumerate}
By hypothesis $\{\Psi_i\}_{i\in\rX }$ is a refinement of $\Omega$, namely 
\begin{equation}\label{KO}
\K{\Omega}_{\rA\rB}=\sum_{i\in\rX }\K{\Psi_i}_{\rA\rB},
\end{equation}
and $\Omega\in D_\rho$, namely $\B{e}_{\rB}\K{\Omega}_{\rA\rB}=\K{\rho}_\rA$.
Accordingly, marginalising  both sides of Eq. \eqref{KO} one has that  for every
$i\in\rX $, $\B{e}_{\rB}\K{\Psi_i}_{\rA\rB}=\K{\sigma_i}_\rA\in\Ref_\rho$. Since
$\Psi\in\{\Psi_i\}_{i\in\rX}$, this concludes the proof.
\end{proof}

\end{document}